\newtheorem{theorem}{Theorem}[section]
\newtheorem*{theorem*}{Theorem}
\newtheorem{lemma}[theorem]{Lemma}
\newtheorem{corollary}[theorem]{Corollary}
\newtheorem{remark}[theorem]{Remark}
\newtheorem{claim}[theorem]{Claim}
\newtheorem{observation}[theorem]{Observation}
\theoremstyle{definition}
\newtheorem{definition}[theorem]{Definition}
\crefname{theorem}{Theorem}{Theorems}
\crefname{lemma}{Lemma}{Lemmas}
\crefname{claim}{Claim}{Claims}
\crefname{remark}{Remark}{Remarks}
\crefname{observation}{Observation}{Observations}
\crefname{corollary}{Corollary}{Corollaries}
\crefname{appendix}{Appendix}{Appendices}
\crefname{section}{Section}{Sections}
\crefname{equation}{Eq.}{Eqs.}
\crefname{figure}{Figure}{Figures}
\crefname{table}{Table}{Tables}
\newcommand{\prb}[1]{\textup{\textsc{#1}}\xspace}
\newcommand{\volmax}{\prb{VolMax}}
\newcommand{\subdetmax}{\prb{DetMax}}
\newcommand{\gapsubdetmax}[1]{#1\prb{-Gap-DetMax}}
\newcommand{\logdetmax}{\prb{LogDetMax}}
\renewcommand{\vec}[1]{\mathbf{\bm{#1}}}
\newcommand{\mat}[1]{\mathbf{\bm{#1}}}
\newcommand{\GG}{\mathfrak{G}}
\newcommand{\ZZ}{\mathcal{Z}}
\newcommand{\bigO}{\mathcal{O}}
\newcommand{\ccc}{\lambda_{{c}}}
\newcommand{\sss}{\lambda_{{s}}}
\DeclareMathOperator{\val}{\mathop{\mathrm{val}}}
\DeclareMathOperator{\vol}{\mathop{\mathrm{vol}}}
\DeclareMathOperator{\maxvol}{\mathop{\mathrm{maxvol}}}
\DeclareMathOperator{\maxdet}{\mathop{\mathrm{maxdet}}}
\DeclareMathOperator{\proj}{\mathop{\mathrm{proj}}}
\DeclareMathOperator{\rep}{\mathop{\mathrm{rep}}}
\DeclareMathOperator{\dis}{\mathop{\mathrm{d}}}
\DeclareMathOperator{\sgn}{\mathop{\mathrm{sgn}}}
\let\det\relax\DeclareMathOperator{\det}{\mathop{\mathrm{det}}}
\newcommand{\rme}{\mathrm{e}}
\newcommand{\bbQ}{\mathbb{Q}}
\newcommand{\bbR}{\mathbb{R}}
\newcommand{\ALG}{\textsc{alg}}
\newcommand{\cP}{\textup{\textsf{P}}\xspace}
\newcommand{\RP}{\textup{\textsf{RP}}\xspace}
\newcommand{\UP}{\textsf{UP}\xspace}
\newcommand{\ModkP}[1]{\textsf{Mod}$_{#1}$\textsf{P}\xspace}
\newcommand{\NP}{\textup{\textsf{NP}}\xspace}
\newcommand{\APX}{\textup{\textsf{APX}}\xspace}
\newcommand{\shP}{$\#$\textsf{P}\xspace}
\newcommand{\acks}[1]{\section*{Acknowledgments}#1}
\title{Some Inapproximability Results of MAP Inference and Exponentiated Determinantal Point Processes}
\author{Naoto Ohsaka\thanks{\href{mailto:ohsaka\_naoto@cyberagent.co.jp}{\texttt{ohsaka\_naoto@cyberagent.co.jp}}}}
\affil{CyberAgent, Inc.}
\date{\today}
\begin{document}

\maketitle

\begin{abstract}
We study the computational complexity of two hard problems on determinantal point processes (DPPs).
One is \emph{maximum a posteriori (MAP) inference}, i.e.,
to find a principal submatrix having the maximum determinant.
The other is probabilistic inference on \emph{exponentiated DPPs (E-DPPs)},
which can sharpen or weaken the diversity preference of DPPs with an exponent parameter $p$.
We present several complexity-theoretic hardness results that
explain the difficulty
in approximating MAP inference and the normalizing constant for E-DPPs.
We first prove that unconstrained MAP inference for
an $n \times n$ matrix is \NP-hard to approximate
within a factor of $2^{\beta n}$, where $\beta = 10^{-10^{13}} $.
This result improves upon the best-known inapproximability factor of $(\frac{9}{8}-\epsilon)$,
and rules out the existence of any polynomial-factor approximation algorithm assuming \cP~$\neq$~\NP.
We then show that
log-determinant maximization is \NP-hard to approximate within a factor of $\frac{5}{4}$ for the unconstrained case and
within a factor of $1+10^{-10^{13}}$ for the size-constrained monotone case.
In particular, log-determinant maximization does not admit a polynomial-time approximation scheme unless \cP~$=$~\NP.
As a corollary of the first result, we demonstrate that
the normalizing constant for E-DPPs of any (fixed) constant exponent $p \geq \beta^{-1} = 10^{10^{13}}$
is \NP-hard to approximate within a factor of $2^{\beta pn}$,
which is in contrast to the case of $p \leq 1$ admitting a fully polynomial-time randomized approximation scheme.
\end{abstract}

\section{Introduction}
Selecting a small set of ``diverse'' items from large data is an essential task in artificial intelligence.
\emph{Determinantal point processes (DPPs)} provide a probabilistic model on a discrete set that captures the notion of diversity using the matrix determinant \citep{macchi1975coincidence,borodin2005eynard}.
Suppose we are given $n$ items (e.g., images or documents) associated with feature vectors
$\{ \bm{\phi}_i \}_{i \in [n]}$ and
an $n \times n$ Gram matrix $\mat{A}$ such that $A_{i,j} = \langle \bm{\phi}_i, \bm{\phi}_j \rangle$ for all $i,j \in [n]$.
The DPP defined by $\mat{A}$ is a distribution over the power set $2^{[n]}$ such that
the probability of drawing a subset $S \subseteq [n]$ is proportional to $\det(\mat{A}_S)$.
Since $\det(\mat{A}_S)$ is equal to the squared volume of the parallelepiped spanned by $\{\bm{\phi}_i\}_{i \in S}$,
dissimilar items are likely to appear in the selected subsets, which ensures set diversity.
DPPs exhibit fascinating properties that make them suitable for artificial intelligence and machine learning applications; e.g.,
many inference tasks are computationally tractable, including
normalization, marginalization, and sampling, and
efficient learning algorithms have been developed
(see, e.g., the survey of \citealp{kulesza2012determinantal} for details).

The present study aims at analyzing two exceptionally hard problems on DPPs
through the lens of complexity theory---MAP inference and probabilistic inference on exponentiated DPPs.

\paragraph{Unconstrained MAP Inference.}
Seeking the most diverse subset that has the highest probability, i.e., \emph{maximum a posteriori (MAP) inference},
is motivated by numerous applications, e.g.,
document summarization \citep{kulesza2011learning,chao2015large,perez2021multi},
tweet timeline generation \citep{yao2016tweet},
YouTube video recommendation \citep{wilhelm2018practical},
active learning \citep{biyik2019batch}, and
video summarization \citep{gong2014diverse,han2017faster}.
In particular, we focus on \emph{unconstrained} MAP inference,
which is equivalent to finding a principal submatrix with the maximum determinant, i.e., $\max_{S \subseteq [n]} \det(\mat{A}_S) $, and
it has been computationally challenging despite its simplicity.
Typically, the \textsc{Greedy} algorithm is used as a heuristic, whereas
the current best approximation algorithm achieves a factor of $\rme^{n}$ \citep{nikolov2015randomized}.
\citet*{kulesza2012determinantal} have shown that
unconstrained MAP inference is \NP-hard
to approximate within a factor of $(\frac{9}{8}-\epsilon)$,\footnote{
We define approximation factor $\rho$ so that $\rho \geq 1$. See \cref{sec:subdetmax} for the formal definition.}
which is the best-known lower bound.
On the other hand,
\emph{size-constrained} MAP inference (i.e., it must hold that $|S|=k$ for an input $k$) is
\NP-hard to approximate within an exponential factor of $2^{ck}$ for some $c > 0$ \citep{civril2013exponential},
which does \emph{not}, however, directly apply to the unconstrained case.
Closing the gap between the lower bound ($\approx \frac{9}{8}$) and upper bound ($= \rme^{n}$) on unconstrained MAP inference is the first question addressed in this study.

\paragraph{Log-Determinant Maximization.}
In the artificial intelligence and machine learning communities,
the performance of algorithms for MAP inference on DPPs is often evaluated in terms of the \emph{logarithm} of the determinant \citep{gillenwater2012near,han2020map}.
So, we examine the approximability of the \emph{log-determinant maximization} problem, i.e., $\max_{S \subseteq [n]} \log \det(\mat{A}_S)$.
The unconstrained case is known to admit a $2$-factor approximation algorithm \citep{buchbinder2015tight,buchbinder2018deterministic}.
If the minimum eigenvalue of $\mat{A}$ is at least $1$,
the size-constrained case (i.e., $|S|=k$) is a special case of \emph{monotone submodular maximization}, for which
the \textsc{Greedy} algorithm has an approximation factor of $\frac{\rme}{\rme-1} \approx 1.58$ \citep{han2020map,sharma2015greedy}.
On the other hand, explicit inapproximability results have not been known (to the best of our knowledge), which is the second question.
Note that such hardness results for $\log\det(\mat{A}_S)$ and $\det(\mat{A}_S)$ are ``incomparable'' in the sense that
a multiplicative approximation to one does \emph{not} imply that to the other.

\paragraph{Exponentiated DPPs.}
Given an $n \times n$ positive semi-definite matrix $\mat{A}$,
an \emph{exponentiated DPP (E-DPP)} of exponent $p > 0$
defines a distribution whose probability mass for
$S \subseteq [n]$ is proportional to $\det(\mat{A}_S)^p$ \citep{mariet2018exponentiated}.
We can sharpen or weaken the diversity preference by tuning the value of exponent parameter $p$:
increasing $p$ prefers more diverse subsets than DPPs,,
setting $p=0$ results in a uniform distribution, and E-DPPs coincide with DPPs if $p=1$.
Though computing the normalizing constant, i.e., $\sum_{S \subseteq [n]} \det(\mat{A}_S)^p$, lies at the core of efficient probabilistic inference on E-DPPs,
it seems not to have a closed-form expression.
Currently,
some hardness results on \emph{exact} computation are known
if $p$ is an even integer
\citep{gurvits2005complexity,ohsaka2020intractability}, and
the case of $p \leq 1$ admits a fully polynomial-time randomized approximation scheme (FPRAS)\footnote{An FPRAS is a randomized algorithm 
that outputs an $\rme^{\epsilon}$-approximation with probability at least $\frac{3}{4}$ and runs in polynomial time in the input size and $\epsilon^{-1}$.
} based on an approximate sampler
\citep{anari2019log,robinson2019flexible}.
The third question in this study is to find the value of $p$ such that
the normalizing constant is hard to approximate.

Our research questions in the present study can be summarized as follows:
\begin{framed}
\noindent
\begin{itemize}
\item[\textbf{Q1.}] \emph{Is unconstrained MAP inference on DPPs exponentially inapproximable?}
\item[\textbf{Q2.}] \emph{Is log-determinant maximization constant-factor inapproximable?}
\item[\textbf{Q3.}] \emph{For what value of $p$ is the normalizing constant for E-DPPs inapproximable?}
\end{itemize}
\end{framed}

\begin{table*}[tbp]
    \centering
    \caption{Computational complexity of MAP inference on DPPs, i.e., $\max_{S} \det(\mat{A}_S)$.
    Our result is $2^{\beta n}$-factor inapproximability,
    improving the known lower bound of $\approx \frac{9}{8}$ and    
     matching the best upper bound of $\rme^{n}$.}
    \label{tab:subdetmax}
    \setlength{\tabcolsep}{2pt}
    \begin{tabular}{c|c|c}
    \toprule
    \textbf{constrained?}  & \textbf{inapproximability} & \textbf{approximability} \\
    \midrule
    unconstrained &
    $2^{\beta n}$ ($\beta = 10^{-10^{13}}$) {\scriptsize(\textbf{this paper, \cref{thm:subdetmax-inapprox}})} &
    $\rme^n$ {\scriptsize\citep{nikolov2015randomized}} \\
    $S \subseteq [n]$ &
    $\frac{9}{8} - \epsilon$ {\scriptsize\citep{kulesza2012determinantal}} &
    $n!^2$ {\scriptsize\citep{civril2009selecting}} \\
    \midrule
    
    size-constrained &
    $2^{ck}$ {\scriptsize \citep{koutis2006parameterized,civril2013exponential}} &
    $\rme^k$ {\scriptsize\citep{nikolov2015randomized}}
    \\
    $S \subseteq [n], |S|=k $ & $(2^{\frac{1}{506}} - \epsilon)^k$ {\scriptsize\citep{summa2014largest}} &
    $k!^2$ {\scriptsize\citep{civril2009selecting}}
    \\
    \bottomrule
    \end{tabular}
\end{table*}

\subsection{Our Contributions}
We answer the above questions affirmatively by presenting three complexity-theoretic hardness results.

\paragraph{{(\cref{sec:subdetmax}) Exponential Inapproximability of Unconstrained MAP Inference on DPPs.}}
Our first result is the following (cf.~\cref{tab:subdetmax}):

\begin{framed}
\noindent\textbf{\cref{thm:subdetmax-inapprox}}
(informal)\textbf{.}
\textit{Unconstrained MAP inference on DPPs for an $n \times n$ matrix is \NP-hard to approximate
within a factor of $2^{\beta n}$, where $\beta = 10^{-10^{13}} $.}
\end{framed}

This result significantly improves upon the best-known lower bound of \citet{kulesza2012determinantal}.
Though the universal constant $\beta = 10^{-10^{13}}$ is extremely small,
\cref{thm:subdetmax-inapprox} justifies why any polynomial-factor approximation algorithm for unconstrained MAP inference has not been found.
Our lower bound $2^{\beta n}$ matches the best upper bound $\rme^n$ \citep{nikolov2015randomized}, \emph{up to} a constant in the exponent.
The proof is obtained by carefully extending the proof technique of 
\citet*{civril2013exponential} to the unconstrained case.

\paragraph*{{(\cref{sec:log}) Constant-Factor Inapproximability of Log-Determinant Maximization.}}
We then obtain the following answer to our second question:

\begin{framed}
\noindent
\textbf{\cref{thm:log:inapprox,thm:log:size-inapprox}}
(informal)\textbf{.}
\textit{
Log-determinant maximization is \NP-hard to approximate within a constant factor. In particular,
\begin{itemize}
\item the unconstrained case is \NP-hard to approximate within a factor of $\frac{5}{4}$, and
\item the size-constrained case is \NP-hard to approximate within a factor of $1+10^{-10^{13}}$,
even when the minimum eigenvalue of an input matrix is at least $1$.
\end{itemize}
}
\end{framed}

The above results imply that the existing algorithms' approximation factors are tight, \emph{up to} a constant.
In particular, log-determinant maximization does not admit a polynomial-time approximation scheme (PTAS),\footnote{A PTAS is an approximation algorithm that takes a precision parameter $\epsilon>0$ and outputs
an $\rme^{\epsilon}$-approximation in polynomial time in the input size when $\epsilon$ is fixed.}
unless \cP~$=$~\NP.

\paragraph*{{(\cref{sec:e-dpp}) Exponential Inapproximability of Exponentiated DPPs.}}
Our third result is the following, which is derived by applying \cref{thm:subdetmax-inapprox} (cf.~\cref{fig:edpp}):

\begin{framed}
\noindent
\textbf{\cref{cor:e-dpp}}
(informal)\textbf{.}
\textit{
For every fixed number $p \geq \beta^{-1} = 10^{10^{13}}$,
it is \NP-hard to approximate the normalizing constant for E-DPPs of exponent $p$
for an $n \times n$ matrix within a factor of $2^{\beta pn}$.
Moreover, we cannot generate a sample from E-DPPs of exponent $p$.
}
\end{framed}

This is the first inapproximability result regarding E-DPPs of constant exponent $p$ and
gives a new negative answer to open questions posed by \citet[Section 7.2]{kulesza2012determinantal} and \citet[Section 6]{ohsaka2020intractability}.
The factor $2^{\beta p n}$ is tight up to a constant in the exponent because
$2^{\bigO(pn)}$-factor approximation is possible in polynomial time (\cref{obs:edpp-approx}).
The latter statement means that
in contrast to the case of $p \leq 1$, an efficient approximate sampler does not exist whenever $p \geq 10^{10^{13}}$. 
We stress that when applying
a $(\frac{9}{8}-\epsilon)$-factor inapproximability of \citet*{kulesza2012determinantal} instead of \cref{thm:subdetmax-inapprox},
we would be able to derive inapproximability only if $p = \Omega(n)$ (see \cref{remark:edpp}).

This article is an extended version of our conference paper presented at
the 24th International Conference on Artificial Intelligence and Statistics \citep{ohsaka2021unconstrained},
which includes the following new results on log-determinant maximization in \cref{sec:log}:
\begin{itemize}
\item We improve an inapproximability factor for the unconstrained case from $\frac{8}{7}\approx 1.143$ \citep[Remark 2.10]{ohsaka2021unconstrained} to $\frac{5}{4} = 1.25$ (\cref{thm:log:inapprox}).
\item We prove a constant-factor inapproximability for the size-constrained monotone case (\cref{thm:log:size-inapprox}).
\end{itemize}

\begin{figure}
    \centering
    \includegraphics[trim={0 12.1cm 0 1.3cm},clip,width=\textwidth]{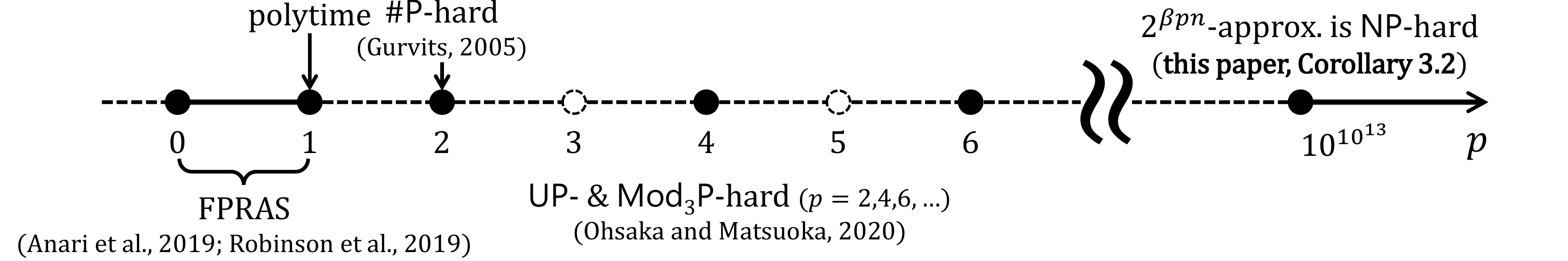}
    \caption{Computational complexity of the normalizing constant for exponentiated DPPs. Our result is $2^{\beta pn}$-factor inapproximability for $p \geq 10^{10^{13}}$.
    Tractability for $p$'s on dashed lines and circles remains open.}
    \label{fig:edpp}
\end{figure}

\subsection{Related Work}\label{subsec:related}

\paragraph{MAP Inference on DPPs.}
In the theoretical computer science community, unconstrained MAP inference on DPPs is known as 
\emph{determinant maximization} (\subdetmax for short).
The size-constrained version ($k$-\subdetmax for short), which restricts the output to size $k$ for parameter $k$, finds applications in computational geometry and discrepancy theory  \citep{nikolov2015randomized}.

On the inapproximability side,
\citet*{ko1995exact} prove that \subdetmax and $k$-\subdetmax are both \NP-hard, and
\citet*{koutis2006parameterized} shows \NP-hardness of approximating the largest $k$-simplex in a $\mathcal{V}$-polytope within a factor of $2^{ck}$ for some $c > 0$, implying that
$k$-\subdetmax is also exponentially inapproximable in $k$.
\citet*{civril2013exponential} directly prove a similar result for $k$-\subdetmax.
\citet*{summa2014largest} study the special case of $k$-\subdetmax, where $k$ is fixed to the rank of an input matrix,
which is still \NP-hard to approximate within $(2^{\frac{1}{506}}-\epsilon)^{k}$ for any $\epsilon > 0$.
\citet*{kulesza2012determinantal} use
the reduction technique developed by \citet*{civril2009selecting} to show an inapproximability factor of ($\frac{9}{8}-\epsilon$) for \subdetmax for any $\epsilon > 0$,
which is the current best lower bound.
    
On the algorithmic side,
\citet*{civril2009selecting} prove that the \textsc{Greedy} algorithm for $k$-\subdetmax achieves an approximation factor of $k!^2 = 2^{\bigO(k \log k)}$.
In their celebrated work,
\citet*{nikolov2015randomized} gives an $\rme^{k}$-approximation algorithm for $k$-\subdetmax; this is the current best approximation factor.
Invoking the algorithm of \citet{nikolov2015randomized} for all $k$ immediately yields an $\rme^{n}$-approximation for \subdetmax.
The recent work focuses on \subdetmax under complex constraints such as partition constraints \citep{nikolov2016maximizing} and matroid constraints \citep{madan2020maximizing}.

In artificial intelligence and machine learning applications,
unconstrained MAP inference is preferable if we do not (or cannot) prespecify the desired size of output, e.g.,
tweet timeline generation \citep{yao2016tweet},
object detection \citep{lee2016individualness},
change-point detection \citep{zhang2016block}, and
others \citep{gillenwater2012near,han2017faster,chen2018fast,chao2015large}.
Since the log-determinant as a set function $f(S) \triangleq \log \det(\mat{A}_S)$ for a positive semi-definite matrix $\mat{A}$ is submodular,\footnote{We say that a set function $f: 2^{[n]} \to \bbR$ is \emph{submodular} if $f(S) + f(T) \geq f(S \cup T) + f(S \cap T)$ for all $S,T \subseteq [n]$.}
the \textsc{Greedy} algorithm for monotone submodular maximization \citep{nemhauser1978analysis} is widely used,
which guarantees an $(\frac{\rme}{\rme-1})$-factor approximation (with respect to $f$) under a size constraint if 
the minimum eigenvalue of $\mat{A}$ is at least $1$; i.e.,
$f$ is a nonnegative, monotone, and submodular function \citep{han2020map,sharma2015greedy}.
Though the log-determinant $f$ is \emph{not} necessarily monotone, for which \textsc{Greedy} has, in fact, a poor approximation guarantee,
it works pretty well in practice \citep{yao2016tweet,zhang2016block}.
Several attempts have been made to scale up \textsc{Greedy} \citep{han2017faster,chen2018fast,han2020map,gartrell2020scalable},
whose naive implementation requires quartic time in $n$.
Other than \textsc{Greedy},
\citet*{gillenwater2012near} propose a gradient-based efficient algorithm with an approximation factor of $4$ (with respect to $f$).
Subsequently, 
a $2$-factor approximation algorithm for unconstrained nonmonotone submodular maximization
(including $\max_{S \subseteq [n]} f(S)$ as a special case) is developed \citep{buchbinder2015tight,buchbinder2018deterministic}.
Our results give constant-factor inapproximability for log-determinant maximization.

\paragraph{Exponentiated DPPs.}
We review known results on the computational complexity of the normalizing constant of E-DPPs, i.e.,
$\sum_{S \subseteq [n]} \det(\mat{A}_S)^p$ for $\mat{A} \in \bbQ^{n \times n}$,
briefly appearing in \citet{zou2012priors}; \citet{gillenwater2014approximate}.
The case of $p=1$ enjoys a simple closed-form expression that $\sum_{S \subseteq [n]}\det(\mat{A}_S) = \det(\mat{A}+\mat{I})$ \citep{kulesza2012determinantal}.
Such a closed-form is unknown if $p < 1$, but a Markov chain Monte Carlo algorithm mixes in polynomially many steps
thanks to their log-concavity \citep{anari2019log,robinson2019flexible}, implying an FPRAS.

On the other hand,
the case of $p > 1$ seems a little more difficult.
\citet*{kulesza2012determinantal} posed efficient computation of the normalizing constant for E-DPPs as an open question.
Surprisingly, \citet*{gurvits2005complexity,gurvits2009complexity} has proven that
computing $\sum_{S \subseteq [n]} \det(\mat{A}_S)^2$ for a P-matrix $\mat{A}$ is \shP-hard, but
it is approximable within an $\rme^n$-factor \citep*{anari2017generalization}.
\citet*{mariet2018exponentiated} derive an upper bound on the mixing time of sampling algorithms parameterized by $p$ and eigenvalues of $\mat{A}$.
\citet*{ohsaka2020intractability} derive \UP-hardness and \ModkP{3}-hardness for every positive even integer $p=2,4,6,\ldots$.
On the positive side,
\citet{ohsaka2020intractability} develop $r^{\bigO(pr)} n^{\bigO(1)}$-time algorithms for integer exponent $p$, where
$r$ is the rank or the treewidth of $\mat{A}$.
Our study strengthens previous work by giving the first inapproximability result for every (fixed) constant exponent $p \geq 10^{10^{13}}$.

\subsection{Notations and Definitions}
For a positive integer $n$, let $[n] \triangleq \{1,2,\ldots, n\}$.
For a finite set $S$ and an integer $k$, we write ${S \choose k}$ for the family of all size-$k$ subsets of $S$.
The Euclidean norm is denoted $\| \cdot \|$;
i.e., $\|\vec{v}\| \triangleq \sqrt{\sum_{i \in [d]}(v(i))^2} $ for a vector $\vec{v}$ in $\bbR^{d}$.
We use $\langle \cdot, \cdot \rangle$
for the standard inner product; i.e.,
$ \langle \vec{v}, \vec{w} \rangle \triangleq \sum_{i \in [d]} v(i) \cdot w(i) $ for two vectors $\vec{v}, \vec{w}$ in $\bbR^{d}$.
For an $n \times n$ matrix $\mat{A}$ and an index set $S \subseteq [n]$,
we use $\mat{A}_S$ to denote the principal submatrix of $\mat{A}$ whose
rows and columns are indexed by $S$.
For a matrix $\mat{A}$ in $\bbR^{n \times n}$,
its \emph{determinant} is defined as
\begin{align*}
    \det(\mat{A}) \triangleq \sum_{\sigma \in \mathfrak{S}_n} \sgn(\sigma) \prod_{i \in [n]} A_{i,\sigma(i)},
\end{align*}
where $\mathfrak{S}_n$ is the symmetric group on $[n]$, and
$\sgn(\sigma)$ is the sign of a permutation $\sigma$.
We define $\det(\mat{A}_{\emptyset}) \triangleq 1$.
For a set $\mat{V} = \{\vec{v}_1, \ldots, \vec{v}_n\}$ of $n$ vectors in $\bbR^{d}$,
the \emph{volume} of the parallelepiped spanned by $\mat{V}$ is defined as
\begin{align}\label{eq:def-vol}
    \vol(\mat{V}) \triangleq \|\vec{v}_1\| \cdot
    \prod_{2 \leq i \leq n} \left\| \vec{v}_i - \proj_{\{\vec{v}_1, \ldots, \vec{v}_{i-1}\}} (\vec{v}_i) \right\|,
\end{align}
where $\proj_{\mat{P}}(\cdot)$ is an operator of orthogonal projection onto the subspace spanned by vectors in $\mat{P}$.
We finally introduce the \emph{AM--GM inequality}: for any $n$ nonnegative real numbers $x_1, \ldots, x_n$, we have
\begin{align*}
    \frac{1}{n}\sum_{i \in [n]}x_i \geq \Biggl(\prod_{i \in [n]}x_i \Biggr)^{\frac{1}{n}}.
\end{align*}

\section{Exponential Inapproximability of Unconstrained MAP Inference}
\label{sec:subdetmax}

We prove an exponential-factor inapproximability result for 
unconstrained MAP inference on DPPs, which is identical to the following
determinant maximization problem:
\begin{definition}
Given a positive semi-definite matrix $\mat{A}$ in $\bbQ^{n \times n}$,
\emph{determinant maximization} (\subdetmax) asks to find a subset $S$ of $[n]$ such that
the determinant $\det(\mat{A}_S)$ of a principal submatrix is maximized.
The optimal value of \subdetmax is denoted $\maxdet(\mat{A}) \triangleq \max_{S \subseteq [n]} \det(\mat{A}_S)$.
\end{definition}
    
We say that a polynomial-time algorithm $\ALG$ is
a \emph{$\rho$-approximation algorithm} for $\rho \geq 1$ if for all $\mat{A} \in \bbQ^{n \times n}$, it holds that
\begin{align*}
    \det(\ALG(\mat{A})) \geq \left(\frac{1}{\rho}\right) \cdot \maxdet(\mat{A}),
\end{align*}
where $\ALG(\mat{A})$ is the output of $\ALG$ on $\mat{A}$.
The factor $\rho$ can be a function in the input size $n$, e.g., $\rho(n) = 2^{n}$, and
(asymptotically) smaller $\rho$ is a better approximation factor.
We also define \gapsubdetmax{$[s(n),c(n)]$} for two functions $c(n)$ and $s(n)$ as a problem of deciding
whether $\maxdet(\mat{A}) \geq c(n)$ or $\maxdet(\mat{A}) < s(n)$.\footnote{Precisely, $\mat{A}$ is ``promised'' to satisfy either
$\maxdet(\mat{A}) \geq c(n)$ (\emph{yes} instance) or
$\maxdet(\mat{A}) < s(n)$ (\emph{no} instance).
Such a problem is called a \emph{promise} problem.
}
If \gapsubdetmax{$[s(n),c(n)]$} is \NP-hard, then
so is approximating \subdetmax within a factor of $\frac{c(n)}{s(n)}$.

We are now ready to state our result formally.
\begin{theorem}\label{thm:subdetmax-inapprox}
There exist universal constants $\ccc$ and $\sss$ such that 
$\ccc - \sss > 10^{-10^{13}}$ and \gapsubdetmax{$[2^{\sss n}, 2^{\ccc n}]$} is \NP-hard,
where $n$ is the order of an input matrix.
In particular, it is \NP-hard to approximate \subdetmax within a factor of $2^{\beta n}$,
where $\beta = 10^{-10^{13}}$.
\end{theorem}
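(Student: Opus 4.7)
The plan is to adapt the size-constrained reduction of \citet*{civril2013exponential} to the unconstrained setting, as the excerpt hints. I would start from a PCP-based gap hardness for MAX-3-SAT (with bounded variable occurrences to keep the instance sparse): there exist absolute constants $c > s$ in $(0,1)$ such that it is \NP-hard to distinguish a satisfiable 3-CNF formula $\phi$ from one in which every assignment satisfies at most an $s$-fraction of the clauses. Given such a $\phi$ on $N$ variables and $m = \Theta(N)$ clauses, I would construct an $n \times n$ matrix $\mat{A}$ with $n = 2N$ whose principal-submatrix determinants encode $\phi$ in such a way that yes instances force an exponentially large $\maxdet(\mat{A})$ while no instances cap it exponentially strictly below.

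Concretely, I would index the rows and columns of $\mat{A}$ by the $2N$ literals of $\phi$ and, for each literal $\ell$, build a vector $\vec{v}_\ell$ whose clause-indicator coordinates record the clauses containing $\ell$, padded by a uniform regulariser so that $\|\vec{v}_\ell\|^2 = B$ for a constant $B > 1$ to be chosen; then set $\mat{A} = V^\top V$ so that $\det(\mat{A}_S) = \vol(\{\vec{v}_\ell\}_{\ell \in S})^2$. In the yes case, taking $S^\star$ to be the $N$ literals picked by a satisfying assignment gives a nearly orthogonal family (no two chosen literals co-occur in a clause they both satisfy), and the volume formula \cref{eq:def-vol} yields $\det(\mat{A}_{S^\star}) \geq 2^{\ccc n}$ for some $\ccc$ depending on $c$ and $B$.

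The main obstacle---and the whole reason the $k$-constrained proof does not transfer verbatim---is the soundness analysis in which the adversary may choose $S$ of \emph{any} size. I would handle this by a two-regime bound. When $|S|$ is ``small'' (well below $N$), a Hadamard-type estimate $\det(\mat{A}_S) \leq \prod_{\ell \in S} \|\vec{v}_\ell\|^2 = B^{|S|}$ together with the chosen $B$ keeps $\det(\mat{A}_S)$ well below $2^{\sss n}$ by a simple size-dependent calculation. When $|S|$ is ``large'' (comparable to or exceeding $N$), I would invoke the PCP soundness: the literals in $S$ encode a partial assignment that fails an $\Omega(1)$-fraction of clauses, and each such failure forces a pair of chosen literals to share a coordinate, contracting the orthogonal components in \cref{eq:def-vol} by a quantitative factor. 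A counting argument over these forced overlaps then drives $\det(\mat{A}_S)$ below $2^{\sss n}$ uniformly in $|S|$.

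Finally, the constants $B$, $c$, and $s$ combine into universal $\ccc > \sss$. Although the PCP-derived gap $c - s$ is already tiny, the additional losses---from the padding constant $B$ and from converting clause-satisfaction counts into volume ratios---are still absolute constants, so tracking them carefully yields $\ccc - \sss > 10^{-10^{13}}$, which immediately implies the $2^{\beta n}$-factor inapproximability of \subdetmax with $\beta = 10^{-10^{13}}$. The crux of the whole argument is the size-uniform soundness bound described above; everything else is essentially a quantitative repackaging of the size-constrained reduction.
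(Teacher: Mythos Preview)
Your two-regime soundness plan and the final rescaling are indeed the shape of the paper's argument, but the reduction you propose does not encode satisfiability in the determinant, so there is no gap to amplify. Your parenthetical justification for completeness---``no two chosen literals co-occur in a clause they both satisfy''---is false: in $(x_1 \vee \overline{x_2} \vee x_3)$ a satisfying assignment may set both $x_1$ and $x_3$ true, and then $\vec{v}_{x_1},\vec{v}_{x_3}$ share that clause coordinate. Worse, with clause-indicator vectors the Gram matrix of the $N$ literals picked by an assignment depends only on their \emph{co-occurrence} pattern, which is identical whether or not the assignment satisfies $\phi$: an unsatisfied clause is one whose three literals are all \emph{absent} from $S$, so it creates no overlap among the chosen vectors. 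Your soundness claim ``each such failure forces a pair of chosen literals to share a coordinate'' is thus the opposite of what actually happens, and $\det(\mat{A}_{S^\star})$ is the same in the yes and no cases for the canonical choice of $S^\star$.

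The paper's reduction is engineered precisely so that constraint \emph{satisfaction}, not incidence, controls the inner products. It works from \prb{LabelCover} rather than \prb{Max-3SAT}, and for each edge $e=(x,y)$ places $\overline{\vec{b}_{\pi_e(i)}}$ in the $e$-block of $\vec{v}_{x,i}$ and $\vec{b}_j$ in the $e$-block of $\vec{v}_{y,j}$, using the family of \cref{lem:civril-vector} for which $\langle \overline{\vec{b}_i},\vec{b}_j\rangle=0$ iff $i=j$; hence $\vec{v}_{x,i}\perp\vec{v}_{y,j}$ exactly when the edge constraint is satisfied, while an unsatisfied edge contributes a fixed positive inner product $\frac{1}{2\cdot 15^{\ell}}$. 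This is the quantitative handle your construction lacks. A second missing ingredient is parallel repetition (\cref{thm:dinur-steurer}): the raw \prb{LabelCover} gap is first driven down to $2^{-\alpha\ell}\leq 2^{-4}$ so that the unsatisfied-edge count in \cref{lem:soundness-2} is large enough to push the large-$|S|$ volume strictly below the small-$|S|$ Hadamard bound after the rescaling step.
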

\begin{remark}
The universal constant $\beta = 10^{-10^{13}}$ is so extremely small that $2^{\beta n} \approx 1$ for real-world matrices, whose possible size $n$ is limited inherently.
The significance of \cref{thm:subdetmax-inapprox} is that
it can rule out the existence of any polynomial-factor approximation algorithm (unless \cP~$=$~\NP).
As a corollary, we also show inapproximability for E-DPPs of constant exponent $p$.
\end{remark}
    
The input for \subdetmax is often given by a \emph{Gram matrix} $\mat{A} \in \bbQ^{n \times n}$,
where $ A_{i,j} = \langle \vec{v}_i, \vec{v}_j \rangle $ for all $i,j \in [n]$ for
$n$ vectors $\vec{v}_1, \ldots, \vec{v}_n$ in $\bbQ^{d}$.
In such a case, we have a simple relation between the principal minor and the volume of the parallelepiped that
$\det(\mat{A}_S) = \vol(\{\vec{v}_i\}_{i \in S})^2$ for every subset $S \subseteq [n]$.
\subdetmax is thus essentially equivalent to the following optimization problem:

\begin{definition}
Given a set $\mat{V} = \{\vec{v}_1, \ldots, \vec{v}_n\}$ of $n$ vectors in $\bbQ^{d}$,
\emph{volume maximization} (\volmax) asks to find a subset $\mat{S}$ of $\mat{V}$ such that
the volume $\vol(\mat{S})$ is maximized.
The optimal value of \volmax is denoted
$\maxvol(\mat{V}) \triangleq \max_{S \subseteq [n]} \vol(\{\vec{v}_i\}_{i \in S})$.
\end{definition}
Observe that
there exists a $\rho(n)$-approximation algorithm for \subdetmax \emph{if and only if}
there exists a $\sqrt{\rho(n)}$-approximation algorithm for \volmax.
    
\paragraph{Outline of the Remainder of \cref{sec:subdetmax}.}
\cref{subsec:game-rep} introduces 
projection games to be reduced to \volmax and
\citeauthor{raz1998parallel}'s parallel repetition theorem.
\cref{subsec:inapprox-game} reviews the indistinguishability of projection games.
\cref{subsec:main-lemma} describes \nameref{lem:main},
which is crucial in proving \cref{thm:subdetmax-inapprox}, and
\cref{subsec:proof-lemma} is devoted to the proof of \nameref{lem:main}.

\subsection{Projection Game and  Parallel Repetition Theorem}\label{subsec:game-rep}
We introduce projection games followed by the parallel repetition theorem.
    
\begin{definition}
A \emph{2-player 1-round projection game} is specified by a quintuple
$ \GG = (X,Y,E,\Sigma,\Pi) $ such that the following conditions are satisfied:
\begin{itemize}
    \item $(X,Y,E)$ is a bipartite graph with vertex sets $X$ and $Y$ and an edge set $E$ between $X$ and $Y$,
    \item $\Sigma$ is an alphabet, and
    \item $\Pi = \{\pi_e\}_{e \in E} $ is a constraint set, 
    where $\pi_e$ for each edge $e \in E$ is a function $\Sigma \to \Sigma$.
\end{itemize}
\end{definition}

A \emph{labeling} $\sigma$ is defined as a label assignment of each vertex of the bipartite graph,
i.e., $\sigma: (X \uplus Y) \to \Sigma$.
An edge $e = (x,y) \in E$ is said to be \emph{satisfied} by $\sigma$ if $\pi_e(\sigma(x)) = \sigma(y) $.
The \emph{value} of a projection game $\GG$, denoted $\val(\GG)$, is defined as
the maximum fraction of edges satisfied over all possible labelings $\sigma$, i.e.,
\begin{align*}
    \val(\GG) \triangleq \max_{\sigma: (X \uplus Y) \to \Sigma} \frac{1}{|E|} \sum_{e=(x,y) \in E} \Bigl[\!\!\Bigl[ \pi_{e}(\sigma(x)) = \sigma(y) \Bigr]\!\!\Bigr].
\end{align*}
The \prb{LabelCover} problem is defined as
finding a labeling that satisfies the maximum fraction of edges in the bipartite graph of a projection game.

We then define the \emph{product} of two games followed by the parallel repetition of a game.
\begin{definition}
Let $\GG_1 = (X_1,Y_1,E_1,\Sigma_1,\{\pi_{1,e}\}_{e \in E_1})$ and
$\GG_2 = (X_2,Y_2,E_2,\Sigma_2,\{\pi_{2,e}\}_{e \in E_2})$
be two projection games.
The \emph{product} of $\GG_1$ and $\GG_2$, denoted $ \GG_1 \otimes \GG_2 $, is defined as a new game
$ (X_1 \times X_2, Y_1 \times Y_2, E, \Sigma_1 \times \Sigma_2, \Pi = \{\pi_e\}_{e \in E}) $, where
$E \triangleq \{ ( (x_1,x_2), (y_1,y_2) ) \mid (x_1,y_1) \in E_1, (x_2,y_2) \in E_2 \}$, and
for each edge $e = ((x_1,x_2), (y_1,y_2)) \in E$,
$\pi_e : \Sigma_1 \times \Sigma_2 \to \Sigma_1 \times \Sigma_2$ is defined as
$\pi_e((i_1,i_2)) \triangleq (\pi_{1,(x_1,y_1)}(i_1), \pi_{2,(x_2,y_2)}(i_2)) $
for labels $i_1 \in \Sigma_1$ and $i_2 \in \Sigma_2$.
\end{definition}
    
The \emph{$\ell$-hold parallel repetition} of $\GG$ for any positive integer $\ell$
is defined as
\begin{align*}
\GG^{\otimes \ell} \triangleq \underbrace{\GG \otimes \cdots \otimes \GG}_{\ell \text{ times}}.
\end{align*}
\citet*{raz1998parallel} proved the \emph{parallel repetition theorem}, which states that
for every (not necessarily projection) game $\GG$ with $\val(\GG)=1-\epsilon$, it holds that
$\val(\GG^{\otimes \ell}) \leq (1-\overline{\epsilon})^{\frac{\ell}{\log |\Sigma|}}$,
where $\overline{\epsilon}$ is a constant depending only on $\epsilon$.
Once establishing an inapproximability factor for \prb{LabelCover}, we can amplify it by taking parallel repetition.
The use of the parallel repetition theorem has led to
inapproximability results for many (\NP-hard) optimization problems,
such as \prb{SetCover} \citep{feige1998threshold} and
\prb{MaxClique} \citep{hastad1999clique}.
We refer to a tighter, explicit bound derived by \citet*{dinur2014analytical}.

\begin{theorem}[\protect{\citealp[Corollary 1]{dinur2014analytical}}]
\label{thm:dinur-steurer}
For any projection game $\GG$ with $\val(\GG) \leq 1-\epsilon$ for some $\epsilon > 0$,
the $\ell$-fold parallel repetition of $\GG$ satisfies that
\begin{align*}
    \val(\GG^{\otimes \ell}) \leq \left(1-\frac{\epsilon^2}{16}\right)^{\ell}.
\end{align*}
\end{theorem}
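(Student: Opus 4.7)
The plan is to prove the bound via an \emph{analytical relaxation} of the game value that (i)~upper-bounds the combinatorial value $\val(\cdot)$, (ii)~is submultiplicative (indeed multiplicative) under the tensor product $\otimes$, and (iii)~loses only a quadratic factor when compared to $\val$. Concretely, I would introduce a quantity $\val_+(\GG)$ defined as the supremum, over pairs of ``fractional strategies'' $f:X\times\Sigma\to\mathbb{R}_{\geq 0}$ and $g:Y\times\Sigma\to\mathbb{R}_{\geq 0}$ that are probability distributions over $\Sigma$ at each vertex, of an appropriate bilinear form encoding the projection constraints, e.g.
\begin{align*}
\val_+(\GG) \triangleq \sup_{f,g}\; \frac{1}{|E|}\sum_{(x,y)\in E}\;\sum_{i\in\Sigma}\sqrt{\,f(x,i)\cdot g\bigl(y,\pi_{(x,y)}(i)\bigr)\,}.
\end{align*}
Granted properties (i)--(iii), the theorem follows by a short chain: $\val(\GG)\leq 1-\epsilon$ gives $\val_+(\GG)\leq 1-\epsilon^2/16$ via (iii); iterating (ii) $\ell$ times yields $\val_+(\GG^{\otimes\ell})\leq \val_+(\GG)^\ell$; and applying (i) to the $\ell$-fold product returns $\val(\GG^{\otimes\ell})\leq (1-\epsilon^2/16)^\ell$.

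For (i), I would plug in the indicators of an optimal labeling and verify they are feasible, giving $\val(\GG)\leq \val_+(\GG)$ immediately. For (ii), the product structure of $\GG^{\otimes \ell}$ factors the edge set, alphabet, and projection constraints coordinate-wise; combined with the square-root form of $\val_+$ and Cauchy--Schwarz, the product of coordinate-wise suprema collapses to the supremum of the product, yielding submultiplicativity. (One should also verify feasibility: the tensor product $f_1\otimes f_2$ of two vertex-distributions is again a vertex-distribution in the product game.)

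The main obstacle is (iii): showing that $\val(\GG)\leq 1-\epsilon$ implies $\val_+(\GG)\leq 1-\epsilon^2/16$. My approach is contrapositive. Starting from fractional strategies $(f,g)$ achieving $\val_+(\GG)>1-\delta$, round them to a randomized labeling by sampling $\sigma(x)\sim f(x,\cdot)$ and $\sigma(y)\sim g(y,\cdot)$ independently at each vertex. Applying Cauchy--Schwarz to the square-root expression edge by edge, one shows that for each $(x,y)\in E$ the probability that $\pi_{(x,y)}(\sigma(x))=\sigma(y)$ is bounded below by a quantity controlled by $\sum_i \sqrt{f(x,i)\,g(y,\pi(i))}$. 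Summing over edges and using convexity/Cauchy--Schwarz once more, the expected fraction of satisfied edges is at least $1-4\sqrt{\delta}$, so $\val(\GG)\geq 1-4\sqrt{\delta}$. Contrapositively, $\val(\GG)\leq 1-\epsilon$ forces $\delta\geq \epsilon^2/16$.

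I expect the delicate point to be obtaining the explicit constant $1/16$ rather than merely $\Theta(\epsilon^2)$: one must pick the relaxation $\val_+$ and the rounding scheme so that both the Cauchy--Schwarz step and the feasibility accounting are tight enough to yield this precise factor. If calibration to $1/16$ proved awkward in the analytical route, I would fall back on the information-theoretic correlated-sampling approach (à la Holenstein/Rao), which recovers a quadratic bound $(1-\Omega(\epsilon^2))^\ell$ although the leading constant would need to be tuned separately to match.
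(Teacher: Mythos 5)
The paper does not prove \cref{thm:dinur-steurer}: it is stated and cited verbatim as Corollary~1 of \citet{dinur2014analytical}, and used as a black box. So there is no ``paper proof'' to compare against; what you have written is a reconstruction of the Dinur--Steurer argument itself, and at the level of architecture your reconstruction is faithful: they indeed define an analytical relaxation of the game value that dominates $\val$, prove it is multiplicative under $\otimes$, and relate it quadratically to $\val$ by a rounding argument, exactly the three-step scaffold you lay out.

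Where your sketch has a genuine gap is step~(ii), multiplicativity. You assert that ``the product of coordinate-wise suprema collapses to the supremum of the product'' via Cauchy--Schwarz, but this only gives the \emph{easy} direction $\val_+(\GG_1)\val_+(\GG_2) \leq \val_+(\GG_1 \otimes \GG_2)$ (tensor two optimal fractional strategies). What the argument actually needs is the reverse inequality, and with scalar, per-vertex probability distributions $f,g$ this is not a Cauchy--Schwarz consequence: a non-product maximizer for $\GG_1 \otimes \GG_2$ could a priori beat every product strategy, and Cauchy--Schwarz does not rule that out. Dinur--Steurer handle this by formulating $\val_+$ over \emph{vector-valued} (Hilbert-space-valued) strategies and reading the quantity as an operator norm of a nonnegativity-preserving linear map; operator norms tensorize, and nonnegativity of the optimal ``singular vector'' is recovered by a Perron--Frobenius-type argument specific to the projection structure. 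Without this vector-valued reformulation, step~(ii) does not go through, and it is precisely there (not in the rounding step~(iii)) that the technical weight of the theorem sits. Your step~(iii) sketch is closer to correct, although obtaining the exact constant $\tfrac{1}{16}$ (rather than $\Theta(\epsilon^2)$) requires two careful applications of Cauchy--Schwarz together with the normalization choices made when defining $\val_+$; any fallback to Holenstein/Rao-style correlated sampling would change that constant and thus alter the explicit $\beta=10^{-10^{13}}$ downstream.
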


\subsection{Indistinguishability of Projection Games}
\label{subsec:inapprox-game}
We review the indistinguishability of (the value of) projection games,
in other words, inapproximability of \prb{LabelCover}.
The following theorem shows that
we cannot decide in polynomial time whether a projection game has a value $1$ or has a value less than $1-\epsilon$ for some $\epsilon > 0$.
Though its proof is widely known, we include it in \cref{app:inapprox-labelcover}
to explicitly describe the value of such $\epsilon$.

\begin{theorem}[See, e.g., \citealp{feige1998threshold,hastad2001some,trevisan2004inapproximability,vazirani2013approximation,tamaki2015parallel}]
\label{thm:inapprox-labelcover}
Let $\GG = (X,Y,E,\Sigma,\Pi)$ be a projection game such that
$(X,Y,E)$ is a $15$-regular bipartite graph
(i.e., each vertex of $X \uplus Y$ is incident to exactly $15$ edges),
where $|X| = |Y| = 5n$ and $|E|=75n$ for some positive integer $n$ divisible by $3$, and $|\Sigma| = 7$.
Then, it is \NP-hard to distinguish between
$\val(\GG) = 1$ and $\val(\GG) < 1-\frac{1}{206{,}401}$.
\end{theorem}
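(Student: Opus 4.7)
The plan is to prove this theorem via the classical chain of gap-preserving reductions, starting from an explicit hardness of \textsc{Max-3SAT}, passing through \textsc{Max-E3SAT}(5), and ending at a projection game with precisely the prescribed combinatorial structure, while carefully tracking the numerical constants that yield the gap $\frac{1}{206{,}401}$.

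First, I would invoke an effective form of the PCP theorem (equivalently, a Håstad-type explicit inapproximability result for \textsc{Max-3SAT}) to obtain a constant $\epsilon_1 > 0$ for which it is \NP-hard to distinguish a satisfiable 3-CNF formula from one in which no assignment satisfies more than a $(1-\epsilon_1)$-fraction of the clauses. I would then apply the Papadimitriou--Yannakakis reduction, based on a constant-degree expander, to convert the resulting formula into a \textsc{Max-E3SAT}(5) instance in which every clause has exactly three distinct literals and every variable occurs in exactly five clauses. Perfect satisfiability is preserved in the \emph{yes} case, and in the \emph{no} case the soundness becomes an explicit $\epsilon_2$ depending on $\epsilon_1$ and on the spectral gap of the chosen expander. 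A padding argument ensures that the number of variables $n$ is divisible by $3$, so that the resulting instance has exactly $\frac{5n}{3}$ clauses.

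Second, I would turn this \textsc{Max-E3SAT}(5) instance into a projection game $\GG = (X,Y,E,\Sigma,\Pi)$ as follows. The vertex set $X$ consists of three copies of each clause and $Y$ consists of five copies of each variable, giving $|X| = 3 \cdot \frac{5n}{3} = 5n$ and $|Y| = 5 \cdot n = 5n$. For each original clause-variable incidence $(C,x)$ and each pair $(i,j) \in [3]\times[5]$, I add an edge between the $i$-th copy of $C$ and the $j$-th copy of $x$. Each clause-copy then has degree $3 \cdot 5 = 15$ and each variable-copy has degree $5 \cdot 3 = 15$, giving a $15$-regular bipartite graph with $|E| = 5n \cdot 15 = 75n$. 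The alphabet $\Sigma$ has size $7$: on the $X$-side it encodes the seven satisfying assignments of a 3-clause, and on the $Y$-side it is $\{0,1\}$ embedded into a set of size $7$ by padding. Each edge projection sends an $X$-label (a satisfying 3-bit assignment of $C$) to its appropriate bit, negated if the corresponding literal is negative, at the incident variable. Any $\{0,1\}$-assignment of the variables extends to a labeling that satisfies exactly those edges whose underlying clause is satisfied, so $\val(\GG) = 1$ in the \emph{yes} case and $\val(\GG) \leq 1 - \epsilon_2$ in the \emph{no} case.

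The hard part will be the explicit numerical bookkeeping in this two-step reduction: one must combine the quantitative \textsc{Max-3SAT} gap from the PCP step with the exact soundness loss of the expander-based Papadimitriou--Yannakakis step (which depends on the second-largest eigenvalue of the chosen expander), in order to verify the concrete bound $\epsilon_2 \geq \frac{1}{206{,}401}$. The balanced $3 \times 5$ copying in the projection-game construction preserves the value of the underlying \textsc{Max-E3SAT}(5) instance exactly, so it contributes no further loss; but producing the specific constant in the statement requires fixing an explicit expander with a concrete spectral bound and optimizing the parameter settings of each stage.
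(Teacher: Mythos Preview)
Your overall strategy matches the paper's: reduce from H\aa stad's \prb{Max-3SAT} gap through bounded-occurrence SAT to a clause--variable projection game, then take copies to make the bipartite graph $15$-regular with $|X|=|Y|=5n$. The paper's chain is slightly more granular than yours---it passes through \prb{Max-3SAT(29)} (loss factor $43$) before \prb{Max-E3SAT(5)} (further loss factor $200$), then builds the basic degree-$(5,3)$ projection game (loss factor $3$), and finally applies the $5\times 3$ copying (no loss). The product $8\cdot 43\cdot 200\cdot 3 = 206{,}400$ is exactly where the constant $\frac{1}{206{,}401}$ comes from, so your plan to ``verify $\epsilon_2 \geq \frac{1}{206{,}401}$'' by picking some explicit expander and optimizing parameters will not reproduce that particular number; you would need to follow the same cited intermediate reductions with the same quoted constants.

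There is also a real gap in your soundness analysis for the projection game. The sentence ``any $\{0,1\}$-assignment extends to a labeling that satisfies exactly those edges whose underlying clause is satisfied'' only establishes completeness; it does not yield $\val(\GG)\le 1-\epsilon_2$. For soundness one must argue in the other direction: given an arbitrary labeling, first (as you note later) assume all copies of a vertex share a label, then read off a truth assignment $\tau$ from the variable side. If a clause $C$ is unsatisfied by $\tau$, the label on any copy of $C$---being some satisfying assignment of $C$---must disagree with $\tau$ on at least one of the three variables of $C$, so at least $\frac{1}{3}$ of the edges incident to that copy are violated. This gives only $\val(\GG)\le 1-\epsilon_2/3$, not $1-\epsilon_2$; the factor $3$ is one of the four ingredients of $206{,}400$ and cannot be omitted.
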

A projection game satisfying the conditions in \cref{thm:inapprox-labelcover}
is said to be \emph{special} in this paper.
Owing to \cref{thm:inapprox-labelcover,thm:dinur-steurer}, for any $\ell$,
it is \NP-hard to decide whether
the $\ell$-fold parallel repetition $\GG^{\otimes \ell}$ of a special projection game satisfies
$\val(\GG^{\otimes \ell}) = 1$ or
\begin{align*}
    \val(\GG^{\otimes \ell}) <
    \left(1-\frac{1}{(206{,}401)^2 \cdot 16}\right)^{\ell} < 2^{-2 \cdot 10^{-12} \ell}.
\end{align*}
Hereafter, we let $\alpha \triangleq 2 \cdot 10^{-12}$.

\subsection{\nameref{lem:main} and Proof of \cref{thm:subdetmax-inapprox}}
\label{subsec:main-lemma}

The proof of \cref{thm:subdetmax-inapprox} relies on
a reduction from the $\ell$-fold parallel repetition of a special projection game to \volmax.
Throughout the remainder of this section,
we fix the value of $\ell$ as
$\displaystyle \ell \triangleq \left\lceil \frac{4}{\alpha} \right\rceil = 2 \cdot 10^{12}.$
Our main lemma in the following can be thought of as an extension of \citet*{civril2013exponential}.
The proof is deferred to the next subsection.

\begin{lemma}[Main Lemma]
\label{lem:main}
There is a polynomial-time reduction from
the $\ell$-fold parallel repetition $\GG^{\otimes \ell}$ of a special projection game to
an instance $\mat{V} =\{\vec{v}_1, \ldots, \vec{v}_N\}$ of \volmax such that
$N = 2 \cdot (35n)^{\ell}$ for some integer $n$,
each vector of $\mat{V}$ is normalized (i.e., $\|\vec{v}_i\| = 1$ for all $i \in [N]$), and
the following conditions are satisfied:
\begin{itemize}
    \item (Completeness) If $\val(\GG^{\otimes \ell})=1$,
    then there exists a set $\mat{S}$ of $K$ vectors from $\mat{V}$ with volume
    $\vol(\mat{S}) = 1$, where $K = \frac{N}{7^{\ell}}$.
    \item (Soundness) If $\val(\GG^{\otimes \ell}) < 2^{-\alpha \ell}$,
    then any set $\mat{S}$ of $k$ vectors from $\mat{V}$ satisfies
    the following properties:
    \begin{itemize}
        \item[\textbf{1.}]~$0 \leq k < \frac{4}{5}K$: $\vol(\mat{S}) \leq 1$.
        \item[\textbf{2.}]~$\frac{4}{5}K \leq k \leq N$:
        $\vol(\mat{S}) < 2^{-\beta^{\circ} k}$, where $\beta^{\circ} = 10^{-10^{12.4}}$.
    \end{itemize}
\end{itemize}
\end{lemma}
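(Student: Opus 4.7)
The plan is to extend the Civril--Magdon-Ismail style reduction from \prb{LabelCover} to size-constrained \volmax so that the exponential volume gap persists simultaneously across every subset size $k$, not only at the intended $k = K$. Concretely, I would attach to each vertex--label pair $(v,\sigma) \in (X^\ell \uplus Y^\ell) \times \Sigma^\ell$ of the parallel-repeated game $\GG^{\otimes\ell}$ a unit vector $\vec{v}_{v,\sigma}$ in $\bbR^{D}$ whose coordinates are indexed by (edge, label) pairs of $\GG^{\otimes\ell}$: for $x \in X^\ell$, place value $15^{-\ell/2}$ on the $15^\ell$ coordinates $\{(e,\pi_{e}(\sigma)) : e \ni x\}$; for $y \in Y^\ell$, analogously on $\{(e,\sigma) : e \ni y\}$. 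Thus non-adjacent vertex pairs give orthogonal supports, while an adjacent pair whose labels satisfy the projection constraint shares exactly one coordinate (inner product $15^{-\ell}$) and is orthogonal otherwise. The total vector count is $N = 2(5n)^\ell \cdot 7^\ell = 2(35n)^\ell$, and every vector is unit.

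For completeness, a perfect labeling $\sigma^*$ of $\GG^{\otimes\ell}$ selects one vector per vertex, yielding $K = 2(5n)^\ell$ vectors whose Gram matrix is essentially the identity; a small modification of the original gadget (for instance, sign-flipping one side of the bipartition, or appending an orthogonalizing coordinate per vertex) makes this identity exactly, whence $\vol = 1$. For the easy half of soundness, any $\mat{S}$ with $k < \frac{4}{5}K$ trivially satisfies $\vol(\mat{S}) \leq \prod_{\vec{v} \in \mat{S}}\|\vec{v}\| = 1$ by Hadamard's inequality applied to a set of unit vectors.

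The substantive step is proving $\vol(\mat{S}) < 2^{-\beta^\circ k}$ for every $\mat{S}$ with $k \in [\frac{4}{5}K, N]$. I would extract from $\mat{S}$ a labeling $\sigma_{\mat{S}}$ of $\GG^{\otimes\ell}$ by choosing, for each covered vertex, one label among the chosen vectors at it, and then use the projection decomposition~\eqref{eq:def-vol} to bound $\vol(\mat{S})$ as a product of residual norms. A vector whose induced label conflicts with a previously chosen neighbor's (an ``unsatisfied'' edge) contributes a factor strictly below $1$ by a quantifiable ``edge penalty'', while a vector sharing a vertex with a previously chosen one contributes a ``vertex penalty'' bounded away from $1$ owing to the nonzero overlap of supports at the same vertex. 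Invoking \cref{thm:dinur-steurer} with $\val(\GG^{\otimes\ell}) < 2^{-\alpha\ell}$, at most a $2^{-\alpha\ell}$ fraction of the edges of $\GG^{\otimes\ell}$ can be consistent with $\sigma_{\mat{S}}$, so the vast majority of the $k$ factors carry an edge penalty; combined with vertex penalties forced by pigeonhole whenever $k > K$, an AM--GM averaging over the $k$ residual norms yields the claimed bound.

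The principal obstacle is simultaneously controlling the volume across the entire range $k \in [\frac{4}{5}K, N]$: one must balance the two sources of volume loss (cross-edge conflicts and within-vertex near-collinearities), and thread the numerical constants ($\alpha = 2 \cdot 10^{-12}$, $\ell = \lceil 4/\alpha\rceil$, alphabet size $7^\ell$, and the threshold $\frac{4}{5}K$) so that the exponent $\beta^\circ k$ in the final bound is uniform in $k$ with the stated value $\beta^\circ = 10^{-10^{12.4}}$. It is precisely this threading---ensuring that the Hadamard bound of $1$ meets the exponential bound $2^{-\beta^\circ k}$ continuously at $k = \frac{4}{5}K$---that forces the enormous universal constants appearing in \cref{thm:subdetmax-inapprox}.
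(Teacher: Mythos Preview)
Your high-level plan---one unit vector per vertex--label pair, completeness via a perfect labeling, soundness by extracting a labeling and charging volume loss to unsatisfied edges and repeated vertices, with AM--GM averaging---matches the paper's strategy. The genuine gap is in the vector gadget itself, and it is not cosmetic.

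With your coordinates indexed by $(e,j)$ and with $\vec{v}_{x,\sigma}$ supported on $\{(e,\pi_e(\sigma)):e\ni x\}$ and $\vec{v}_{y,\tau}$ on $\{(e,\tau):e\ni y\}$, an adjacent pair $(x,\sigma),(y,\tau)$ overlaps exactly when $\pi_e(\sigma)=\tau$, i.e., when the edge is \emph{satisfied}; when the edge is unsatisfied the supports are disjoint and the inner product is $0$. This is the reverse of what the argument needs. In completeness, a perfect labeling then yields $K$ vectors whose Gram matrix has $15^\ell$ off-diagonal entries of size $15^{-\ell}$ in every row (row-sum $1$), so it is nowhere near the identity; neither sign-flipping one side nor appending per-vertex coordinates repairs this, because the overlap on the $(e,\pi_e(\sigma))$ coordinate persists. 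In soundness, your ``edge penalty'' evaporates: an unsatisfied edge contributes inner product $0$, hence residual norm exactly $1$, so the product of penalties cannot be pushed below $1$ using the low value of $\GG^{\otimes\ell}$. Your own sentence ``a vector whose induced label conflicts \ldots\ contributes a factor strictly below $1$'' is false for this construction. A second problem is the within-vertex penalty: $\vec{v}_{x,i}$ and $\vec{v}_{x,i'}$ overlap only on edges $e$ with $\pi_e(i)=\pi_e(i')$, which need not occur, so the ``vertex penalty'' you rely on for $k>K$ is not guaranteed either.

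The paper resolves both issues by replacing the $\{0,1\}$ indicator per $(e,j)$ with the block gadget $\mat{B}^{(3\ell)}=\{\vec{b}_1,\ldots,\vec{b}_{2^{3\ell}}\}$ of \cref{lem:civril-vector}: the $y$-side carries $\vec{b}_{\tau}$ and the $x$-side carries the \emph{complement} $\overline{\vec{b}_{\pi_e(\sigma)}}$. Since $\langle \vec{b}_j,\overline{\vec{b}_j}\rangle=0$ while $\langle \vec{b}_j,\overline{\vec{b}_{j'}}\rangle=\tfrac12$ for $j\neq j'$, the relationship is inverted to the correct one: satisfied edges give exact orthogonality (so completeness yields $\vol=1$ with no fudging), and every unsatisfied edge contributes a fixed inner product $\tfrac{1}{2\cdot 15^\ell}$, which is the uniform edge penalty driving \cref{lem:soundness-2}. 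The same gadget also guarantees $\langle \vec{v}_{x,i},\vec{v}_{x,i'}\rangle\geq\tfrac12$ for any $i\neq i'$, which is what makes \cref{lem:civril-vol} and \cref{clm:S-bound} go through and forces both $|X(\mat{S})|$ and $|Y(\mat{S})|$ to be $\Omega(k)$ whenever $k\geq\tfrac45 K$ and $\vol(\mat{S})$ is not already tiny. Once you swap in this gadget, the rest of your outline (Union Lemma, counting unsatisfied edges between $X(\mat{S})$ and $Y(\mat{S})$, AM--GM) is essentially the paper's proof.
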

By \cref{lem:main}, we can prove \cref{thm:subdetmax-inapprox} as follows.

\begin{proof}[Proof of \cref{thm:subdetmax-inapprox}]
Let $\mat{V} = \{\vec{v}_1, \ldots, \vec{v}_N\}$
be an instance of \volmax reduced from
the $\ell$-fold parallel repetition $\GG^{\otimes \ell}$ by \cref{lem:main}.
Create a new instance of \volmax
$\mat{W} = \{\vec{w}_1, \ldots, \vec{w}_N\}$,
where $\vec{w}_i = 2^{\beta^{\circ}} \cdot \vec{v}_i$ for each $i \in [N]$
(which is a polynomial-time reduction as $2^{\beta^{\circ}}$ is constant).
If $\val(\GG^{\otimes \ell}) = 1$, then there is a set
$\mat{S}$ of $K= N/7^{\ell}$ vectors from $\mat{W}$ such that
$\vol(\mat{S}) = 2^{\beta^{\circ} K}$.
On the other hand, if $\val(\GG^{\otimes \ell}) < 2^{-\alpha \ell}$,
then $\maxvol(\mat{W})$ is (strictly) bounded from above by $ 2^{\frac{4}{5} \beta^{\circ} K} $
through the following case analysis on the size of $\mat{S} \subseteq \mat{W}$:
\begin{itemize}
    \item \textbf{1.}~$0 \leq |\mat{S}| < \frac{4}{5}K$:
    $ \vol(\mat{S}) < 2^{\frac{4}{5}\beta^{\circ} K} $.
    \item \textbf{2.}~$\frac{4}{5} K \leq |\mat{S}| \leq N$:
    $\vol(\mat{S}) < 2^{(\beta^{\circ}-\beta^{\circ})|\mat{S}|} \leq 1 $.
\end{itemize}
\item
It is thus \NP-hard to decide whether
$\maxvol(\mat{W}) \geq 2^{\frac{\beta^{\circ}}{7^{\ell}} N}$ or
$\maxvol(\mat{W}) < 2^{\frac{4 \beta^{\circ}}{5 \cdot 7^{\ell}} N}$
by \cref{thm:dinur-steurer,thm:inapprox-labelcover}.
Owing to the relation between \volmax and \subdetmax,
\gapsubdetmax{$[2^{\frac{8 \beta^{\circ}}{5 \cdot 7^{\ell}}n},2^{\frac{2\beta^{\circ}}{7^{\ell}}n}]$} is also \NP-hard, where $n$ is the order of an input matrix.
In particular, it is \NP-hard to approximate \subdetmax within a factor of $2^{(\ccc-\sss)n}$, where
$\ccc = \frac{2\beta^{\circ}}{7^{\ell}}$ and
$\sss = \frac{8\beta^{\circ}}{5 \cdot 7^{\ell}}$.
Observing that
$\beta = 10^{-10^{13}} < 10^{-10^{12.7}} < \ccc - \sss$ suffices to complete the proof.
\end{proof}

\subsection{Proof of \nameref{lem:main}}
\label{subsec:proof-lemma}
We now prove \nameref{lem:main}.
We first introduce tools from \citet*{civril2013exponential}.

\begin{lemma}[\protect{\citealp[Union Lemma]{civril2013exponential}}]
\label{lem:civril-union}
Let $\mat{P}$ and $\mat{Q}$ be two (finite) sets of vectors in $\bbR^{d}$.
Then, we have the following:
\begin{align*}
    \vol(\mat{P} \cup \mat{Q}) \leq \vol(\mat{Q}) \cdot \prod_{\vec{v} \in \mat{P}} \dis(\vec{v}, \mat{Q}),
\end{align*}
where $\dis(\vec{v}, \mat{Q})$ denotes the distance of $\vec{v}$ to the subspace spanned by $\mat{Q}$; i.e.,
\begin{align*}
    \dis(\vec{v}, \mat{Q}) \triangleq \left\| \vec{v} - \proj_{\mat{Q}}(\vec{v}) \right\|.
\end{align*}
\end{lemma}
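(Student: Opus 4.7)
The plan is to exploit the Gram--Schmidt form of the volume given in \cref{eq:def-vol} by choosing a convenient ordering of the vectors in $\mat{P} \cup \mat{Q}$. First, I would invoke the standard fact that $\vol$ is invariant under reordering of its argument (equivalently, $\vol(\mat{V})^2$ equals the Gram determinant $\det(\mat{M}^{\top}\mat{M})$, where $\mat{M}$ has columns $\vec{v}_1, \ldots, \vec{v}_n$, and this quantity is symmetric in its columns). Enumerating $\mat{Q} = \{\vec{q}_1, \ldots, \vec{q}_m\}$ and $\mat{P} = \{\vec{p}_1, \ldots, \vec{p}_r\}$ in any fixed order, I would then apply \cref{eq:def-vol} to the concatenated list $\vec{q}_1, \ldots, \vec{q}_m, \vec{p}_1, \ldots, \vec{p}_r$, so that all vectors of $\mat{Q}$ are processed before any vector of $\mat{P}$.

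With this ordering, the defining product telescopes as
\begin{align*}
\vol(\mat{P} \cup \mat{Q}) = \vol(\mat{Q}) \cdot \prod_{i=1}^{r} \left\| \vec{p}_i - \proj_{\mat{Q} \cup \{\vec{p}_1, \ldots, \vec{p}_{i-1}\}}(\vec{p}_i) \right\|,
\end{align*}
because the first $m$ factors of the expansion of $\vol(\mat{P} \cup \mat{Q})$ reconstruct exactly $\vol(\mat{Q})$, and the remaining $r$ factors are the Gram--Schmidt residuals of the $\vec{p}_i$. The next step is the key geometric observation: enlarging the projecting subspace can only decrease the distance to it, i.e.\ for any vector $\vec{v}$ and any sets $\mat{A} \subseteq \mat{B}$, one has $\|\vec{v} - \proj_{\mat{B}}(\vec{v})\| \leq \|\vec{v} - \proj_{\mat{A}}(\vec{v})\|$, since $\proj_{\mat{B}}(\vec{v})$ is by definition the closest point to $\vec{v}$ in a superset of the candidate points used for $\proj_{\mat{A}}(\vec{v})$. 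Applying this with $\mat{A} = \mat{Q}$ and $\mat{B} = \mat{Q} \cup \{\vec{p}_1, \ldots, \vec{p}_{i-1}\}$ bounds each factor above by $\dis(\vec{p}_i, \mat{Q})$, which collapses the product to the right-hand side of the claimed inequality.

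I expect the only subtle point to be formalizing the ordering-invariance of $\vol$, which I would dispatch up front by identifying $\vol(\mat{V})$ with the square root of the Gram determinant $\det(\mat{M}^{\top}\mat{M})$; this identity follows from induction on the Gram--Schmidt process and makes permutation symmetry immediate. Once that is available, both the telescoping factorization and the projection-monotonicity step are essentially one-line consequences of elementary linear algebra, so no delicate estimates are needed.
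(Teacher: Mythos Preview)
The paper does not supply its own proof of this lemma; it is quoted verbatim from \citet{civril2013exponential} and used as a black box. Your argument is correct and is exactly the standard Gram--Schmidt proof one would expect in the cited source: order $\mat{Q}$ before $\mat{P}$, recognize the first block of factors as $\vol(\mat{Q})$, and then use monotonicity of the distance-to-subspace to bound each remaining factor by $\dis(\vec{p}_i,\mat{Q})$.

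One small caveat worth stating explicitly: your concatenation $\vec{q}_1,\ldots,\vec{q}_m,\vec{p}_1,\ldots,\vec{p}_r$ tacitly assumes $\mat{P}$ and $\mat{Q}$ are disjoint, since otherwise the list has repetitions and its volume is not $\vol(\mat{P}\cup\mat{Q})$. In every application within the paper (\cref{lem:civril-vol}, \cref{lem:soundness-2}, and their analogues in \cref{sec:log}) the two sets are disjoint by construction, so this does not affect anything, but you may want to add that hypothesis or note that the general case follows trivially since any $\vec{v}\in\mat{P}\cap\mat{Q}$ forces the right-hand side to vanish only when it should.
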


\begin{lemma}[\protect{\citealp[Lemma~13]{civril2013exponential}}]
\label{lem:civril-vector}
For any positive integer $\ell$, there exists a set of $2^{\ell}$ vectors
$\mat{B}^{(\ell)} = \{\vec{b}_1, \ldots, \vec{b}_{2^{\ell}}\}$
of dimension $2^{\ell+1}$ such that the following conditions are satisfied:
\begin{itemize}
    \item Each element of vectors is either $0$ or $2^{-\frac{\ell}{2}}$.
    \item $\|\vec{b}_i\| = 1 $ for all $i \in [2^{\ell}]$.
    \item $\langle \vec{b}_i, \vec{b}_j \rangle = \frac{1}{2}$ for all $i, j\in [2^\ell]$ with $i \neq j$.
    \item $ \langle \vec{b}_i, \overline{\vec{b}_j} \rangle = \frac{1}{2} $ for all $i,j \in [2^\ell]$ with $i \neq j$, where
    $ \overline{\vec{b}_j} \triangleq 2^{-\frac{\ell}{2}} \cdot \vec{1} - \vec{b}_j $.
    Note that $\langle \vec{b}_i, \overline{\vec{b}_i} \rangle = 0$.
\end{itemize}
Moreover, $\mat{B}^{(\ell)}$ can be constructed in time $\bigO(4^{\ell})$.
\end{lemma}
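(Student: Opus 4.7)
The plan is to realize each $\vec{b}_i$ as the rescaled indicator vector $2^{-\ell/2} \vec{1}_{S_i}$ of a carefully chosen subset $S_i \subseteq [2^{\ell+1}]$, and thereby reduce the whole lemma to a combinatorial existence statement. The first bullet forces the nonzero entries to equal $2^{-\ell/2}$; combined with $\|\vec{b}_i\|=1$ this forces $|S_i| = 2^\ell$; and the inner-product condition $\langle \vec{b}_i, \vec{b}_j\rangle = 1/2$ translates to $|S_i \cap S_j| = 2^{\ell-1}$ whenever $i \neq j$. So it suffices to exhibit $2^\ell$ subsets $S_1,\ldots,S_{2^\ell}$ of a $2^{\ell+1}$-point universe, each of size $2^\ell$, with all pairwise intersections equal to $2^{\ell-1}$.

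I would supply such a family from the Sylvester--Hadamard matrix $\mat{H} \in \{\pm 1\}^{2^{\ell+1} \times 2^{\ell+1}}$, built in time $\bigO(4^{\ell})$ by the standard Kronecker-product recurrence starting from the $1\times 1$ matrix $(1)$. Its rows are pairwise orthogonal over $\bbR$; the first row is the all-ones vector, and orthogonality with that row forces every other row to contain exactly $2^\ell$ entries equal to $+1$. I then set $S_i$ to be the set of $+1$-positions of the $(i{+}1)$-st row, for $i \in [2^\ell]$, and define $\vec{b}_i \triangleq 2^{-\ell/2} \vec{1}_{S_i}$.

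All four properties then follow by direct checks. The first two (support in $\{0,2^{-\ell/2}\}$ and unit norm) are immediate from $|S_i|=2^\ell$. For the third, Hadamard orthogonality of rows $i{+}1$ and $j{+}1$ combined with the fact that both rows are balanced between $+1$ and $-1$ yields $|S_i \cap S_j| = 2^{\ell-1}$ by a short counting argument that relates the four sign patterns $(\pm,\pm)$ in the two rows to overlap sizes; hence $\langle \vec{b}_i, \vec{b}_j\rangle = 2^{-\ell} \cdot 2^{\ell-1} = 1/2$. For the fourth, $\overline{\vec{b}_j}$ is supported on $[2^{\ell+1}]\setminus S_j$, so $\langle \vec{b}_i, \overline{\vec{b}_j}\rangle = 2^{-\ell}(|S_i| - |S_i \cap S_j|) = 2^{-\ell}\cdot 2^{\ell-1} = 1/2$, while the self-case collapses to $\langle \vec{b}_i, \overline{\vec{b}_i}\rangle = 2^{-\ell/2}\langle \vec{b}_i, \vec{1}\rangle - \|\vec{b}_i\|^2 = 1 - 1 = 0$.

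There is no substantive obstacle here; essentially one sets up the correspondence between balanced $\pm 1$-valued Hadamard rows and balanced $\{0,2^{-\ell/2}\}$-valued indicator vectors, and verifies the identities above. The only bookkeeping care is that the scaling by $2^{-\ell/2}$ cleanly converts the combinatorial overlap $|S_i \cap S_j| = 2^{\ell-1}$ into the analytic inner product $\langle \vec{b}_i, \vec{b}_j\rangle = 1/2$, and the $\bigO(4^\ell)$ running time is immediate since the output matrix has $(2^{\ell+1})^2 = 4 \cdot 4^\ell$ entries.
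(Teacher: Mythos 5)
Your construction is correct: the Sylvester--Hadamard rows (omitting the all-ones row) give balanced $\pm 1$ vectors whose pairwise orthogonality forces the four sign-pattern counts to all equal $2^{\ell-1}$, and the rescaled $\{0,2^{-\ell/2}\}$-indicator vectors of the $+1$-supports then satisfy all four bullets exactly as you check, with the $\bigO(4^\ell)$ bound immediate from the Kronecker recurrence. The paper itself cites this lemma from \citet{civril2013exponential} without reproducing the proof, and their construction is the same recursive doubling that underlies the Sylvester--Hadamard matrix, so your argument matches the intended one in substance, merely phrased in Hadamard language.
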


\paragraph{Our Reduction.}\label{par:reduction}
We explain how to reduce from special projection games to \volmax.
Let $\GG^{\otimes \ell} = (X,Y,E,\Sigma,\Pi)$ be the $\ell$-fold parallel repetition of a special projection game.
By definition,
$(X,Y,E)$ is a $15^{\ell}$-regular bipartite graph, where
$|X|=|Y|=(5n)^{\ell}$,
$|E|=(75n)^{\ell}$, and
$|\Sigma|=7^{\ell}$ for some integer $n$.
Assume that $\Sigma = [7^{\ell}]$ for notational convenience.

For each pair of a vertex of $X \uplus Y$ and a label of $\Sigma$, we define a vector as follows.
Each vector consists of $|E|$ blocks,
each of which is $2^{3\ell+1}$-dimensional and is either a vector in the set
$\mat{B}^{(3\ell)} = \{ \vec{b}_1, \ldots, \vec{b}_{2^{3\ell}} \}$ or the zero vector $\vec{0}$.
Let $\vec{v}_{x,i}$ (resp.~$\vec{v}_{y,i}$) denote the vector for a pair $(x, i) \in X \times \Sigma$
(resp.~$(y,i) \in Y \times \Sigma$), and
let $ \vec{v}_{x,i}(e)$ (resp.~$\vec{v}_{y,i}(e)$) denote the block of $\vec{v}_{x,i}$ (resp.~$\vec{v}_{y,i}$) corresponding to edge $e \in E$.
Each block is defined as follows:
\begin{align*}
    \vec{v}_{x,i}(e) & \triangleq
    \begin{cases}
    \displaystyle\frac{\overline{\vec{b}_{\pi_e(i)}}}{15^{\ell/2}} & \text{if } e \text{ is incident to } x,  \\
    \vec{0} & \text{otherwise},
    \end{cases} \\
    \vec{v}_{y,i}(e) & \triangleq
    \begin{cases}
    \displaystyle\frac{\vec{b}_i}{15^{\ell/2}} & \text{if } e \text{ is incident to } y, \\
    \vec{0} & \text{otherwise}.
    \end{cases}
\end{align*}
Since each vector contains exactly $15^{\ell}$ blocks chosen from $\mat{B}^{(3\ell)}$, it is normalized;
i.e., $\|\vec{v}_{x,i}\| = \|\vec{v}_{y,i}\| = 1$ for all $x \in X, y \in Y, i \in \Sigma$.
Note that
$\vec{v}_{x_1, i_1}$ and $\vec{v}_{x_2,i_2}$ (resp.~$\vec{v}_{y_1, i_1}$ and $\vec{v}_{y_2,i_2}$)
are orthogonal for any $x_1,x_2 \in X$ (resp.~$y_1,y_2 \in Y$) and
$i_1,i_2 \in \Sigma$ if $x_1 \neq x_2$ (resp.~$y_1 \neq y_2$), and
$\vec{v}_{x,i}$ and $\vec{v}_{y,j}$ for $x \in X, y \in Y, i,j \in \Sigma$ are orthogonal if
$(x,y) \in E$ and $\pi_{(x,y)}(i) = j$ as
$\langle \vec{v}_{x,i}, \vec{v}_{y,j} \rangle = \frac{1}{15^{\ell}} \langle \overline{\vec{b}_{\pi_{(x,y)}(i)}}, \vec{b}_{j} \rangle = 0$,
or if $ (x,y) \not \in E $.

We then define an instance $\mat{V}$ of \volmax as follows:
\begin{align*}
    \mat{V} \triangleq
    \{ \vec{v}_{x,i} \mid x \in X, i \in \Sigma \} \uplus
    \{ \vec{v}_{y,i} \mid y \in Y, i \in \Sigma \}.
\end{align*}
Here, $\mat{V}$ contains $N \triangleq 2 \cdot (35n)^{\ell} $ vectors.
Define $K \triangleq |X|+|Y| = 2 \cdot (5n)^{\ell}$;
it holds that $K = N / 7^{\ell}$.
Construction of $\mat{V}$ from $\GG^{\otimes \ell}$ can be done in polynomial time in $n$.
In what follows,
we show that 
$\mat{V}$ satisfies the conditions listed in \nameref{lem:main}.

\paragraph{Completeness.}

\begin{lemma}\label{lem:completeness}
If $\val(\GG^{\otimes \ell}) = 1$, then there exists a set $\mat{S}$ of $K$ vectors from $\mat{V}$ such that $\vol(\mat{S}) = 1$.
\end{lemma}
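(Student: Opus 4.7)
The plan is to exhibit the natural candidate set induced by a satisfying labeling and verify that its volume is $1$ by showing that its $K$ vectors form an orthonormal system. Since $\val(\GG^{\otimes \ell})=1$, fix a labeling $\sigma : (X \uplus Y) \to \Sigma$ that satisfies every edge, i.e., $\pi_e(\sigma(x)) = \sigma(y)$ for all $e = (x,y) \in E$. Take
\begin{align*}
\mat{S} \triangleq \{ \vec{v}_{x,\sigma(x)} \mid x \in X \} \cup \{ \vec{v}_{y,\sigma(y)} \mid y \in Y \},
\end{align*}
which has $|X|+|Y| = K$ elements. Recall from the construction that every vector of $\mat{V}$ has norm $1$, so it suffices to show that the vectors in $\mat{S}$ are pairwise orthogonal; then \eqref{eq:def-vol} immediately yields $\vol(\mat{S}) = 1$.

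To verify pairwise orthogonality, I would split into three cases according to the two vertices involved. For two distinct $x_1, x_2 \in X$, the supports of $\vec{v}_{x_1,\sigma(x_1)}$ and $\vec{v}_{x_2,\sigma(x_2)}$ in the block decomposition are disjoint (no edge is incident to both vertices of $X$), so their inner product vanishes block by block. The same argument handles two distinct $y_1, y_2 \in Y$. For $x \in X$ and $y \in Y$ with $(x,y) \notin E$, again no block contributes, so the inner product is zero.

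The only nontrivial case is $x \in X$, $y \in Y$ with $e = (x,y) \in E$; then exactly one block, namely the $e$-th, is nonzero in both vectors, and
\begin{align*}
\langle \vec{v}_{x,\sigma(x)}, \vec{v}_{y,\sigma(y)} \rangle
= \frac{1}{15^{\ell}} \langle \overline{\vec{b}_{\pi_e(\sigma(x))}}, \vec{b}_{\sigma(y)} \rangle.
\end{align*}
Since $\sigma$ satisfies $e$, we have $\pi_e(\sigma(x)) = \sigma(y)$, so this inner product equals $\tfrac{1}{15^{\ell}} \langle \overline{\vec{b}_{\sigma(y)}}, \vec{b}_{\sigma(y)} \rangle$, which is $0$ by the last bullet of \cref{lem:civril-vector}. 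Hence $\mat{S}$ is orthonormal and $\vol(\mat{S}) = 1$, as required.

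The main obstacle, though a mild one, is the case analysis above: the whole proof turns on the single coincidence that when $e = (x,y)$ is satisfied by $\sigma$, the label $\pi_e(\sigma(x))$ aligns with $\sigma(y)$ so that the block contribution becomes $\langle \overline{\vec{b}_j}, \vec{b}_j \rangle = 0$; this is precisely the reason why the construction places $\overline{\vec{b}_{\pi_e(i)}}$ on the $X$-side and $\vec{b}_i$ on the $Y$-side. Once this alignment is pinpointed, the remainder is a routine check that the block supports eliminate all other cross-terms.
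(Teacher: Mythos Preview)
Your proof is correct and follows essentially the same approach as the paper: pick a satisfying labeling $\sigma$, take $\mat{S} = \{\vec{v}_{x,\sigma(x)}\}_{x \in X} \cup \{\vec{v}_{y,\sigma(y)}\}_{y \in Y}$, and verify pairwise orthogonality via the same case split, with the key nontrivial case $(x,y)\in E$ resolved by $\langle \overline{\vec{b}_j}, \vec{b}_j\rangle = 0$. Your write-up is slightly more explicit about why the block supports are disjoint in the trivial cases, but the argument is identical.
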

\begin{proof}
Let $\sigma: (X \uplus Y) \to \Sigma$ be an (optimal) labeling satisfying all the edges of $E$.
For each edge $ e=(x, y) \in E$, we have
$\langle \vec{v}_{x,\sigma(x)}, \vec{v}_{y,\sigma(y)} \rangle = 0$ since $ \pi_e(\sigma(x)) = \sigma(y) $.
We further have
$\langle \vec{v}_{x_1,\sigma(x_1)}, \vec{v}_{x_2,\sigma(x_2)} \rangle = 0$ for $x_1, x_2 \in X$
whenever $x_1 \neq x_2$,
$\langle \vec{v}_{y_1,\sigma(y_1)}, \vec{v}_{y_2,\sigma(y_2)} \rangle = 0$ for $y_1, y_2 \in Y$
whenever $y_1 \neq y_2$, and
$\langle \vec{v}_{x,\sigma(x)}, \vec{v}_{y, \sigma(y)} \rangle = 0$
for $x \in X, y \in Y$
whenever $(x,y) \not \in E$.
Hence, $K$ vectors in the set defined as
$\mat{S} \triangleq \{\vec{v}_{x,\sigma(x)} \mid x \in X \} \uplus \{\vec{v}_{y,\sigma(y)} \mid y \in Y\}$ 
are orthogonal to each other,
implying $\vol(\mat{S}) = 1$.
\end{proof}

\paragraph{Soundness.}
Different from \citet*{civril2013exponential},
we need to bound the volume of \emph{every} subset $\mat{S} \subseteq \mat{V}$.
We consider two cases:
\textbf{1.}~$0 \leq |\mat{S}| < \frac{4}{5} K$ and
\textbf{2.}~$\frac{4}{5}K \leq |\mat{S}| \leq N$.

\paragraph{Soundness 1.~$0 \leq |\mat{S}| < \frac{4}{5}K$.}
\begin{lemma}\label{lem:soundness-1}
Suppose $\val(\GG^{\otimes \ell}) < 2^{-\alpha \ell}$.
For any set $\mat{S}$ of less than $\frac{4}{5}K$ vectors from $\mat{V}$,
it holds that $\vol(\mat{S}) \leq 1$.
\end{lemma}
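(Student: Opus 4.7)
The plan is to observe that the bound $\vol(\mat{S}) \leq 1$ for \emph{any} subset $\mat{S} \subseteq \mat{V}$ is essentially a free consequence of the construction: every vector in $\mat{V}$ was designed to be normalized, so Hadamard-type monotonicity of the Gram--Schmidt volume will suffice. Neither the soundness assumption on $\val(\GG^{\otimes \ell})$ nor the cardinality condition $|\mat{S}| < \tfrac{4}{5}K$ is actually needed for this particular bound; they come into play only in Soundness~2, where one must squeeze out an exponentially small factor $2^{-\beta^{\circ} k}$.

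Concretely, I would order the vectors in $\mat{S}$ arbitrarily as $\vec{v}_{i_1}, \ldots, \vec{v}_{i_k}$ and apply the definition \eqref{eq:def-vol}:
\begin{align*}
\vol(\mat{S}) = \|\vec{v}_{i_1}\| \cdot \prod_{2 \leq j \leq k} \bigl\| \vec{v}_{i_j} - \proj_{\{\vec{v}_{i_1}, \ldots, \vec{v}_{i_{j-1}}\}}(\vec{v}_{i_j}) \bigr\|.
\end{align*}
Since orthogonal projection onto any subspace is a contraction, each factor in the product is bounded above by $\|\vec{v}_{i_j}\|$. By construction of the reduction in \cref{subsec:main-lemma}, each vector $\vec{v}_{x,i}$ and $\vec{v}_{y,i}$ consists of exactly $15^{\ell}$ blocks drawn from $\mat{B}^{(3\ell)}$, each of squared norm $1/15^{\ell}$, so $\|\vec{v}_{i_j}\| = 1$ for every $j$. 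Therefore every factor in the product is at most $1$, and we conclude $\vol(\mat{S}) \leq 1$.

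There is no real obstacle to this case; the only thing to be careful about is making clear that the bound uses nothing beyond unit-norm-ness, which is why the proof does not invoke either hypothesis of the lemma. The genuine work of the soundness direction is deferred to the next case, where the labeling-based combinatorial structure of $\GG^{\otimes \ell}$ together with \cref{lem:civril-union} will force a nontrivial fraction of the projection residuals to be strictly less than $1$, producing the exponential decay $2^{-\beta^{\circ} k}$.
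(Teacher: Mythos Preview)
Your proposal is correct and matches the paper's own proof, which is a one-liner: the bound $\vol(\mat{S}) \leq 1$ follows immediately from every vector in $\mat{V}$ being normalized. Your explicit unpacking via the Gram--Schmidt product and your observation that neither hypothesis is actually used are both accurate elaborations of the same idea.
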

\begin{proof}
The proof is a direct consequence of the fact that
every vector of $\mat{V}$ is normalized.
\end{proof}

\paragraph{Soundness 2.~$\frac{4}{5}K \leq |\mat{S}| \leq N$.}

For a set $\mat{S}$ of vectors from $\mat{V}$,
we use the following notations:
\begin{align*}
\mat{S}_X & \triangleq \{ \vec{v}_{x,i} \in \mat{S} \mid x \in X, i \in \Sigma \}, 
& \mat{S}_Y & \triangleq \{ \vec{v}_{y,i} \in \mat{S} \mid y \in Y, i \in \Sigma \}, \\
X(\mat{S}) & \triangleq \{ x \in X \mid \exists i \in \Sigma, \vec{v}_{x,i} \in \mat{S} \}, 
& Y(\mat{S}) & \triangleq \{ y \in Y \mid \exists i \in \Sigma, \vec{v}_{y,i} \in \mat{S} \}, \\
\rep(\mat{S}_X) & \triangleq |\mat{S}_X| - |X(\mat{S})|,
& \rep(\mat{S}_Y) & \triangleq |\mat{S}_Y| - |Y(\mat{S})|.
\end{align*}
Here, $\rep(\mat{S}_X)$ and $\rep(\mat{S}_Y)$ mean
how many times the same vertex appears (i.e., the number of repetitions) in
the vectors of $\mat{S}_X$ and $\mat{S}_Y$, respectively.
The following lemma given by \citet{civril2013exponential}
bounds the volume of $\mat{S}_X$ and $\mat{S}_Y$ in terms of
$\rep(\mat{S}_X)$ and $\rep(\mat{S}_Y)$, respectively,
which is proved in \cref{app:proofs} for the sake of completeness.
\begin{lemma}[\protect{\citealp[Lemma 16]{civril2013exponential}}]
\label{lem:civril-vol}
For any set $\mat{S}$ of vectors from $\mat{V}$,
it holds that
\begin{align*}
    \vol(\mat{S}_X) \leq \left(\frac{\sqrt{3}}{2}\right)^{\rep(\mat{S}_X)} \text{ and }
    \vol(\mat{S}_Y) \leq \left(\frac{\sqrt{3}}{2}\right)^{\rep(\mat{S}_Y)}.
\end{align*}
\end{lemma}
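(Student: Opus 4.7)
The plan is to factor $\vol(\mat{S}_X)$ across the vertices of $X(\mat{S})$ and then bound each per-vertex volume via \cref{lem:civril-union}. Since every nonzero block of $\vec{v}_{x,i}$ corresponds to an edge incident to $x$, two vectors $\vec{v}_{x_1,i_1}$ and $\vec{v}_{x_2,i_2}$ with $x_1 \neq x_2$ have disjoint nonzero blocks and are therefore orthogonal. Writing $\mat{S}_x \triangleq \{\vec{v}_{x,i} \in \mat{S}_X\}$, the families $\{\mat{S}_x\}_{x \in X(\mat{S})}$ lie in mutually orthogonal subspaces, and hence
\begin{align*}
    \vol(\mat{S}_X) = \prod_{x \in X(\mat{S})} \vol(\mat{S}_x).
\end{align*}

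Next, I would show that any two distinct vectors in $\mat{S}_x$ have inner product at least $\frac{1}{2}$. A short calculation using $\overline{\vec{b}_j} = 2^{-3\ell/2}\vec{1} - \vec{b}_j$ and \cref{lem:civril-vector}---specifically $\|\vec{b}_j\|=1$ (which forces $\langle\vec{1},\vec{b}_j\rangle = 2^{3\ell/2}$) together with $\langle \vec{b}_k,\vec{b}_m\rangle \in \{\frac{1}{2},1\}$---yields $\langle \overline{\vec{b}_k}, \overline{\vec{b}_m}\rangle = \langle \vec{b}_k,\vec{b}_m\rangle \geq \frac{1}{2}$ for all $k,m$. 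Summing block by block over the $15^\ell$ edges incident to $x$ then gives, for distinct labels $i_1 \neq i_2$,
\begin{align*}
    \langle \vec{v}_{x,i_1}, \vec{v}_{x,i_2} \rangle = \frac{1}{15^\ell}\sum_{e \text{ incident to } x} \langle \overline{\vec{b}_{\pi_e(i_1)}}, \overline{\vec{b}_{\pi_e(i_2)}} \rangle \geq \frac{1}{2}.
\end{align*}
In particular, for any two distinct unit vectors $\vec{v},\vec{v}' \in \mat{S}_x$ we have $\dis(\vec{v},\{\vec{v}'\})^2 = 1 - \langle\vec{v},\vec{v}'\rangle^2 \leq \frac{3}{4}$.

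Finally, I would order $\mat{S}_x = \{\vec{w}_1,\ldots,\vec{w}_{|\mat{S}_x|}\}$ arbitrarily and iterate \cref{lem:civril-union}. Because enlarging the projecting set can only shrink the distance, $\dis(\vec{w}_t, \{\vec{w}_1,\ldots,\vec{w}_{t-1}\}) \leq \dis(\vec{w}_t, \{\vec{w}_1\}) \leq \frac{\sqrt{3}}{2}$ for every $t \geq 2$, so
\begin{align*}
    \vol(\mat{S}_x) \leq \|\vec{w}_1\| \cdot \prod_{t=2}^{|\mat{S}_x|}\dis(\vec{w}_t,\{\vec{w}_1\}) \leq \left(\frac{\sqrt{3}}{2}\right)^{|\mat{S}_x|-1}.
\end{align*}
Multiplying over $x \in X(\mat{S})$ and using $\rep(\mat{S}_X) = \sum_{x \in X(\mat{S})}(|\mat{S}_x|-1)$ yields the stated bound for $\mat{S}_X$. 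The bound for $\mat{S}_Y$ is identical---indeed cleaner---since the third bullet of \cref{lem:civril-vector} gives $\langle \vec{v}_{y,i_1},\vec{v}_{y,i_2}\rangle = \frac{1}{2}$ exactly. The only delicate step is the within-vertex inner-product lower bound: a priori, if the constraints $\pi_e$ collapsed many labels the vectors $\vec{v}_{x,i}$ could become nearly parallel and spoil the geometric decay, but the calculation $\langle \overline{\vec{b}_k},\overline{\vec{b}_m}\rangle \geq \frac{1}{2}$ rules this out uniformly. Everything else then follows mechanically from the Union Lemma.
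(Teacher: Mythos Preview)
Your argument is correct and rests on the same two ingredients as the paper's proof: the Union Lemma (\cref{lem:civril-union}) and the fact that any two same-vertex vectors have inner product at least $\tfrac{1}{2}$, hence pairwise distance at most $\tfrac{\sqrt{3}}{2}$. The organization differs slightly: the paper applies the Union Lemma once globally, taking $\mat{Q}$ to contain one representative $\vec{v}_{x,i'}$ per vertex $x\in X(\mat{S})$ and $\mat{P}=\mat{S}_X\setminus\mat{Q}$, so that each $\vec{v}_{x,i}\in\mat{P}$ has $\dis(\vec{v}_{x,i},\mat{Q})\leq\dis(\vec{v}_{x,i},\{\vec{v}_{x,i'}\})\leq\tfrac{\sqrt{3}}{2}$ and $|\mat{P}|=\rep(\mat{S}_X)$. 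You instead first factor $\vol(\mat{S}_X)=\prod_x\vol(\mat{S}_x)$ via the cross-vertex orthogonality and then iterate the Union Lemma inside each $\mat{S}_x$. Your route makes the per-vertex structure explicit and, as a bonus, spells out the identity $\langle\overline{\vec{b}_k},\overline{\vec{b}_m}\rangle=\langle\vec{b}_k,\vec{b}_m\rangle$ that the paper uses implicitly; the paper's single global split is marginally shorter. Either way the exponent collapses to $\sum_x(|\mat{S}_x|-1)=\rep(\mat{S}_X)$.
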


We first show that
both $X(\mat{S})$ and $Y(\mat{S})$ contain $\Omega(k)$ vertices if their volume is sufficiently large.

\begin{claim}
\label{clm:S-bound}
For any set $\mat{S} $ of $k \geq \frac{4}{5} K$ vectors from $\mat{V}$,
if $\vol(\mat{S}) \geq 2^{-ck}$ for some number $c > 0$, then it holds that
\begin{align*}
|X(\mat{S})| > \left(\frac{3}{8}-10c\right)k \text{ and }
|Y(\mat{S})| > \left(\frac{3}{8}-10c\right)k.    
\end{align*}
\end{claim}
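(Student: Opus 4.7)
The plan is to split $\mat{S}$ into its $X$-side vectors $\mat{S}_X$ and $Y$-side vectors $\mat{S}_Y$, bound $\vol(\mat{S})$ by the product $\vol(\mat{S}_X)\cdot\vol(\mat{S}_Y)$, and then use \cref{lem:civril-vol} to convert the hypothesis $\vol(\mat{S}) \geq 2^{-ck}$ into an upper bound on the total repetition count $\rep(\mat{S}_X) + \rep(\mat{S}_Y)$. A simple counting argument using $|X| = |Y| = K/2$ will then isolate $|X(\mat{S})|$ and $|Y(\mat{S})|$ individually.

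For the factorization $\vol(\mat{S}) \leq \vol(\mat{S}_X)\cdot\vol(\mat{S}_Y)$, I would order the vectors of $\mat{S}$ in the product defining $\vol(\cdot)$ in \eqref{eq:def-vol} so that all of $\mat{S}_Y$ precedes all of $\mat{S}_X$. This makes $\vol(\mat{S})$ equal to $\vol(\mat{S}_Y)$ times the product, over $\vec{v} \in \mat{S}_X$, of $\dis(\vec{v}, \mat{S}_Y \cup \{\text{earlier } \mat{S}_X\text{-vectors}\})$. Since enlarging the set onto which we project can only shrink the orthogonal distance, each such factor is at most $\dis(\vec{v}, \{\text{earlier } \mat{S}_X\text{-vectors}\})$, and multiplying these gives exactly $\vol(\mat{S}_X)$.

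Combining this factorization with \cref{lem:civril-vol} yields
\begin{align*}
2^{-ck} \;\leq\; \vol(\mat{S}) \;\leq\; \left(\frac{\sqrt{3}}{2}\right)^{\rep(\mat{S}_X) + \rep(\mat{S}_Y)}.
\end{align*}
Taking base-$2$ logarithms and using $1 - \tfrac{1}{2}\log_2 3 > \tfrac{1}{5}$ then gives $\rep(\mat{S}_X) + \rep(\mat{S}_Y) < 5ck$. Since $|Y(\mat{S})| \leq |Y| = \tfrac{K}{2} \leq \tfrac{5k}{8}$ (using $k \geq \tfrac{4}{5}K$), I would conclude
\begin{align*}
|X(\mat{S})| \;=\; |\mat{S}_X| - \rep(\mat{S}_X) \;\geq\; k - |Y(\mat{S})| - \rep(\mat{S}_Y) - \rep(\mat{S}_X) \;>\; \left(\tfrac{3}{8} - 5c\right)k,
\end{align*}
which is strictly stronger than the claimed $(\tfrac{3}{8} - 10c)k$. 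The symmetric argument with the roles of $X$ and $Y$ swapped gives the same lower bound for $|Y(\mat{S})|$.

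The only step that needs real care is the volume factorization: although the two families $\mat{S}_X$ and $\mat{S}_Y$ are far from orthogonal in general (inner products of size $\tfrac{1}{2\cdot 15^\ell}$ are abundant, arising from unsatisfied edges), the monotonicity of $\dis(\vec{v}, \cdot)$ under enlargement of its second argument makes the factorization go through without invoking any orthogonality. Everything else reduces to taking logarithms and counting, so I do not anticipate further obstacles.
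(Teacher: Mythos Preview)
Your proof is correct and follows essentially the same strategy as the paper: invoke \cref{lem:civril-vol} to turn the volume lower bound into an upper bound on repetitions, then use $|Y(\mat{S})|\le|Y|=K/2\le\tfrac{5k}{8}$ in the counting identity $|X(\mat{S})|=k-|Y(\mat{S})|-\rep(\mat{S}_X)-\rep(\mat{S}_Y)$. The only difference is that the paper applies $\vol(\mat{S}_X)\ge\vol(\mat{S})$ and $\vol(\mat{S}_Y)\ge\vol(\mat{S})$ separately to get $\rep(\mat{S}_X)<5ck$ and $\rep(\mat{S}_Y)<5ck$ individually (hence the $10c$), whereas your factorization $\vol(\mat{S})\le\vol(\mat{S}_X)\vol(\mat{S}_Y)$ bounds the \emph{sum} $\rep(\mat{S}_X)+\rep(\mat{S}_Y)<5ck$ directly, buying you the sharper constant $5c$ in place of $10c$.
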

\begin{proof}
Observe first that $\vol(\mat{S}_X) \geq \vol(\mat{S}) \geq 2^{-ck}$.
By \cref{lem:civril-vol}, we have
$
    (\sqrt{3}/2)^{\rep(\mat{S}_X)}
    \geq \vol(\mat{S}_X) \geq 2^{-ck},
$
implying $ \rep(\mat{S}_X) \leq ck / \log_2(2/\sqrt{3}) < 5ck $. Similarly, we have $\rep(\mat{S}_Y) < 5ck$.
Using the facts that
$|\mat{S}_X| = |X(\mat{S})| + \rep(\mat{S}_X)$,
$|\mat{S}_Y| = |Y(\mat{S})| + \rep(\mat{S}_Y)$, and
$k = |\mat{S}_X| + |\mat{S}_Y|$,
we bound $|X(\mat{S})|$ from below as follows:
\begin{align*}
|X(\mat{S})|
& = k - |\mat{S}_Y| - \rep(\mat{S}_X) = k-|Y(\mat{S})|-\rep(\mat{S}_Y) - \rep(\mat{S}_X) \\
& > k - |Y| - 10ck = \left( 1 - \frac{(5n)^{\ell}}{k} - 10c \right) k \\
& \geq \left(\frac{3}{8} - 10c\right)k,
\end{align*}
where the last inequality follows from the fact that $k \geq \frac{4}{5}K$ and $K = 2 \cdot (5n)^{\ell}$.
Similarly, we have
$|Y(\mat{S})| > (\frac{3}{8} - 10c)k$.
\end{proof}

We now show that
no vector set has a volume close to $1$ if $\val(\GG^{\otimes \ell})$ is small.
\begin{lemma}\label{lem:soundness-2}
Suppose $\val(\GG^{\otimes \ell}) < 2^{-\alpha \ell}$.
For any set $\mat{S}$ of $k$ vectors from $\mat{V}$ with $k \geq \frac{4}{5}K$,
it holds that $\vol(\mat{S}) < 2^{-\beta^{\circ} k}$,
where $\beta^{\circ} = 10^{-10^{12.4}}$.
\end{lemma}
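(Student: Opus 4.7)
The plan is a contradiction argument: assume there is a set $\mat{S}$ of $k \geq \frac{4}{5} K$ vectors from $\mat{V}$ with $\vol(\mat{S}) \geq 2^{-\beta^{\circ} k}$, and extract from $\mat{S}$ a labeling $\sigma : (X \uplus Y) \to \Sigma$ of $\GG^{\otimes \ell}$ that satisfies strictly more than a $2^{-\alpha \ell}$ fraction of the edges, contradicting \cref{thm:inapprox-labelcover,thm:dinur-steurer}.

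First I would apply \cref{clm:S-bound} with $c = \beta^{\circ}$ to conclude $|X(\mat{S})|, |Y(\mat{S})| > (\tfrac{3}{8} - 10 \beta^{\circ}) k$, so both vertex sets have size linear in $k$. The \nameref{lem:civril-union} with $\mat{Q} = \mat{S}_X$ and $\mat{P} = \mat{S}_Y$, together with $\|\vec{v}\| = 1$ for every $\vec{v} \in \mat{V}$, yields $\vol(\mat{S}) \leq \vol(\mat{S}_X)$ and symmetrically $\vol(\mat{S}) \leq \vol(\mat{S}_Y)$. Combining these with \cref{lem:civril-vol} and the hypothesis forces $\rep(\mat{S}_X), \rep(\mat{S}_Y) \leq 5 \beta^{\circ} k$; in particular, all but an $O(\beta^{\circ} k)$-sized minority of vertices in $X(\mat{S}) \cup Y(\mat{S})$ carry a unique candidate label. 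I then define $\sigma$ by picking $\sigma(v)$ uniformly at random from $L(v) \triangleq \{i : \vec{v}_{v,i} \in \mat{S}\}$ for each $v \in X(\mat{S}) \cup Y(\mat{S})$, and arbitrarily otherwise.

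The technical heart is to relate $\vol(\mat{S})$ to the expected number of edges satisfied by $\sigma$. Using the \nameref{lem:civril-union} to decompose
\begin{align*}
\vol(\mat{S}) \leq \vol(\mat{S}_X) \cdot \prod_{\vec{v}_{y,j} \in \mat{S}_Y} \dis(\vec{v}_{y,j}, \mat{S}_X),
\end{align*}
I would bound each $\dis(\vec{v}_{y,j}, \mat{S}_X)$ block-by-block. Vectors $\vec{v}_{x,i}$ attached to distinct $x$-vertices have disjoint coordinate supports, so $\proj_{\mat{S}_X}(\vec{v}_{y,j})$ splits as one contribution per edge $(x,y) \in E$ with $x \in X(\mat{S})$; within each such block the inner product $\langle \vec{b}_{j}, \overline{\vec{b}_{\pi_{(x,y)}(i)}} \rangle$ vanishes iff the label pair $(i,j)$ satisfies the edge constraint, and equals $\tfrac{1}{2}$ otherwise. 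Hence every \emph{unsatisfied} edge contributes a quantum of order $\Theta(1/15^{\ell})$ to $\|\proj_{\mat{S}_X}(\vec{v}_{y,j})\|^{2}$, and the elementary estimate $\ln(1-x) \leq -x$ gives $\vol(\mat{S})^{2} \leq \exp(-F / (4 \cdot 15^{\ell}))$, where $F$ is the expected number of unsatisfied edges in $E[X(\mat{S}),Y(\mat{S})]$ induced by $\sigma$.

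Combining this with $\vol(\mat{S}) \geq 2^{-\beta^{\circ} k}$ forces $F \leq 8 \beta^{\circ} k \ln 2 \cdot 15^{\ell}$, whence $F / |E| \leq 16 \beta^{\circ} \ln 2 \cdot 7^{\ell}$ via $k \leq N = 2(35n)^{\ell}$. A counting argument using $k \geq \frac{4}{5} K$, $|X|+|Y|=K$, and $15^{\ell}$-regularity shows $|E[X(\mat{S}),Y(\mat{S})]| \geq (\tfrac{1}{5} - O(\beta^{\circ})) |E|$, so the expected satisfied fraction is at least $\tfrac{1}{5} - O(\beta^{\circ} \cdot 7^{\ell})$. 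Since $\log_{10}(7^{\ell}) \approx 0.85 \cdot 2 \cdot 10^{12} \ll 10^{12.4} = \log_{10}(1/\beta^{\circ})$, this comfortably exceeds $2^{-\alpha \ell} = 2^{-4}$, delivering the desired contradiction. The principal obstacle is the delicate constant bookkeeping: the per-edge penalty $1/15^{\ell}$ is exponentially small in $\ell$, so $\beta^{\circ}$ must be taken far below $\alpha / \ell$ to absorb the amplification factor $7^{\ell}$; moreover the block decomposition implicitly presumes a unique label per vertex, and handling the $O(\beta^{\circ} k)$ exceptional vertices requires the $(\sqrt{3}/2)^{\rep(\cdot)}$ slack from \cref{lem:civril-vol} together with averaging over the random choice of $\sigma$ to absorb the residual multi-label freedom.
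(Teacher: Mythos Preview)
Your contradiction strategy matches the paper's, but the quantitative core contains an error that breaks the final constant check. The per-edge contribution to $\|\proj_{\mat{S}_X}(\vec{v}_{y,j})\|^{2}$ is not $\Theta(1/15^{\ell})$ but $\frac{1}{4\cdot 15^{2\ell}}$: the inner product $\langle \vec{v}_{x,i},\vec{v}_{y,j}\rangle$ for an unsatisfied edge equals $\frac{1}{2\cdot 15^{\ell}}$ (both blocks carry a $15^{-\ell/2}$ scaling), and what enters the projection onto the orthonormal family $\{\vec{v}_{x,\sigma(x)}\}_{x}$ is its \emph{square}. The correct estimate is therefore $\vol(\mat{S})^{2}\leq \exp\bigl(-F/(4\cdot 15^{2\ell})\bigr)$, hence $F\leq 8\beta^{\circ}k\ln 2\cdot 15^{2\ell}$. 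Combined with your bound $k\leq N=2(35n)^{\ell}$ this yields $F/|E|\leq 16\beta^{\circ}\ln 2\cdot 105^{\ell}$, not $7^{\ell}$; since $\log_{10}(105^{\ell})\approx 4\cdot 10^{12}>10^{12.4}=\log_{10}(1/\beta^{\circ})$, the inequality you need fails for the stated $\beta^{\circ}$.

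The paper avoids this loss by first passing to the subset $\mat{P}\uplus\mat{Q}\subseteq\mat{S}$ containing exactly one vector per vertex of $X(\mat{S})\cup Y(\mat{S})$ (any deterministic choice --- the randomization of $\sigma$ is unnecessary). Since all vectors are unit, $\vol(\mat{S})\leq\vol(\mat{P}\uplus\mat{Q})$, and $\mat{Q}$ is genuinely orthonormal so the distance formula is exact. The crucial step is then to lower-bound $\sum_{x\in X(\mat{S})}U(x)$ directly as a multiple of $k\cdot 15^{\ell}$ via \cref{clm:S-bound} and inclusion--exclusion, which cancels one factor of $15^{\ell}$ against the $15^{2\ell}$ in the denominator and leaves an exponent of order $k/15^{\ell}$; this is precisely what makes $\beta^{\circ}=10^{-10^{12.4}}$ sufficient. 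Your route can be repaired the same way, or equivalently by observing that \cref{clm:S-bound} together with $|X(\mat{S})|\leq |X|=K/2$ already forces $k<\tfrac{4}{3}K$ under the hypothesis, so the crude estimate $k\leq N$ (which is what introduces the spurious extra $7^{\ell}$) should never be invoked.
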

\begin{proof}
The proof is by contradiction.
Suppose there exists a set $\mat{S}$ of
$k \geq \frac{4}{5}K$ vectors from $\mat{V}$ such that $\vol(\mat{S}) \geq 2^{-\beta^{\circ} k}$.

Consider a labeling $\sigma : (X \uplus Y) \to \Sigma$ defined as follows:
\begin{align}\label{eq:proof-label}
    \sigma(z) \triangleq
    \begin{cases}
    \text{any } i \text{ such that } \vec{v}_{z,i} \in \mat{S}_X & \text{if } z \in X(\mat{S}), \\
    \text{any } i \text{ such that } \vec{v}_{z,i} \in \mat{S}_Y & \text{if } z \in Y(\mat{S}), \\
    \text{any element of } \Sigma & \text{otherwise}.
    \end{cases}
\end{align}
The choice of $i$'s can be arbitrary.
Define $\mat{P} \triangleq \{ \vec{v}_{x,\sigma(x)} \mid x \in X(\mat{S}) \}$ and 
$\mat{Q} \triangleq \{ \vec{v}_{y,\sigma(y)} \mid y \in Y(\mat{S}) \}$.
Our aim is to show that 
the volume of $\mat{P} \uplus \mat{Q}\subseteq \mat{S}$ is sufficiently small.
To use \cref{lem:civril-union},
we bound the distance of the vectors of $\mat{P}$ to $\mat{Q}$.
Since $\mat{Q}$ forms an orthonormal basis by construction and
it holds that
\begin{align*}
    \|\vec{v}_{x,\sigma(x)}\|^2 = \|\proj_{\mat{Q}}(\vec{v}_{x,\sigma(x)})\|^2 + \dis(\vec{v}_{x,\sigma(x)}, \mat{Q})^2
\end{align*}
for each $x \in X(\mat{S})$, we have
\begin{align*}
    \dis(\vec{v}_{x,\sigma(x)}, \mat{Q}) = \sqrt{1 - \sum_{\vec{v}_{y,\sigma(y)} \in \mat{Q}} \langle \vec{v}_{x,\sigma(x)}, \vec{v}_{y,\sigma(y)} \rangle^2}.
\end{align*}
If an edge $(x,y) \in E$ between $X(\mat{S})$ and $Y(\mat{S})$ is not satisfied by $\sigma$; i.e., $\pi_{(x,y)}(\sigma(x)) \neq \sigma(y)$,
then we have
\begin{align*}
    \langle \vec{v}_{x,\sigma(x)}, \vec{v}_{y,\sigma(y)} \rangle
    = \left\langle \frac{\overline{\vec{b}_{\pi_{(x,y)}(\sigma(x))}}}{15^{\ell/2}}, \frac{\vec{b}_{\sigma(y)}}{15^{\ell/2}} \right\rangle
    = \frac{1}{2 \cdot 15^{\ell}}.
\end{align*}
Consequently, we obtain
\begin{align*}
    \dis(\vec{v}_{x,\sigma(x)}, \mat{Q}) = \left(1 - \frac{U(x)}{4 \cdot 15^{2\ell}} \right)^{\frac{1}{2}},
\end{align*}
where $U(x)$ is defined as the number of unsatisfied edges between $x$ and $Y(\mat{S})$.
Using \cref{lem:civril-union} and the fact that 
$\vol(\mat{Q}) \leq 1$, we have
\begin{align}
    \vol(\mat{P} \uplus \mat{Q})
    & \leq \vol(\mat{Q}) \cdot \prod_{x \in X(\mat{S})} \dis(\vec{v}_{x,\sigma(x)}, \mat{Q}) \notag \\
    & \leq \left(\prod_{x \in X(\mat{S})}
    \left(1-\frac{U(x)}{4 \cdot 15^{2 \ell}} \right) \right)^{\frac{1}{2}} \notag \\
    & \leq 
    \left(\frac{1}{|X(\mat{S})|} \sum_{x \in X(\mat{S})} \left(1-\frac{U(x)}{4 \cdot 15^{2\ell}}\right) \right)^{\frac{|X(\mat{S})|}{2}}, \label{eq:vol-unsat}
\end{align}
where the last inequality is by the AM--GM inequality.

Now consider bounding $\sum_{x \in X(\mat{S})} U(x)$ from below, which is equal to
the total number of unsatisfied edges
between $X(\mat{S})$ and $Y(\mat{S})$ by $\sigma$.
Substituting $\beta^{\circ}$ for $c$ in \cref{clm:S-bound} derives
$|X(\mat{S})| > (\frac{3}{8}-10\beta^{\circ})k$ and
$|Y(\mat{S})| > (\frac{3}{8}-10\beta^{\circ})k$.
Because less than $2^{-\alpha \ell}$-fraction of edges in $E$ can be satisfied by
any labeling (including $\sigma$) by assumption, and
more than $( \frac{3}{8}-10 \beta^{\circ} )k \cdot 15^{\ell} $ edges are incident to $X(\mat{S})$ (resp.~$Y(\mat{S})$),
the number of unsatisfied edges \emph{incident to} $X(\mat{S})$ (resp.~$Y(\mat{S})$) is at least 
\begin{align}\label{eq:unsat-incident}
\left( \frac{3}{8} - 10\beta^{\circ} \right)k\cdot 15^\ell - 2^{-\alpha \ell}\cdot (75n)^\ell = 
    \left[ \left(\frac{3}{8}-10\beta^{\circ}\right)\frac{k}{(5n)^{\ell}} - 2^{-\alpha \ell} \right] (75n)^{\ell}.
\end{align}
Consequently, the number of unsatisfied edges
\emph{between} $X(\mat{S})$ and $Y(\mat{S})$ is at least
twice \cref{eq:unsat-incident} minus
``the number of unsatisfied edges incident to $X(\mat{S})$ \emph{or} $Y(\mat{S})$'' (which is at most $(75n)^{\ell}$); namely, we have
\begin{align}\label{eq:unsat-total}
    \sum_{x \in X(\mat{S})} U(x)
    & \geq 2 \cdot (\text{value of \cref{eq:unsat-incident}}) - (75n)^\ell \\
    & = \left[ \left( \frac{3}{8} - 10\beta^{\circ} \right)\frac{2k}{(5n)^\ell} - 1 -2^{-\alpha \ell+1} \right](75n)^\ell \\
    & \geq \left[\left(\frac{1}{16}-10\beta^{\circ}\right)\frac{2k}{(5n)^{\ell}} - 2^{-\alpha \ell+1} \right](75n)^{\ell},
\end{align}
where we have used the fact that $k \geq \frac{4}{5}K$ and $K = 2 \cdot (5n)^{\ell}$.
With this inequality,
we further expand \cref{eq:vol-unsat} as
\begin{align*}
    \vol(\mat{P} \uplus \mat{Q}) & \leq \left(1-\frac{\sum_{x \in X(\mat{S})} U(x)}{|X(\mat{S})|}\frac{1}{4 \cdot 15^{2\ell}} \right)^{\frac{|X(\mat{S})|}{2}} \\
    & \leq \exp\left(-
    \frac{
    \left[ \left(\frac{1}{16}-10 \beta^{\circ}\right) \frac{2k}{(5n)^{\ell}} - 2^{-\alpha \ell + 1} \right](75n)^{\ell}
    }{|X(\mat{S})|}
    \frac{|X(\mat{S})|}{8 \cdot 15^{2 \ell}}
    \right)\\
    & \leq \exp\left(-\left[
    \left(\frac{1}{16} - 10 \beta^{\circ} \right)\frac{1}{4 \cdot 15^{\ell}} - 2^{-\alpha \ell + 1} \frac{5}{64 \cdot 15^{\ell}}
    \right]k\right) \\
    & = \exp\left(-\frac{1-160\beta^{\circ} -5 \cdot 2^{-\alpha \ell + 1}}{64 \cdot 15^{\ell}}k\right).
\end{align*}

Since $\beta^{\circ} = 10^{-10^{12.4}} < \frac{1-5 \cdot 2^{-\alpha \ell + 1}}{64 \cdot 15^{\ell} \cdot \log_{\rme}(2) + 160}$
(recall that $\ell = \lceil \frac{4}{\alpha} \rceil$) and $\beta^{\circ} > 0$,
we finally have
$\vol(\mat{S}) \leq \vol(\mat{P} \uplus \mat{Q}) < 2^{-\beta^{\circ}k}$, a contradiction.
\end{proof}

\begin{proof}[Proof of \nameref{lem:main}]
Let $\GG^{\otimes \ell}$ be the $\ell$-fold parallel repetition
of a special projection game, and
let $\mat{V}$ be an instance of \volmax
reduced from $\GG^{\otimes \ell}$ according to the procedure described in the beginning of this subsection.
Then,
the completeness follows from \cref{lem:completeness} and
the soundness follows from \cref{lem:soundness-1,lem:soundness-2}.
\end{proof}

\section{Constant-Factor Inapproximability for Log-Determinant Maximization}\label{sec:log}

Here, we present constant-factor inapproximability results for log-determinant maximization.
Let us begin with the definition of two optimization problems:
Given a positive semi-definite matrix $\mat{A}$ in $\bbQ^{n \times n}$, 
\emph{log-determinant maximization} (\logdetmax) requests to find a subset $S \subseteq [n]$ that maximizes $\log \det(\mat{A}_S)$, and
\emph{size-constrained log-determinant maximization} ($k$-\logdetmax) requests to find a size-$k$ subset $S \in {[n] \choose k}$ that maximizes $\log\det(\mat{A}_S)$.
Denote $\maxdet_k(\mat{A}) \triangleq \max_{S \in {[n] \choose k}} \det(\mat{A}_S)$ and
$\maxvol_k(\mat{V}) \triangleq \max_{\mat{S} \in {\mat{V} \choose k}} \vol(\mat{S})$ for a vector set $\mat{V}$.

Our first result is a $\frac{5}{4}$-factor inapproximability for \logdetmax,
which is an immediate consequence of the proof of \cref{thm:subdetmax-inapprox}.
\begin{theorem}
\label{thm:log:inapprox}
It is \NP-hard to approximate \logdetmax within a factor of $\frac{5}{4}$.
\end{theorem}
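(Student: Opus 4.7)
The plan is to reuse the gap instance from the proof of \cref{thm:subdetmax-inapprox} without any modification, and to observe that the volume gap it already establishes, when rewritten as a log-determinant gap, yields exactly the ratio $\frac{5}{4}$. Recall that that argument constructs, in polynomial time from the $\ell$-fold parallel repetition of a special projection game, a scaled vector set $\mat{W} = \{\vec{w}_1, \ldots, \vec{w}_N\}$ whose Gram matrix (which we also denote $\mat{W}$) satisfies $\maxdet(\mat{W}) \geq 2^{2\beta^{\circ} K}$ in the completeness case and $\maxdet(\mat{W}) < 2^{\frac{8}{5}\beta^{\circ} K}$ in the soundness case, where $K = N/7^{\ell}$ and $\beta^{\circ} > 0$. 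These inequalities are precisely the squared versions of the $2^{\beta^{\circ} K}$ versus $2^{\frac{4}{5}\beta^{\circ} K}$ volume bounds proven there, via the identity $\det(\mat{W}_S) = \vol(\{\vec{w}_i\}_{i \in S})^2$.

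Taking the logarithm immediately converts this into a log-determinant gap of $2\beta^{\circ} K \log 2$ versus $\frac{8}{5}\beta^{\circ} K \log 2$, whose ratio equals $\frac{2}{8/5} = \frac{5}{4}$. Since \cref{thm:inapprox-labelcover,thm:dinur-steurer} guarantee that distinguishing the completeness and soundness cases for $\GG^{\otimes \ell}$ is \NP-hard, we conclude that \logdetmax is \NP-hard to approximate within any factor strictly smaller than $\frac{5}{4}$.

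The only point requiring care is positivity, since a multiplicative approximation factor is meaningful only when both sides of the gap are strictly positive. In the completeness case this is immediate because $2\beta^{\circ} K \log 2 > 0$; in the soundness case, $\max_{S \subseteq [N]} \log \det(\mat{W}_S)$ is trivially at least $\log\det(\mat{W}_{\emptyset}) = 0$ and strictly below the positive number $\frac{8}{5}\beta^{\circ} K \log 2$, so a genuine multiplicative gap survives. There is essentially no obstacle beyond this bookkeeping: the factor $\frac{5}{4}$ is intrinsic to the $1 : \frac{4}{5}$ split already baked into the completeness and soundness thresholds of \cref{lem:main}, and the move from \subdetmax to \logdetmax simply squares the exponents (via $\det = \vol^2$) and then takes the logarithm, which preserves this ratio.
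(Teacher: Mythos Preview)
Your proposal is correct and follows essentially the same route as the paper: both take the gap instance $\mat{W}$ built in the proof of \cref{thm:subdetmax-inapprox}, pass to $\log\det$, and observe that the completeness-to-soundness ratio $\frac{\ccc}{\sss} = \frac{2\beta^{\circ}/7^{\ell}}{(8/5)\beta^{\circ}/7^{\ell}} = \frac{5}{4}$, which is exactly the paper's one-line argument. Your extra discussion of positivity is sound bookkeeping, and note that the gap in fact rules out a $\frac{5}{4}$-approximation itself (not just strictly smaller factors), since the soundness inequality is strict.
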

\begin{proof}
The proof of \cref{thm:subdetmax-inapprox} implies that determining whether
$\log_2 \maxdet(\mat{A}) \geq \lambda_c n$ or
$\log_2 \maxdet(\mat{A}) < \lambda_s n$
is \NP-hard.
Observing that $\frac{\lambda_c}{\lambda_s} = \frac{5}{4}$ is sufficient to complete the proof.
\end{proof}

Our next result states that $k$-\logdetmax is inapproximable within a constant factor even if the log-determinant function is nonnegative, monotone, and submodular.
\begin{theorem}
\label{thm:log:size-inapprox}
It is \NP-hard to approximate $k$-\logdetmax within a factor of $1 + 10^{-10^{13}}$,
even if the minimum eigenvalue of an input matrix $\mat{A}$ is at least $1$; namely,
the set function $f(S) \triangleq \log \det(\mat{A}_S)$ is nonnegative, monotone, and submodular.
\end{theorem}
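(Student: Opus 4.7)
The plan is to reduce the $\ell$-fold parallel repetition $\GG^{\otimes \ell}$ of a special projection game to $k$-\logdetmax through the same vector construction used in the proof of \cref{thm:subdetmax-inapprox}, but forming the input matrix by adding the identity to the Gram matrix so that the log-determinant is nonnegative, monotone, and submodular. Concretely, let $\mat{V} = \{\vec{v}_1, \ldots, \vec{v}_N\}$ be the \volmax instance produced by \nameref{lem:main}, where every $\vec{v}_i$ is a unit vector and $K = N/7^{\ell}$. Let $\mat{B}$ denote the Gram matrix of $\mat{V}$, and take as input $\mat{A} \triangleq \mat{B} + \mat{I}_N$: because $\mat{B}$ is positive semi-definite, every principal submatrix $\mat{A}_S$ has minimum eigenvalue at least $1$, so $f(S) = \log\det(\mat{A}_S)$ is nonnegative, monotone, and submodular. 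Set the cardinality parameter to $k := K$.

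In the completeness case $\val(\GG^{\otimes \ell}) = 1$, the labeling-induced $K$-subset from \nameref{lem:main} is orthonormal, so $\mat{B}_S = \mat{I}_K$, $\mat{A}_S = 2\mat{I}_K$, and $\log\det(\mat{A}_S) = K \ln 2$. In the soundness case $\val(\GG^{\otimes \ell}) < 2^{-\alpha \ell}$, consider any size-$K$ subset $S$ and let $\lambda_1,\ldots,\lambda_K$ be the eigenvalues of $\mat{B}_S$. Since the $\vec{v}_i$ are unit vectors, $\sum_i \lambda_i = K$, and by \nameref{lem:main}, $\prod_i \lambda_i = \det(\mat{B}_S) = \vol(\{\vec{v}_i\}_{i \in S})^2 < 2^{-2\beta^{\circ} K}$. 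The aim is then to upper-bound $\log\det(\mat{A}_S) = \sum_i \ln(1+\lambda_i)$ subject to these two constraints.

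A Lagrange-multiplier calculation shows that at the constrained maximum the $\lambda_i$ take at most two distinct values, $1-t$ with multiplicity $\alpha K$ and $1 + \frac{\alpha t}{1-\alpha}$ with multiplicity $(1-\alpha)K$, for some $\alpha \in (0,1)$ and $t \geq 0$. Expanding the product constraint to second order in $t$ gives $t^2 = 4\beta^{\circ}\ln 2 \cdot (1-\alpha)/\alpha$, and the same expansion applied to the objective yields
\[
\sum_{i=1}^{K} \ln(1+\lambda_i) \;\leq\; K\ln 2 \;-\; \tfrac{1}{2} K \beta^{\circ}\ln 2,
\]
notably independent of the free parameter $\alpha$. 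The ratio of the completeness value to the soundness value is therefore at least $1/(1-\beta^{\circ}/2) \geq 1 + \beta^{\circ}/2$, which comfortably exceeds $1 + 10^{-10^{13}}$ since $\beta^{\circ} = 10^{-10^{12.4}}$.

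The principal technical obstacle is the \emph{rank-deficient} case, where $\det(\mat{B}_S) = 0$, the product constraint is vacuously satisfied, and the Lagrange argument above does not apply; naively one could imagine adversarial subsets with one eigenvalue near zero and the rest near $K/(K-1)$, which would yield $\log\det(\mat{A}_S) \approx K \ln 2 - O(1)$ and shrink the ratio to $1 + O(1/K)$. I would rule this out by showing that the full vector family $\mat{V}$ is linearly independent, so every $K$-subset automatically has full rank. This reduces, block-by-block, to the following: on each edge $e = (x,y) \in E$, a putative relation $\sum_{x,i} c_{x,i}\vec{v}_{x,i} + \sum_{y,j} c_{y,j}\vec{v}_{y,j} = 0$ restricts to $\sum_i c_{x,i}\overline{\vec{b}_{\pi_e(i)}} + \sum_j c_{y,j}\vec{b}_j = 0$; substituting $\overline{\vec{b}_j} = 2^{-3\ell/2}\vec{1} - \vec{b}_j$ and invoking the linear independence of $\vec{1}$ together with $\{\vec{b}_j\}_j$ (a consequence of the inner-product structure in \cref{lem:civril-vector}) forces $\sum_i c_{x,i} = 0$ and $c_{y,j} = \sum_{i:\pi_e(i) = j} c_{x,i}$, and iterating these identities across the incidences of the bipartite graph $(X,Y,E)$ makes all coefficients vanish, completing the soundness analysis and hence the theorem.
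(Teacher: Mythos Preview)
Your approach differs from the paper's: you form $\mat{A} = \mat{B} + \mat{I}$ where $\mat{B}$ is the Gram matrix of the original \nameref{lem:main} vectors, whereas the paper first \emph{modifies} the vectors by appending a one-hot coordinate per vertex--label pair so that the Gram matrix itself has minimum eigenvalue $\geq 1/(15^\ell+1)$, and then scales multiplicatively by $15^\ell+1$. Under that scheme $\log\det(\mat{A}_S) = K\log(15^\ell+1) + 2\log\vol(\mat{S})$, so the $\Theta(\beta^{\circ} K)$ volume gap from soundness transfers verbatim to a $\Theta(\beta^{\circ} K)$ log-det gap and the ratio is a fixed constant above $1$.

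Your additive shift breaks this, and the soundness bound you claim does not follow from the constraints you impose. The relaxed problem ``maximize $\sum_i\ln(1+\lambda_i)$ subject to $\sum_i\lambda_i=K$, $\lambda_i\geq 0$, $\prod_i\lambda_i\leq c$'' has optimum strictly larger than $K\ln 2 - \tfrac12 K\beta^{\circ}\ln 2$ once $K \gg 1/\beta^{\circ}$ (which happens for all large enough $n$). Your two-value Lagrange computation requires $t$ small, but $t^2 = 4\beta^{\circ}\ln 2\,(1-\alpha)/\alpha$ blows up as $\alpha\to 0$, so the ``independent of $\alpha$'' conclusion is an artifact of the second-order truncation. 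Taking $\alpha = 1/K$ (one eigenvalue $\approx c/e$, the remaining $K-1$ equal to $K/(K-1)$) is feasible and gives $\sum_i\ln(1+\lambda_i) \approx K\ln 2 - (\ln 2 - \tfrac12)$, a mere \emph{constant} deficit; the completeness/soundness ratio then collapses to $1+O(1/K)$.

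You anticipate exactly this danger, but the proposed fix---linear independence of $\mat{V}$---is both insufficient and, in general, false. Insufficient, because ruling out $\lambda_i=0$ does not rule out $\lambda_i$ exponentially small, and the supremum over $\lambda_i>0$ is unchanged. False, because (as the paper explicitly notes at the start of its own proof) whenever some vertex $x$ admits two labels $i_1\neq i_2$ with $\pi_e(i_1)=\pi_e(i_2)$ for every incident edge $e$, one has $\vec{v}_{x,i_1}=\vec{v}_{x,i_2}$; in your sketched argument, taking $c_{x,i_1}=1$, $c_{x,i_2}=-1$ and all other coefficients zero satisfies every identity you derived without vanishing. What the argument actually needs is a uniform lower bound on $\lambda_{\min}(\mat{B})$ independent of $n$---precisely what the paper's appended one-hot coordinates supply and what the bare trace and determinant constraints cannot.
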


The following lemma is essential for proving \cref{thm:log:size-inapprox}.
\begin{lemma}
\label{lem:log:main}
There is a polynomial-time reduction from the $\ell$-fold parallel repetition $\GG^{\otimes \ell}$ of a special projection game to a set $\mat{V} = \{\vec{v}_1, \ldots, \vec{v}_N\}$ of $N$ vectors such that
$N = 2\cdot (35n)^\ell$ for some integer $n$,
each vector of $\mat{V}$ is normalized, and the following conditions are satisfied:
\begin{itemize}
    \item (Completeness) If $\val(\GG^{\otimes \ell}) = 1$, then there exists a set $\mat{S}$ of $K$ vectors from $\mat{V}$ such that
    $\vol(\mat{S}) = 1$, where $K = \frac{N}{7^\ell}$.
    \item (Soundness) If $\val(\GG^{\otimes \ell}) < 2^{-\alpha \ell}$, then any set $\mat{S}$ of $K$ vectors from $\mat{V}$ satisfies that $\vol(\mat{S}) < 2^{-\beta^{\circ} K}$, where
    $\beta^{\circ} = 10^{-10^{12.4}}$.
    \item (Minimum eigenvalue) The minimum eigenvalue of the Gram matrix defined from $\mat{V}$ is at least $\frac{1}{15^\ell+1}$.
\end{itemize}
\end{lemma}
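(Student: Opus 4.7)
The plan is to start from the vector set $\mat{V}^{(0)} = \{\vec{v}^{(0)}_1, \ldots, \vec{v}^{(0)}_N\}$ produced by the reduction of \nameref{lem:main} and augment each vector with a small orthogonal ``tag'' coordinate so that the resulting Gram matrix has a guaranteed positive minimum eigenvalue while the completeness/soundness gap is essentially preserved. Concretely, I would set $c^2 \triangleq 1/15^\ell$ and define
\[
\vec{v}_i \triangleq \tfrac{1}{\sqrt{1+c^2}}\bigl(\vec{v}^{(0)}_i,\; c\,\vec{e}_i\bigr) \in \bbR^{d+N},
\]
where $\vec{e}_i$ is the $i$-th standard basis vector of $\bbR^N$. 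Each $\vec{v}_i$ then has unit norm, and the new Gram matrix is $\mat{A} = (\mat{A}^{(0)} + c^2\mat{I})/(1+c^2)$, whose minimum eigenvalue is at least $c^2/(1+c^2) = 1/(15^\ell+1)$ because $\mat{A}^{(0)} \succeq 0$. Completeness is inherited for free: if $\val(\GG^{\otimes \ell}) = 1$, the orthonormal $K$-subset $\mat{S}^{(0)} \subseteq \mat{V}^{(0)}$ furnished by \nameref{lem:main} corresponds to a set in $\mat{V}$ whose Gram submatrix is $(\mat{I} + c^2\mat{I})/(1+c^2) = \mat{I}$, so $\vol(\mat{S}) = 1$.

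The principal challenge is soundness, because perturbing by $c^2\mat{I}$ could in principle inflate the determinant of a $K \times K$ principal submatrix. To control this, I expand
\[
\det(\mat{A}^{(0)}_S + c^2 \mat{I}_K) \;=\; \sum_{T \subseteq S} c^{2(K-|T|)}\det(\mat{A}^{(0)}_T)
\]
via the characteristic-polynomial identity and apply the two regimes of \nameref{lem:main} to each principal minor $\det(\mat{A}^{(0)}_T)$. Subsets with $|T| < \tfrac{4}{5}K$ give only the trivial $\det(\mat{A}^{(0)}_T) \leq 1$, but are paired with $c^{2(K-|T|)} \leq 15^{-\ell K/5}$, which dominates even the $2^K$ binomial coefficient sum and contributes a quantity vastly smaller than $2^{-\beta^{\circ} K}$ because $\ell$ is astronomical compared to $\beta^{\circ}$. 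Subsets with $|T| \geq \tfrac{4}{5}K$ satisfy $\det(\mat{A}^{(0)}_T) < 2^{-2\beta^{\circ}|T|}$, and factoring $2^{-2\beta^{\circ} K}$ out of the binomial sum yields
\[
\sum_{|T|\geq \frac{4}{5}K} c^{2(K-|T|)}\det(\mat{A}^{(0)}_T) \;<\; 2^{-2\beta^{\circ} K}\,(1 + c^2 \cdot 2^{2\beta^{\circ}})^K.
\]
Dividing by $(1+c^2)^K$ and using $(1+c^2\cdot 2^{2\beta^{\circ}})/(1+c^2) \leq 1 + O(c^2 \beta^{\circ})$ shows that the ratio is at most $2^{-2\beta^{\circ} K(1 - O(c^2))}$. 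With $c^2 = 15^{-\ell}$ negligibly small, this gives $\det(\mat{A}_S) < 2^{-(2-o(1))\beta^{\circ} K}$, hence $\vol(\mat{S}) < 2^{-\beta^{\circ} K}$ as required; any $O(c^2)$ loss in the exponent is absorbed by choosing an infinitesimally sharper constant in \nameref{lem:main} (its proof leaves slack in the numerical choice of $\beta^{\circ}$).

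I anticipate that the main obstacle is precisely this soundness bookkeeping: a naive eigenvalue bound $\det(\mat{A}^{(0)}_S + c^2\mat{I}) \leq \prod_j (\mu_j + c^2)$ fails for worst-case eigenvalue profiles such as $(2^{-\Theta(\beta^{\circ} K)}, 1, \ldots, 1)$---which are fully consistent with \nameref{lem:main}---because the smallest perturbed factor jumps from $2^{-\Theta(\beta^{\circ} K)}$ up to $\Theta(c^2) = \Theta(15^{-\ell})$, inflating the determinant by a factor that grows without bound in $K$. The characteristic-polynomial expansion sidesteps this by drawing on the Main Lemma's control over \emph{every} principal minor $\det(\mat{A}^{(0)}_T)$, not merely the top-level determinant, and the $\tfrac{4}{5}K$ threshold is exactly what cleanly separates the trivial small-minor regime from the soundness-governed large-minor regime.
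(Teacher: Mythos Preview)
Your construction is identical to the paper's: both append a unit ``tag'' coordinate to each $\vec{v}^{(0)}_i$ and renormalize, so that the new Gram matrix is exactly $\mat{A} = (\mat{A}^{(0)} + c^2\mat{I})/(1+c^2)$ with $c^2 = 15^{-\ell}$. Completeness and the eigenvalue bound are argued the same way (the paper's Rayleigh-quotient computation is just your $\mat{A}^{(0)}\succeq 0$ observation written out in coordinates). The genuine difference is in the soundness proof. The paper re-runs the entire combinatorial argument of \nameref{lem:main} on the perturbed vectors, proving analogues of \cref{lem:civril-vol}, \cref{clm:S-bound}, and \cref{lem:soundness-2} with marginally looser constants ($\sqrt{3.01}/2$ in place of $\sqrt{3}/2$, $4\cdot(15^\ell+1)^2$ in place of $4\cdot 15^{2\ell}$, and so on). Your route is more modular: you treat \nameref{lem:main} as a black box, expand $\det(\mat{A}^{(0)}_S + c^2\mat{I})$ via the principal-minor identity, and invoke the Main Lemma's control over \emph{every} minor size (Soundness~1 for $|T|<\tfrac{4}{5}K$, Soundness~2 for $|T|\geq \tfrac{4}{5}K$). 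This is correct and arguably cleaner; it exploits precisely the feature of \nameref{lem:main} that was needed for the \emph{unconstrained} \cref{thm:subdetmax-inapprox}, whereas the paper's rewrite does not reuse that work. The price you pay is the appeal to numerical slack in $\beta^{\circ}$ to absorb the $(1-O(c^2))$ loss, which is legitimate (the final inequality in the proof of \cref{lem:soundness-2} is strict with room to spare) but a touch less self-contained than the paper's explicit recomputation.
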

By \cref{lem:log:main}, we can prove \cref{thm:log:size-inapprox} as follows:

\begin{proof}[Proof of \cref{thm:log:size-inapprox}]
Let $\mat{V}$ be a set of $N$ vectors
reduced from the $\ell$-fold parallel repetition $\GG^{\otimes \ell}$ by \cref{lem:log:main}.
Create a new vector set
$\mat{W} = \{\vec{w}_1, \ldots, \vec{w}_N\}$, where
$\vec{w}_i = (15^\ell+1)^{1/2} \cdot \vec{v}_i$ for each $i \in [N]$.
If $\val(\GG^{\otimes \ell}) = 1$, then there exists a set $\mat{S}$ of $K = N/7^{\ell}$ vectors from $\mat{W}$ such that
$\vol(\mat{S}) = (15^\ell+1)^{K/2}$.
On the other hand, if $\val(\GG^{\otimes \ell}) < 2^{-\alpha \ell}$,
then $\maxvol_k(\mat{W})$ is (strictly) bounded from above by
$2^{-\beta^{\circ}K} \cdot (15^\ell+1)^{K/2}$.
By \cref{thm:dinur-steurer,thm:inapprox-labelcover},
it is \NP-hard to decide whether 
$\maxvol_k(\mat{W}) \geq (15^\ell+1)^{K/2}$ or
$\maxvol_k(\mat{W}) < 2^{-\beta^{\circ}K} \cdot (15^\ell+1)^{K/2}$.

Consider a Gram matrix $\mat{A}$ derived from $\mat{W}$ and
a submodular set function $f: 2^{[N]} \to \bbR$ such that
$f(S) \triangleq \log_2\det(\mat{A}_S)$ for each $S \subseteq [N]$.
Since $\mat{A}$ has a minimum eigenvalue at least $1$ by assumption,
$f$ is monotone \citep[Proposition 2]{sharma2015greedy}, and thus is nonnegative.
Since it is \NP-hard to decide whether
$\log_2 \maxdet_k(\mat{A}) \geq \log_2((15^\ell+1)^K)$ or
$\log_2 \maxdet_k(\mat{A}) < \log_2(2^{-2\beta^{\circ}K }\cdot (15^\ell+1)^K)$,
it is also \NP-hard to approximate $k$-\logdetmax within a factor of 
\begin{align*}
    \frac{\log_2 ((15^\ell+1)^K)}{\log_2 (2^{-2\beta^{\circ}K}\cdot (15^\ell+1)^K)} = \frac{\log_2 (15^\ell+1)}{\log_2(15^\ell+1) - 2\beta^{\circ}} > 1+10^{-10^{13}}.
\end{align*}
This completes the proof.
\end{proof}

The remainder of this section is devoted to the proof of \cref{lem:log:main}.
\subsection{Proof of \cref{lem:log:main}}

Again, the proof of \cref{lem:log:main} relies on a reduction from the $\ell$-fold parallel repetition of a special projection game.
One might think that the vector set $\mat{V}$ introduced in
\cref{subsec:main-lemma} can be used as it is; however, we have the following issue:
Suppose there exist a vertex $x \in X$ and two labels $i_1,i_2 \in \Sigma$ such that
$\pi_e(i_1) = \pi_e(i_2)$ for every edge $e$ incident to $x$.
Then, we have $\vec{v}_{x,i_i} = \vec{v}_{x,i_2}$; i.e., $\mat{V}$ is \emph{linearly dependent}.
It then turns out that for a Gram matrix $\mat{A}$ defined from $\mat{V}$,
$\det(\mat{A}_{\emptyset}) = 1$ and
$\det(\mat{A}) = 0$, implying that
the set function $f_c(S) \triangleq c \log \det(\mat{A}_S) $ is never nonnegative nor monotone for any $c>0$.

To circumvent this issue, we ``modify'' the reduction as follows.
Let $\GG^{\otimes \ell} = (X,Y,E,\Sigma,\Pi)$ be
the $\ell$-fold parallel repetition of a special projection game.
Recall that
$(X,Y,E)$ is a $15^{\ell}$-regular bipartite graph, where
$|X|=|Y|=(5n)^{\ell}$,
$|E|=(75n)^{\ell}$, and
$|\Sigma|=7^{\ell}$ for some integer $n$.
Assume that $\Sigma = [7^{\ell}]$.

For each pair of a vertex of $X \uplus Y$ and a label of $\Sigma$,
we define a vector as follows.
Each vector consists of $|E|$ blocks, each of which is $2^{3\ell+1}$-dimensional and is either a vector in the set
$\mat{B}^{(3\ell)} = \{ \vec{b}_1, \ldots, \vec{b}_{2^{3\ell}} \}$ or the zero vector $\vec{0}$,
and consists of $|X \uplus Y|\cdot |\Sigma|$ entries, each of which is a scalar.
Let $\vec{v}_{x,i}$ (resp.~$\vec{v}_{y,i}$)
denote the vector for a pair $(x, i) \in X \times \Sigma$
(resp.~$(y,i) \in Y \times \Sigma$), and
let $ \vec{v}_{x,i}(e)$ (resp.~$\vec{v}_{y,i}(e)$) denote
the block of $\vec{v}_{x,i}$ (resp.~$\vec{v}_{y,i}$)
corresponding to edge $e \in E$,
let $v_{x,i}(z,j)$ (resp.~$v_{y,i}(z,j)$) denote
the entry of $\vec{v}_{x,i}$ (resp.~$\vec{v}_{y,i}$)
corresponding to pair $(z,j) \in (X \uplus Y) \times \Sigma$.
Hence, each vector is $((75n)^{\ell} \cdot 2^{3\ell+1} + 2 \cdot (35n)^{\ell})$-dimensional.
Each block and entry is defined as follows:
\begin{align*}
    \vec{v}_{x,i}(e) & \triangleq
    \begin{cases}
    \displaystyle\frac{\overline{\vec{b}_{\pi_e(i)}}}{(15^\ell + 1)^{1/2}} & \text{if } e \text{ is incident to } x,  \\
    \vec{0} & \text{otherwise},
    \end{cases}
    & v_{x,i}(z,j) & \triangleq
    \begin{cases}
    \displaystyle\frac{1}{(15^\ell + 1)^{1/2}} & \text{if } (z,j) = (x,i),  \\
    0 & \text{otherwise},
    \end{cases} \\
    \vec{v}_{y,i}(e) & \triangleq
    \begin{cases}
    \displaystyle\frac{\vec{b}_i}{(15^\ell + 1)^{1/2}} & \text{if } e \text{ is incident to } y, \\
    \vec{0} & \text{otherwise},
    \end{cases} 
    & v_{y,i}(z,j) & \triangleq
    \begin{cases}
    \displaystyle\frac{1}{(15^\ell + 1)^{1/2}} & \text{if } (z,j) = (x,i), \\
    0 & \text{otherwise}.
    \end{cases}
\end{align*}
Since each vector contains exactly $15^{\ell}$ blocks
chosen from $\mat{B}^{(3\ell)}$ and
exactly one element whose value is $\frac{1}{(15^\ell+1)^{1/2}}$,
it is normalized;
i.e., $\|\vec{v}_{x,i}\| = \|\vec{v}_{y,i}\| = 1$ for all
$x \in X, y \in Y, i \in \Sigma$.
Note that
$\vec{v}_{x_1, i_1}$ and $\vec{v}_{x_2,i_2}$ (resp.~$\vec{v}_{y_1, i_1}$ and $\vec{v}_{y_2,i_2}$)
are orthogonal for any $x_1,x_2 \in X$ (resp.~$y_1,y_2 \in Y$) and
$i_1,i_2 \in \Sigma$ if $x_1 \neq x_2$ (resp.~$y_1 \neq y_2$), and
$\vec{v}_{x,i}$ and $\vec{v}_{y,j}$ for $x \in X, y \in Y, i,j \in \Sigma$ are orthogonal if
$(x,y) \in E$ and $\pi_{(x,y)}(i) = j$,
or if $ (x,y) \not \in E $.

We then define an instance $\mat{V}$ of \volmax as follows:
\begin{align*}
    \mat{V} \triangleq
    \{ \vec{v}_{x,i} \mid x \in X, i \in \Sigma \} \uplus
    \{ \vec{v}_{y,i} \mid y \in Y, i \in \Sigma \}.
\end{align*}
Here, $\mat{V}$ contains $N \triangleq 2 \cdot (35n)^{\ell} $ vectors.
Define $K \triangleq |X|+|Y| = 2 \cdot (5n)^{\ell}$;
it holds that $K = N / 7^{\ell}$.
Construction of $\mat{V}$ from $\GG^{\otimes \ell}$ can be done in polynomial time in $n$.

\paragraph{Completeness.}
\begin{lemma}
\label{lem:log:completeness}
If $\val(\GG^{\otimes \ell}) = 1$, then there exists a set $\mat{S}$ of $K$ vectors from $\mat{V}$ such that $\vol(\mat{S}) = 1$.
\end{lemma}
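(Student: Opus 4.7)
The plan is to mimic the completeness argument of \cref{lem:completeness} essentially verbatim, since the only change in the modified construction is the addition of the ``identity'' entries $v_{z,i}(z',j')$, which can only help orthogonality rather than hurt it. Concretely, since $\val(\GG^{\otimes \ell}) = 1$, fix a labeling $\sigma: (X \uplus Y) \to \Sigma$ that satisfies every edge of $E$, and take
\begin{align*}
    \mat{S} \triangleq \{\vec{v}_{x,\sigma(x)} \mid x \in X\} \uplus \{\vec{v}_{y,\sigma(y)} \mid y \in Y\},
\end{align*}
which has exactly $K = |X|+|Y|$ normalized vectors. It then suffices to show that the vectors of $\mat{S}$ are pairwise orthogonal, whence $\vol(\mat{S}) = 1$.

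I would verify orthogonality by a short case analysis, splitting the inner product $\langle \vec{v}_{z_1,\sigma(z_1)}, \vec{v}_{z_2,\sigma(z_2)} \rangle$ into its contribution from the $|E|$ block coordinates and the $|X \uplus Y| \cdot |\Sigma|$ entry coordinates. For the entry part, since $z_1 \neq z_2$, the pairs $(z_1,\sigma(z_1))$ and $(z_2,\sigma(z_2))$ differ, so at most one of the two vectors has a nonzero entry at any given coordinate, contributing $0$. For the block part, if $z_1, z_2 \in X$ (resp.\ both in $Y$) with $z_1 \neq z_2$, no edge is incident to both, so the block supports are disjoint. The only nontrivial case is $z_1 = x \in X$ and $z_2 = y \in Y$ with $(x,y) \in E$: here the single shared block $e = (x,y)$ contributes
\begin{align*}
    \frac{1}{15^\ell+1}\left\langle \overline{\vec{b}_{\pi_e(\sigma(x))}}, \vec{b}_{\sigma(y)}\right\rangle = \frac{1}{15^\ell+1}\left\langle \overline{\vec{b}_{\sigma(y)}}, \vec{b}_{\sigma(y)}\right\rangle = 0,
\end{align*}
where the first equality uses $\pi_e(\sigma(x)) = \sigma(y)$ (edge $e$ is satisfied by $\sigma$) and the second uses the property $\langle \vec{b}_i, \overline{\vec{b}_i}\rangle = 0$ from \cref{lem:civril-vector}. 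If $(x,y) \notin E$, the block supports are again disjoint. In every case the inner product vanishes.

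Thus $\mat{S}$ consists of $K$ mutually orthogonal unit vectors, giving $\vol(\mat{S}) = 1$. There is no real obstacle here: the completeness argument transfers from \cref{lem:completeness} to the modified construction for free, because the extra identity-like entries are supported on coordinates indexed by distinct pairs $(z,\sigma(z))$ and hence automatically preserve orthogonality. The main substance of \cref{lem:log:main} lies in the soundness and minimum-eigenvalue clauses; the completeness lemma is a direct verification.
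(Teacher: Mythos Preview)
Your proof is correct and takes essentially the same approach as the paper, which simply refers back to the proof of \cref{lem:completeness}. You have just spelled out the additional (trivial) verification that the new entry coordinates do not spoil orthogonality, which is exactly the ``similar manner'' the paper intends.
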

\begin{proof}
The proof can be done in a similar manner to that for \cref{lem:completeness}.
\end{proof}

\paragraph{Soundness.}
For a vector set $\mat{S} \subseteq \mat{V}$,
we use the same notation
$\mat{S}_X$, $\mat{S}_Y$, $X(\mat{S})$, $Y(\mat{S})$, $\rep(\mat{S}_X)$, and $\rep(\mat{S}_Y)$ as those introduced in \cref{subsec:proof-lemma}.
We first prove the following, whose proof is similar to that for \cref{lem:civril-vol} and deferred to \cref{app:proofs}.
\begin{lemma}
\label{lem:log:civril-vol}
For any set $\mat{S}$ of vectors from $\mat{V}$, it holds that
\begin{align*}
    \vol(\mat{S}_X) \leq \left(\frac{\sqrt{3.01}}{2}\right)^{\rep(\mat{S}_X)} \text{ and }
    \vol(\mat{S}_Y) \leq \left(\frac{\sqrt{3.01}}{2}\right)^{\rep(\mat{S}_Y)}.
\end{align*}
\end{lemma}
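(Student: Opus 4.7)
The plan is to mirror the proof of \cref{lem:civril-vol} (restated from Civril--Magdon-Ismail and supplied in \cref{app:proofs}) while carefully tracking the effect of the newly introduced identifier coordinates $v_{z,i}(z,j)$ on the same-vertex inner products.

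First, I would reduce to a per-vertex bound. In the modified construction, vectors indexed by distinct vertices of $X$ remain mutually orthogonal: their edge-block supports are disjoint (each $e \in E$ is incident to exactly one vertex of $X$), and the unit identifier entries sit at disjoint coordinates $(z,j)$. Hence
\begin{align*}
\vol(\mat{S}_X) = \prod_{x \in X(\mat{S})} \vol(\mat{S}_X^x), \quad \text{where } \mat{S}_X^x \triangleq \{\vec{v}_{x,i} \in \mat{S}_X\},
\end{align*}
and since $\rep(\mat{S}_X) = \sum_{x \in X(\mat{S})} (|\mat{S}_X^x|-1)$, it suffices to establish the per-vertex estimate $\vol(\mat{S}_X^x) \leq (\sqrt{3.01}/2)^{|\mat{S}_X^x|-1}$ and multiply across vertices.

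Second, I would compute the same-vertex inner products. For $i_1 \neq i_2$, the identifier entries contribute $0$ (their unit coordinates are distinct), while \cref{lem:civril-vector} together with the one-line extension $\langle \overline{\vec{b}_i}, \overline{\vec{b}_j}\rangle = \langle \vec{b}_i, \vec{b}_j\rangle = 1/2$ for $i \neq j$ (the cross terms $2^{-\ell/2}\langle \vec{1}, \vec{b}_\bullet \rangle = 1$ and $2^{-\ell} \|\vec{1}\|^2 = 2$ cancel) shows that each of the $15^\ell$ edge-blocks incident to $x$ contributes $1$ if $\pi_e(i_1) = \pi_e(i_2)$ and $1/2$ otherwise, all scaled by $(15^\ell+1)^{-1}$. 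Consequently,
\begin{align*}
\langle \vec{v}_{x,i_1}, \vec{v}_{x,i_2}\rangle \in \left[\tfrac{1}{2}\cdot \tfrac{15^\ell}{15^\ell+1},\; \tfrac{15^\ell}{15^\ell+1}\right] \subsetneq [0,1),
\end{align*}
which both resolves the linear-dependence issue motivating the modified reduction and yields a pairwise lower bound only slightly below the original $1/2$.

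Third, I would rerun the inductive per-vertex argument of \cref{lem:civril-vol} with this relaxed lower bound, replacing $\sqrt{3}/2$ by $\sqrt{3.01}/2$ throughout. The only quantitative point where the constant enters is the two-vector base case, where $\vol(\{\vec{v}_{x,i_1}, \vec{v}_{x,i_2}\}) = \sqrt{1-\langle \vec{v}_{x,i_1}, \vec{v}_{x,i_2}\rangle^2} \leq \sqrt{3.01}/2$ reduces to $\bigl(\tfrac{15^\ell}{15^\ell+1}\bigr)^2 \geq 0.99$, which holds for every $\ell \geq 2$ and hence \emph{a fortiori} for $\ell = 2\cdot 10^{12}$. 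The identical argument, with $X$ replaced by $Y$, bounds $\vol(\mat{S}_Y)$.

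I expect the main obstacle to be verifying that the inductive argument of \citet{civril2013exponential} does not invoke the exact value $1/2$ in multiple interdependent places; if it does, I would propagate the relaxation uniformly, which reduces to a continuity/perturbation estimate on the Gram-matrix determinant (a polynomial in the off-diagonal entries) whose perturbation is of order $O(15^{-\ell})$ and is easily absorbed in the slack $3 \to 3.01$.
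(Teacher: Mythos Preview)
Your approach is correct and rests on the same key computation as the paper's: the same-vertex inner product is at least $\tfrac{1}{2}\cdot\tfrac{15^\ell}{15^\ell+1}$, whence the two-vector distance is below $\sqrt{3.01}/2$. The paper's proof is organizationally a bit more direct: rather than first factoring $\vol(\mat{S}_X)$ per vertex via orthogonality and then running a per-vertex argument, it picks one representative $\vec{v}_{x,i'}$ per vertex globally (forming $\mat{Q}$ with $\rep(\mat{Q})=0$, hence $\vol(\mat{Q})\le 1$), applies the Union Lemma once to $\mat{P}=\mat{S}_X\setminus\mat{Q}$, and bounds each $\dis(\vec{v}_{x,i},\mat{Q})\le \dis(\vec{v}_{x,i},\{\vec{v}_{x,i'}\})$ by that same two-vector estimate. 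Your per-vertex factorization is valid but not needed, and what you refer to as the ``inductive argument'' in \cref{lem:civril-vol} is in fact this single Union-Lemma application; the value $1/2$ enters only at the one place you identified, so the concern in your final paragraph does not materialize.
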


By using \cref{lem:log:civril-vol},
we show that both $X(\mat{S})$ and $Y(\mat{S})$ contain $\Omega(K)$ vertices if its volume is sufficiently large, whose proof is similar to that for \cref{clm:S-bound} and deferred to \cref{app:proofs}.
\begin{claim}
\label{clm:log:S-bound}
For any set $\mat{S}$ of $K$ vectors from $\mat{V}$,
if $\vol(\mat{S}) \geq 2^{-cK}$ for some number $c > 0$,
then it holds that
\begin{align*}
    |X(\mat{S})| > \left(\frac{1}{2} - 10c\right)K \text{ and }
    |Y(\mat{S})| > \left(\frac{1}{2} - 10c\right)K.
\end{align*}
\end{claim}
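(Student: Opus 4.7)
The plan is to mirror the argument for Claim~\ref{clm:S-bound}, with two small adjustments: here $|\mat{S}|=K$ exactly (rather than just $\geq \tfrac{4}{5}K$), and Lemma~\ref{lem:log:civril-vol} carries the constant $\tfrac{\sqrt{3.01}}{2}$ in place of $\tfrac{\sqrt{3}}{2}$. Both changes propagate in a controlled way, and in fact the exact size $|\mat{S}|=K$ streamlines the final arithmetic, producing the cleaner constant $\tfrac{1}{2}-10c$ (versus $\tfrac{3}{8}-10c$ before).

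First I would observe that $\vol(\mat{S}_X) \geq \vol(\mat{S}) \geq 2^{-cK}$; this is valid because every vector of $\mat{V}$ is normalized, so appending any vector to a set can only multiply the volume by a factor in $[0,1]$ (via the Gram--Schmidt formulation in \cref{eq:def-vol}). Applying Lemma~\ref{lem:log:civril-vol} then yields $\bigl(\tfrac{\sqrt{3.01}}{2}\bigr)^{\rep(\mat{S}_X)} \geq 2^{-cK}$, i.e., $\rep(\mat{S}_X) \leq cK/\log_2(2/\sqrt{3.01})$. A quick numerical check gives $\log_2(2/\sqrt{3.01}) > \tfrac{1}{5}$, so $\rep(\mat{S}_X) < 5cK$, and symmetrically $\rep(\mat{S}_Y) < 5cK$.

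Next, I would use the identities $|\mat{S}_X| = |X(\mat{S})| + \rep(\mat{S}_X)$, $|\mat{S}_Y| = |Y(\mat{S})| + \rep(\mat{S}_Y)$, and $K = |\mat{S}_X| + |\mat{S}_Y|$ to rewrite $|X(\mat{S})| = K - |Y(\mat{S})| - \rep(\mat{S}_X) - \rep(\mat{S}_Y)$. Plugging in $|Y(\mat{S})| \leq |Y| = (5n)^\ell$ together with $K = 2(5n)^\ell$ (so $|Y| = K/2$) and the repetition bounds yields $|X(\mat{S})| > K - K/2 - 10cK = \bigl(\tfrac{1}{2}-10c\bigr)K$; the bound on $|Y(\mat{S})|$ follows symmetrically by swapping the roles of $X$ and $Y$.

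I do not anticipate any real obstacle, as this is essentially a repackaging of the proof of Claim~\ref{clm:S-bound}. The only thing to verify carefully is that the slightly weaker constant $\tfrac{\sqrt{3.01}}{2}$ still clears the $\log_2(2/\sqrt{3.01}) > \tfrac{1}{5}$ threshold needed for the bound $\rep(\mat{S}_X) < 5cK$; since $\log_2(2/\sqrt{3.01}) \approx 0.205$, there is comfortable slack. The equality $|\mat{S}|=K=2|Y|$ then removes the case analysis that produced the $\tfrac{3}{8}$ constant previously, leaving the cleaner $\tfrac{1}{2}$ here.
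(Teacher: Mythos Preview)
Your proposal is correct and follows essentially the same approach as the paper's own proof: both use $\vol(\mat{S}_X)\geq\vol(\mat{S})\geq 2^{-cK}$ together with \cref{lem:log:civril-vol} to bound $\rep(\mat{S}_X),\rep(\mat{S}_Y)<5cK$, then combine the identities $K=|\mat{S}_X|+|\mat{S}_Y|$, $|\mat{S}_Y|=|Y(\mat{S})|+\rep(\mat{S}_Y)$ with $|Y(\mat{S})|\leq|Y|=K/2$ to conclude. The numerical check $\log_2(2/\sqrt{3.01})\approx 0.205>\tfrac{1}{5}$ is exactly what is needed.
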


We now show that no vector set of size $K$ has a volume close to $1$ if $\val(\GG^{\otimes \ell})$ is small,
whose proof is similar to that for \cref{lem:soundness-2} and deferred to \cref{app:proofs}.
\begin{lemma}
\label{lem:log:soundness-2}
Suppose $\val(\GG^{\otimes \ell}) < 2^{-\alpha \ell}$.
For any set $\mat{S}$ of $K$ vectors from $V$,
it holds that $\vol(\mat{S}) < 2^{- \beta^{\circ} K}$, where 
$\beta^{\circ} = 10^{-10^{12.4}}$.
\end{lemma}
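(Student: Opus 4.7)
The plan is to mirror the proof of \cref{lem:soundness-2} from \cref{subsec:proof-lemma}, adapting it to the modified construction in \cref{sec:log}. Specifically, I would argue by contradiction: suppose there exists a set $\mat{S}$ of $K$ vectors from $\mat{V}$ with $\vol(\mat{S}) \geq 2^{-\beta^{\circ} K}$. Applying \cref{clm:log:S-bound} with $c = \beta^{\circ}$, I obtain $|X(\mat{S})|, |Y(\mat{S})| > (\frac{1}{2} - 10\beta^{\circ})K$.

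I would then define a labeling $\sigma : (X \uplus Y) \to \Sigma$ analogously to \cref{eq:proof-label}, and set $\mat{P} \triangleq \{\vec{v}_{x,\sigma(x)} : x \in X(\mat{S})\}$ and $\mat{Q} \triangleq \{\vec{v}_{y,\sigma(y)} : y \in Y(\mat{S})\}$. Since the identity entries of $\vec{v}_{y,\sigma(y)}$ occur at distinct positions $(y,\sigma(y))$ for distinct $y \in Y(\mat{S})$, and the blocks corresponding to different $y$'s sit on disjoint edges, $\mat{Q}$ remains orthonormal. The key inner-product calculation changes only by the scaling factor: if $(x,y) \in E$ is unsatisfied by $\sigma$ then the identity entries do not interact (they live at distinct indices) so
\[
\langle \vec{v}_{x,\sigma(x)}, \vec{v}_{y,\sigma(y)} \rangle = \left\langle \frac{\overline{\vec{b}_{\pi_{(x,y)}(\sigma(x))}}}{(15^\ell+1)^{1/2}}, \frac{\vec{b}_{\sigma(y)}}{(15^\ell+1)^{1/2}}\right\rangle = \frac{1}{2(15^\ell+1)},
\]
yielding $\dis(\vec{v}_{x,\sigma(x)}, \mat{Q})^2 = 1 - \frac{U(x)}{4(15^\ell+1)^2}$, where $U(x)$ counts unsatisfied edges between $x$ and $Y(\mat{S})$.

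Next I would invoke \cref{lem:civril-union} and the AM--GM inequality to get
\[
\vol(\mat{P} \uplus \mat{Q}) \leq \left(1 - \frac{\sum_{x \in X(\mat{S})} U(x)}{4(15^\ell+1)^2\,|X(\mat{S})|}\right)^{\!|X(\mat{S})|/2}.
\]
To lower-bound $\sum_x U(x)$, I would count edges between $X(\mat{S})$ and $Y(\mat{S})$ via inclusion-exclusion: the number is at least $(|X(\mat{S})| + |Y(\mat{S})|)\cdot 15^\ell - |E| > (1-40\beta^{\circ})(75n)^\ell$, and at most a $2^{-\alpha\ell}$-fraction of these can be satisfied, so $\sum_x U(x) > (1 - 40\beta^{\circ} - 2^{-\alpha\ell})(75n)^\ell$. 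Substituting, using the identity $(75n)^\ell = \frac{15^\ell}{2} K$, and applying $1-t \leq \rme^{-t}$, the bound reduces to $\vol(\mat{S}) \leq \exp(-c_\ell K)$ for an explicit constant $c_\ell > 0$ of order $\Theta(15^{-\ell})$.

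The main obstacle, as in \cref{lem:soundness-2}, is the numerical bookkeeping: one must verify that $\beta^{\circ} = 10^{-10^{12.4}}$ is small enough so that $c_\ell > \beta^{\circ}\ln 2$ (with $\ell = \lceil 4/\alpha\rceil = 2 \cdot 10^{12}$), simultaneously absorbing the $40\beta^{\circ}$ penalty in the numerator and ensuring $(\frac{1}{2} - 10\beta^{\circ}) > 0$ for \cref{clm:log:S-bound} to be nontrivial. Since this is essentially the same slack as in the unconstrained proof (whose bound used a $\frac{1}{16}$ coefficient rather than $\frac{1}{2}$), the constants comfortably carry through and the resulting strict inequality $\vol(\mat{S}) < 2^{-\beta^{\circ}K}$ contradicts the initial assumption, completing the proof.
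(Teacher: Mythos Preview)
Your proposal is correct and follows essentially the same approach as the paper: contradiction via \cref{clm:log:S-bound}, the labeling of \cref{eq:proof-label}, the distance computation with the $(15^\ell+1)$ scaling, and \cref{lem:civril-union} plus AM--GM to bound $\vol(\mat{P}\uplus\mat{Q})$. The only cosmetic difference is that you count edges between $X(\mat{S})$ and $Y(\mat{S})$ directly by inclusion--exclusion and subtract the satisfied edges once (yielding a $2^{-\alpha\ell}$ loss), whereas the paper first bounds unsatisfied edges incident to each side and then intersects (yielding a $2^{-\alpha\ell+1}$ loss); both routes lead to the same $\exp\!\bigl(-\Theta(15^{-\ell})K\bigr)$ bound and the same numerical verification for $\beta^{\circ}$.
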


\paragraph{Eigenvalue.}
We finally bound the minimum eigenvalue of the Gram matrix from below.
\begin{lemma}
\label{lem:log:eigen}
The minimum eigenvalue of a Gram matrix defined from $\mat{V}$ is at least $\frac{1}{15^\ell + 1}$.
\end{lemma}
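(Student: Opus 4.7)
The plan is to exploit the fact that each vector $\vec{v}_{z,i}$ lives in two disjoint coordinate groups: the edge-block coordinates (indexed by $e \in E$) and the ``identity'' coordinates (indexed by $(z',j) \in (X \uplus Y) \times \Sigma$). Decompose each vector as $\vec{v}_{z,i} = \vec{u}_{z,i} + \vec{w}_{z,i}$, where $\vec{u}_{z,i}$ collects the edge-block entries and $\vec{w}_{z,i}$ collects the identity entries. Because these two coordinate groups are disjoint, $\langle \vec{u}_{z_1,i_1}, \vec{w}_{z_2,i_2}\rangle = 0$ for every pair, so the Gram matrix splits additively.

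By construction $\vec{w}_{z,i}$ has a single nonzero entry $\frac{1}{(15^\ell+1)^{1/2}}$ located precisely at position $(z,i)$, so its inner product with $\vec{w}_{z',i'}$ equals $\frac{1}{15^\ell+1}$ if $(z,i) = (z',i')$ and $0$ otherwise. Consequently the Gram matrix of the identity parts is exactly $\frac{1}{15^\ell+1} \mat{I}$. Meanwhile, the Gram matrix of the edge parts is positive semi-definite as it is itself a Gram matrix.

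Thus I would write $\mat{A} = \mat{A}_{\mathrm{edge}} + \frac{1}{15^\ell+1}\mat{I}$ with $\mat{A}_{\mathrm{edge}} \succeq 0$, and conclude that for any unit vector $\vec{x}$,
\begin{align*}
\vec{x}^\top \mat{A} \vec{x} = \vec{x}^\top \mat{A}_{\mathrm{edge}} \vec{x} + \frac{1}{15^\ell+1} \geq \frac{1}{15^\ell+1},
\end{align*}
which gives the desired lower bound on the minimum eigenvalue. There is no real obstacle here; the only thing to check carefully is that the two coordinate families really are disjoint and that the identity entry $\frac{1}{(15^\ell+1)^{1/2}}$ sits at a unique location for each $\vec{v}_{z,i}$, both of which are immediate from the definition given just above the lemma.
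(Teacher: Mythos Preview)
Your proof is correct and follows essentially the same approach as the paper: both arguments separate the coordinates into the edge-block part and the identity part, observe that the identity part contributes exactly $\frac{1}{15^\ell+1}\mat{I}$ to the Gram matrix, and then discard the (nonnegative) edge-block contribution. The only cosmetic difference is that you phrase the decomposition at the matrix level as $\mat{A} = \mat{A}_{\mathrm{edge}} + \frac{1}{15^\ell+1}\mat{I}$ with $\mat{A}_{\mathrm{edge}} \succeq 0$, whereas the paper expands the Rayleigh quotient $\vec{b}^\top \mat{A} \vec{b}$ directly and drops the sum of squares coming from the edge blocks.
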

\begin{proof}
Let $\mat{A} \in \bbR^{N \times N}$ be a Gram matrix defined from $\mat{V}$; i.e.,
it holds that $A_{i,j} \triangleq \langle \vec{v}_i, \vec{v}_j \rangle$ for all $i,j \in [N]$.
By the Courant--Fischer min-max theorem, the minimum eigenvalue of $\mat{A}$, denoted $\lambda_1(\mat{A})$, is equal to the minimum possible value of the Rayleigh quotient; namely,
\begin{align*}
    \lambda_1(\mat{A}) = \min_{\|\vec{b}\| = 1} \vec{b}^\top \mat{A} \vec{b}.
\end{align*}
Recall that each vector $\vec{v}$ of $\mat{V}$ consists of
blocks $\vec{v}(e)$ corresponding to edge $e \in E$ and
entries $v(z,j)$ corresponding to pair $(z,j) \in (X \uplus Y) \times \Sigma$.
Suppose an $N$-dimensional vector $\vec{b}$ is indexed by pairs of $(X \uplus Y) \times \Sigma$,
that is, let $b(z,i)$ denote an entry of $\vec{b}$ corresponding to pair $(z,i) \in (X\uplus Y) \times \Sigma$.
By definition of vectors of $\vec{V}$, we have
\begin{align*}
    \vec{b}^\top \mat{A} \vec{b}
    & \geq \sum_{(z,j) \in (X \uplus Y) \times \Sigma}
    \left( \sum_{(x,i) \in X \times \Sigma} b(x,i) \cdot v_{x,i}(z,j) + \sum_{(y,i) \in Y \times \Sigma} b(y,i) \cdot v_{y,i}(z,j) \right)^2 \\
    & \geq \sum_{(z,j) \in (X \uplus Y) \times \Sigma} (b(z,j) \cdot v_{z,j}(z,j))^2 \\
    & = \frac{1}{15^\ell + 1} \sum_{(z,j) \in (X \uplus Y) \times \Sigma} b(z,j)^2 = \frac{1}{15^\ell + 1} \|\vec{b}\|^2.
\end{align*}
Consequently, we have
$ \displaystyle\lambda_1(\mat{A}) \geq \min_{\|\vec{b}\| = 1} \frac{1}{15^\ell+1}\|\vec{b}\|^2 = \frac{1}{15^\ell+1}$,
which finishes the proof.
\end{proof}

\begin{proof}[Proof of \cref{lem:log:main}]
Let $\GG^{\otimes \ell}$ be the $\ell$-fold parallel repetition of a special projection game, and let $\mat{V}$ be a vector set reduced by the procedure described in the beginning of this section.
Then, the completeness follows from \cref{lem:log:completeness},
the soundness follows from \cref{lem:log:soundness-2}, and
the minimum eigenvalue bound follows from \cref{lem:log:eigen}.
\end{proof}

\section{Exponential Inapproximability for Exponentiated DPPs}\label{sec:e-dpp}

We derive the exponential inapproximability of exponentiated DPPs.
Given an $n \times n$ positive semi-definite matrix $\mat{A}$,
the \emph{exponentiated DPP (E-DPP)} of exponent $p > 0$ defines
a distribution over the power set $2^{[n]}$,
whose probability mass for each subset $S \subseteq [n]$ is 
proportional to $\det(\mat{A}_S)^p$.
We use $\ZZ^p(\mat{A})$ to denote the normalizing constant; namely,
\begin{align*}
\ZZ^p(\mat{A}) \triangleq \sum_{S \subseteq [n]} \det(\mat{A}_S)^p.
\end{align*}
$\ZZ^p(\mat{A})$ must be at least $1$ by the positive semi-definiteness of $\mat{A}$.
We say that an estimate $\widehat{\ZZ}^p$ is a \emph{$\rho$-approximation} to $\ZZ^p$ for $\rho \geq 1$ if it holds that
\begin{align*}
    \ZZ^p \leq \widehat{\ZZ}^p \leq \rho \cdot \ZZ^p.
\end{align*}
For two probability distributions $\bm{\mu}$ and $\bm{\eta}$ on $\Omega$,
the \emph{total variation distance} is defined as
\begin{align*}
\frac{1}{2} \sum_{x \in \Omega}|\mu_x - \eta_x|
\end{align*}

We first present the following theorem stating that
assuming exponential inapproximability of \subdetmax,
we can neither
estimate $\ZZ^p$ and thus the probability mass for any subset accurately
nor generate a random sample from E-DPPs in polynomial time
for a sufficiently large $p$.

\begin{theorem}\label{thm:e-dpp}
Suppose there exist universal constants
$\ccc$ and $\sss$ such that
\gapsubdetmax{$[2^{\sss n}, 2^{\ccc n}]$} is \NP-hard.
Then, for every fixed number $p > \frac{1}{\ccc-\sss}$,
it is \NP-hard to approximate $\ZZ^p(\mat{A})$
for a positive semi-definite matrix $\mat{A}$ in $\bbQ^{n \times n}$
within a factor of $2^{((\ccc-\sss)p -1)n}$.
Moreover, unless \RP~$=$~\NP,\footnote{\RP is the class of decision problems for which
there exists a probabilistic polynomial-time Turing machine that
accepts a \emph{yes} instance with probability $\geq \frac{1}{2}$ and
always rejects a \emph{no} instance.
It is believed that \RP~$\neq$~\NP.}
no polynomial-time algorithm can draw a sample from
a distribution whose total variation distance from
E-DPPs of exponent
$p > \frac{1}{\ccc-\sss}$
is at most $\frac{1}{3}$.
\end{theorem}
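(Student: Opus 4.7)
The plan is to derive both statements by reducing from the assumed gap hardness of \subdetmax, handling the two parts separately: part one is a direct counting inequality, and part two converts an approximate sampler into an \RP algorithm for the \NP-hard decision problem.

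For the approximation hardness, I would exploit the elementary sandwich $\maxdet(\mat{A})^p \leq \ZZ^p(\mat{A}) \leq 2^n \cdot \maxdet(\mat{A})^p$; the lower bound keeps only the maximizer's term, and the upper bound bounds each of the $2^n$ principal minors by $\maxdet(\mat{A})^p$. Plugging in the promise of \gapsubdetmax{$[2^{\sss n}, 2^{\ccc n}]$} yields $\ZZ^p(\mat{A}) \geq 2^{\ccc p n}$ on yes instances and $\ZZ^p(\mat{A}) < 2^{(\sss p + 1)n}$ on no instances. The multiplicative ratio between the two bounds is exactly $2^{((\ccc - \sss)p - 1)n}$, which exceeds one as soon as $p > 1/(\ccc - \sss)$, so any $\rho$-approximation with $\rho$ below this ratio would separate the two cases of the \NP-hard gap problem.

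For the sampling hardness, I would fix the threshold $\theta = 2^{\sss n}$ and use the alleged sampler to search for a ``heavy'' subset $S$ with $\det(\mat{A}_S) \geq \theta$. On \emph{no} instances the promise guarantees $\det(\mat{A}_S) < \theta$ for every $S$, so the verification $\det(\mat{A}_S) \geq \theta$ never succeeds and the decider produces no false positives. On \emph{yes} instances, the contribution of subsets with $\det(\mat{A}_S)^p < \theta^p$ to $\ZZ^p(\mat{A})$ is at most $2^n \theta^p$, so
\begin{align*}
    \Pr_{S \sim \text{E-DPP}(\mat{A})}\bigl[\det(\mat{A}_S) \geq \theta\bigr] \;\geq\; 1 - \frac{2^n \theta^p}{\ZZ^p(\mat{A})} \;\geq\; 1 - 2^{-((\ccc - \sss)p - 1)n},
\end{align*}
which tends to one because $p > 1/(\ccc - \sss)$. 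The total-variation slack of $1/3$ then leaves $\Pr_{\text{sampler}}[\det(\mat{A}_S) \geq \theta] \geq 2/3 - o(1)$ on yes instances, so a polynomial number of samples exhibit a witness with probability exponentially close to one, yielding an \RP algorithm for gap-DetMax and hence \RP $=$ \NP.

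The main technical hurdle is matching the exponents so that a single threshold $\theta$ simultaneously (i) lies strictly above every principal minor on no instances and (ii) leaves the bulk of the E-DPP mass on the heavy side for yes instances. Both constraints collapse to $(\ccc - \sss)p > 1$, which is precisely the regime $p > 1/(\ccc - \sss)$ in the theorem statement; the remaining steps (boosting the one-sided success probability by repetition, the arithmetic of the TV slack, and polynomial-time verification of $\det(\mat{A}_S) \geq \theta$) are routine.
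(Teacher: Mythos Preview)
Your proposal is correct and follows essentially the same approach as the paper: both parts use the sandwich $\maxdet(\mat{A})^p \leq \ZZ^p(\mat{A}) \leq 2^n\maxdet(\mat{A})^p$ to translate the gap promise into the bounds $\ZZ^p \geq 2^{\ccc p n}$ (yes) versus $\ZZ^p < 2^{(\sss p+1)n}$ (no), and the sampling argument uses the same threshold $\theta = 2^{\sss n}$ with the same total-variation slack computation. Your presentation of the second part is in fact slightly cleaner than the paper's; the repetition/boosting you mention at the end is unnecessary since a single sample already succeeds with probability $\geq 2/3 - o(1) \geq 1/2$ for large $n$, but it does no harm.
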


As a corollary of \cref{thm:subdetmax-inapprox,thm:e-dpp},
we have the following inapproximability result on E-DPPs.
\begin{corollary}
\label{cor:e-dpp}
For every fixed number $p \geq \beta^{-1} = 10^{10^{13}}$,
it is \NP-hard to approximate $\ZZ^p(\mat{A})$ for an $n \times n$ positive semi-definite matrix $\mat{A}$
within a factor of $2^{\beta p n}$.
Moreover, no polynomial-time algorithm can draw a sample whose total variation distance
from the E-DPP of exponent $p$ is at most $\frac{1}{3}$,
unless \RP~$=$~\NP.
\end{corollary}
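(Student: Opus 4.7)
The plan is to derive this corollary directly by instantiating \cref{thm:e-dpp} with the hardness gap supplied by \cref{thm:subdetmax-inapprox}. No new reductions or counting arguments are needed; the content is almost entirely a choice of constants and a verification that the exponent in the inapproximability factor comes out as claimed.

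First, I would invoke \cref{thm:subdetmax-inapprox} to obtain universal constants $\ccc,\sss$ such that \gapsubdetmax{$[2^{\sss n},2^{\ccc n}]$} is \NP-hard and $\ccc-\sss>\beta=10^{-10^{13}}$. Inspecting the proof of \cref{thm:subdetmax-inapprox}, the gap $\ccc-\sss$ can in fact be taken to be at least $10^{-10^{12.7}}$, which is larger than $\beta$ by a very wide margin: $\beta$ is $10^{-10^{13}}$, not $10^{-10^{12.7}}$. In particular, $\ccc-\sss-\beta$ is still positive and essentially equal to $\ccc-\sss$, and $(\ccc-\sss)^{-1}$ is (enormously) smaller than $\beta^{-1}$.

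Next I would apply \cref{thm:e-dpp} with these constants. For any fixed $p\geq\beta^{-1}$, the hypothesis $p>(\ccc-\sss)^{-1}$ of \cref{thm:e-dpp} holds, since $(\ccc-\sss)^{-1}<\beta^{-1}\leq p$. The theorem then yields \NP-hardness of approximating $\ZZ^p(\mat{A})$ within a factor of $2^{((\ccc-\sss)p-1)n}$, and it also yields the sampling lower bound (assuming $\RP\neq\NP$) with no further work; the second half of \cref{cor:e-dpp} is essentially copied from \cref{thm:e-dpp}.

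It remains to compare the exponents, i.e.\ to check that $(\ccc-\sss)p-1\geq \beta p$. Rearranging, this is $(\ccc-\sss-\beta)p\geq 1$, and since $\ccc-\sss-\beta>0$ and $p\geq\beta^{-1}\gg(\ccc-\sss-\beta)^{-1}$, the inequality holds with massive slack. This gives the claimed inapproximability factor $2^{\beta pn}$. The only point requiring any care is to confirm numerically that $\ccc-\sss$, as computed at the end of the proof of \cref{thm:subdetmax-inapprox}, satisfies both $\ccc-\sss>\beta$ and $(\ccc-\sss-\beta)\beta^{-1}\geq 1$; both reduce to the same observation that $\beta=10^{-10^{13}}$ is chosen to be much smaller than the gap $\ccc-\sss$ already extracted from \cref{thm:subdetmax-inapprox}, so this is a bookkeeping step rather than a genuine obstacle.
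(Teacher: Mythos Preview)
Your proposal is correct and follows essentially the same approach as the paper: both instantiate \cref{thm:e-dpp} with the constants $\ccc,\sss$ extracted from the proof of \cref{thm:subdetmax-inapprox}, check that $p\geq\beta^{-1}$ forces $p>(\ccc-\sss)^{-1}$, and then verify the exponent inequality $(\ccc-\sss)p-1\geq\beta p$ via the numerical observation that $\ccc-\sss>10^{-10^{12.7}}\gg\beta$. The sampling statement is read off directly from \cref{thm:e-dpp} in both cases.
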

\begin{proof}
The proof of \cref{thm:subdetmax-inapprox} implies that
$\ccc = \frac{2 \beta^{\circ}}{7^{\ell}}$ and $\sss = \frac{8 \beta^{\circ}}{5 \cdot 7^{\ell}}$ satisfy the conditions in \cref{thm:e-dpp} and
that $10^{10^{12.7}} > \frac{1}{\ccc-\sss}$.
Observe then that
$(\ccc - \sss)p - 1 > \beta p$ for
$\beta = 10^{-10^{13}}$ and $p \geq 10^{10^{13}}$,
which suffices to complete the proof.
\end{proof}

\begin{proof}[Proof of \cref{thm:e-dpp}]
Consider the E-DPP of exponent $p > \frac{1}{\ccc-\sss}$ defined by
a positive semi-definite matrix $\mat{A} \in \bbQ^{n \times n}$.
Suppose $p = \frac{q}{\ccc-\sss}$ for some $q > 1$.
We prove the first argument.
If there exists a set $S \subseteq [n]$ such that $\det(\mat{A}_S) \geq 2^{\ccc n}$,
then $\ZZ^p(\mat{A})$ is at least $2^{\ccc p n}$.
On the other hand,
if every set $S \subseteq [n]$ satisfies
$\det(\mat{A}_S) < 2^{\sss n}$,
then $\ZZ^p(\mat{A})$ is less than $2^{\sss p n + n}$.
If a $2^{(q-1)n}$-approximation to $\ZZ^p(\mat{A})$ is given,
we can distinguish the two cases
(i.e., we can solve \gapsubdetmax{$[2^{\sss n}, 2^{\ccc n}]$})
because
$2^{(q-1)n} = 2^{\ccc p n}/2^{\sss p n + n}$.
We then prove the second argument.
Assume that $\maxdet(\mat{A}) \geq 2^{\ccc n}$.
Sampling $S$ from the E-DPP,
we have ``$\det(\mat{A}_S) > 2^{\sss n}$''
(which is a certificate of the case)
with probability at least
$\frac{2^{\ccc p n}}{\ZZ^p(\mat{A})}$, and
we have ``$\det(\mat{A}_S) \leq 2^{\sss n}$'' with probability at most
$\frac{2^{\sss pn + n}}{\ZZ^p(\mat{A})}$.
Hence, provided a polynomial-time algorithm to generate a random sample whose total variation distance from the E-DPP is at most $\frac{1}{3}$,
we can use it to find the certificate with probability at least
\begin{align*}
    \left(1 - \frac{2^{\sss p n + n}}{2^{\sss p n + n} + 2^{\ccc pn}}\right) - \frac{1}{3} \geq \frac{2}{3} - \frac{1}{1 + 2^{qn - n}} \geq \frac{1}{2}
    \text{\quad(as long as } n \geq \frac{3}{q-1} \text{)},
\end{align*}
implying \RP~$=$~\NP. This completes the proof.
\end{proof}

\begin{remark}
\label{remark:edpp}
\cref{thm:e-dpp} holds even when
$\ccc$ and $\sss$ are functions in $n$.
If we apply a $(\frac{9}{8}-\epsilon)$-factor inapproximability due to \citet{kulesza2012determinantal},
then we would obtain $\frac{1}{\ccc- \sss} = \Theta(n)$; thus, $p$ must be $ \Omega(n)$, which is weaker than \cref{cor:e-dpp}.
\cref{thm:subdetmax-inapprox}
is crucial for ruling out approximability for constant $p$.
\end{remark}

We finally observe that a $2^{\bigO(pn)}$-approximation to $\ZZ^p$
can be derived using a $2^{\bigO(n)}$-approximation algorithm for \subdetmax \citep{nikolov2015randomized}.
This means that \cref{cor:e-dpp} is tight \emph{up to} a constant in the exponent (when $p \geq 10^{10^{13}}$).

\begin{observation}
\label{obs:edpp-approx}
There exists a polynomial-time algorithm that approximates
$\ZZ^p(\mat{A})$ for an $n \times n$ positive semi-definite matrix $\mat{A}$ within a factor of 
$(2 \cdot \rme^p)^{n}$.
\end{observation}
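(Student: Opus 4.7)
The plan is to use Nikolov's $\rme^n$-approximation algorithm for \subdetmax as a black box to sandwich $\ZZ^p(\mat{A})$ between two multiples of the optimum principal minor raised to the $p$-th power. Let $\ALG$ denote that algorithm, and let $S^\star = \ALG(\mat{A})$, so that $\det(\mat{A}_{S^\star}) \geq \rme^{-n} \cdot \maxdet(\mat{A})$. I would then output the estimate $\widehat{\ZZ}^p \triangleq (2\rme^p)^n \cdot \det(\mat{A}_{S^\star})^p$, which is clearly computable in polynomial time.

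For the lower bound $\widehat{\ZZ}^p \geq \ZZ^p(\mat{A})$, I would use two trivial observations: $\ZZ^p(\mat{A}) = \sum_{S \subseteq [n]} \det(\mat{A}_S)^p \leq 2^n \cdot \maxdet(\mat{A})^p$ because the sum has $2^n$ terms each bounded by $\maxdet(\mat{A})^p$, and $\maxdet(\mat{A})^p \leq \rme^{pn} \cdot \det(\mat{A}_{S^\star})^p$ by the approximation guarantee of $\ALG$. Multiplying these yields $\ZZ^p(\mat{A}) \leq (2 \rme^p)^n \cdot \det(\mat{A}_{S^\star})^p = \widehat{\ZZ}^p$.

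For the upper bound $\widehat{\ZZ}^p \leq (2 \rme^p)^n \cdot \ZZ^p(\mat{A})$, I would simply note that $\det(\mat{A}_{S^\star})^p$ is one of the terms in the sum defining $\ZZ^p(\mat{A})$ (and all terms are nonnegative by positive semi-definiteness), hence $\det(\mat{A}_{S^\star})^p \leq \ZZ^p(\mat{A})$, and therefore $\widehat{\ZZ}^p \leq (2 \rme^p)^n \cdot \ZZ^p(\mat{A})$.

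There is essentially no real obstacle here; the argument is just a packaging of the volumetric $\rme^n$-factor algorithm together with the $2^n$ slack between the maximum term and the total sum. The only point worth double-checking is that the algorithm of \citet{nikolov2015randomized}, while originally designed for the size-constrained \subdetmax, yields an $\rme^n$-approximation for the unconstrained case by running it for every $k \in \{0, 1, \ldots, n\}$ and taking the best output (as already noted in \cref{subsec:related}), so the black box is indeed available.
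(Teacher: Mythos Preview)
Your proposal is correct and essentially identical to the paper's own proof: both run Nikolov's algorithm over all $k$ to get an $\rme^n$-approximate maximizer $S^\star$, sandwich $\ZZ^p(\mat{A})$ between $\det(\mat{A}_{S^\star})^p$ and $2^n\rme^{pn}\det(\mat{A}_{S^\star})^p$, and output the latter as a $(2\rme^p)^n$-approximation.
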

\begin{proof}
Fix an $n \times n$ positive semi-definite matrix
$\mat{A}$ in $\bbQ^{n \times n}$.
Let $S \subseteq [n]$ be an $\rme^{n}$-approximation
to \subdetmax obtained by running the algorithm of \citet{nikolov2015randomized} for every $k \in [n]$; i.e., it holds that
\begin{align*}
\rme^{-n} \cdot \maxdet(\mat{A}) \leq \det(\mat{A}_S) \leq \maxdet(\mat{A}).
\end{align*}
Thanks to the positive semi-definiteness of $\mat{A}$,
it is easily verified that
\begin{align*}
    \frac{1}{2^n} \cdot \ZZ^p(\mat{A}) \leq \maxdet(\mat{A})^p \leq
    \ZZ^p(\mat{A}).
\end{align*}
Combining two inequalities, we have
\begin{align*}
    \frac{1}{\rho} \cdot \ZZ^p(\mat{A})
    \leq \det(\mat{A}_S)^p
    \leq \ZZ^p(\mat{A}),
\end{align*}
where $\rho \triangleq 2^n \rme^{pn}$.
Hence, $\rho \cdot \det(\mat{A}_S)^p$ is a $\rho$-approximation of $\ZZ^p(\mat{A})$, completing the proof.
\end{proof}

\section{Concluding Remarks and Open Questions}
We have established three inapproximability results for MAP inference on DPPs and the normalizing constant for exponentiated DPPs.
We conclude this paper with three open questions.

\begin{itemize}
    \item \textbf{Optimal bound for unconstrained MAP inference.}
    The universal constant $\beta = 10^{-10^{13}}$ in \cref{thm:subdetmax-inapprox} seems extremely small
    despite $\rme^{n}$-factor approximability \citep{nikolov2015randomized};
    improving the value of $\beta$ is a potential research direction.
    
    \item \textbf{Complexity class of log-determinant maximization.}
    \cref{thm:log:inapprox,thm:log:size-inapprox} state that \logdetmax is $\bigO(1)$-factor inapproximable, while 
    \logdetmax is known to be in the class \APX.\footnote{
    \APX is the class of \NP optimization problems that admit constant-factor approximation algorithms.}
    Is \logdetmax \APX-complete?
    
    \item \textbf{Smallest exponent $p$ for which $\ZZ^p$ is inapproximable.}
    Our upper bound $10^{10^{13}}$ on the exponent $p$
    in \cref{cor:e-dpp} is surprisingly large.
    Can we find a (smaller) ``threshold'' $p_c$ such that
    $\ZZ^p$ is approximable if $p \leq p_c$ and inapproximable otherwise?
\end{itemize}

\acks{This work was mostly done while the author was at NEC Corporation.
The author would like to thank Leonid Gurvits for pointing out to us the reference \citep{gurvits2009complexity}, and
thank the referees for helpful suggestions on the presentation of the paper.}

\appendix
\section{Proof of \cref{thm:inapprox-labelcover}}\label{app:inapprox-labelcover}

The proof of \cref{thm:inapprox-labelcover} is based on
a series of gap-preserving reductions from \prb{Max-3SAT} to \prb{LabelCover}.
In \prb{Max-3SAT},
given a 3-conjunctive normal form (3-CNF) Boolean formula $\phi$,
of which each clause contains at most three variables
(e.g., $\phi = (x_1 \vee \overline{x_2} \vee x_3) \wedge (x_2 \vee \overline{x_3} \vee x_4) \wedge (\overline{x_1} \vee x_3 \vee \overline{x_4} )$),
we are asked to find a truth assignment that satisfies
the maximum fraction of the clauses of $\phi$.
The decision version of \prb{Max-3SAT} is known to be \NP-hard;
indeed, \citet*{hastad2001some} established the following indistinguishability:

\begin{theorem}[\protect{\citealp[Theorem 6.5]{hastad2001some}}]
\label{thm:max3sat}
Given a 3-CNF Boolean formula $\phi$,
it is \NP-hard to distinguish between the following two cases
for any constant $\epsilon>0$:
\begin{itemize}
    \item (Completeness) $\phi$ is satisfiable.
    \item (Soundness) No truth assignment can satisfy at least
    $(\frac{7}{8} + \epsilon)$-fraction of the clauses of $\phi$.
\end{itemize}
\end{theorem}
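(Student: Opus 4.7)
The plan is to derive Theorem \ref{thm:max3sat} from the PCP theorem by constructing a 3-query probabilistically checkable proof with near-optimal soundness, and then translating the resulting 3-bit test into a 3-CNF formula whose approximation threshold matches Håstad's $\tfrac{7}{8}$ barrier. First I would invoke the PCP theorem (as refined to support compositions with inner verifiers) to obtain, for every language in \NP, a verifier with perfect completeness, $O(\log n)$ random bits, and a constant-query \emph{outer} check on a proof whose symbols lie in some larger alphabet $\Sigma$. This gives a gap-preserving reduction from \prb{SAT} to a constraint satisfaction problem with large-alphabet variables and a constant soundness gap.

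The technical core is then to compose this outer verifier with an \emph{inner verifier} based on the long code. For each $\Sigma$-valued proof symbol $a$, the prover writes its long-code encoding, i.e.\ the Boolean function $\chi_a : \{0,1\}^{|\Sigma|}\!\to\!\{0,1\}$ defined by evaluation at $a$. The inner verifier picks bits $x,y$ uniformly, a noise pattern $\mu$ with each coordinate flipped independently with probability $\epsilon$, and checks a parity-type predicate of the form $f(x)\oplus g(y)\oplus h(x\cdot y\cdot\mu)=b$ with $b$ determined by the outer-verifier projection. The point of this construction is that it queries exactly three bits, it has completeness $1$ (honest long codes satisfy the test), and, by a Fourier expansion of the acceptance probability in characters of $\{-1,+1\}^{|\Sigma|}$, any supposed proof that causes the composed verifier to accept with probability $\tfrac{1}{2}+\eta$ can be decoded into an assignment satisfying a non-negligible fraction of the outer verifier's constraints, contradicting soundness of the outer PCP when $\eta$ is small.

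Given the resulting 3-query, 1-vs.-$(\tfrac{1}{2}+\epsilon')$ PCP, I would convert to Max-3SAT by the standard gadget. A parity constraint $x_1\oplus x_2\oplus x_3 = b$ on three Boolean variables is equivalent to the conjunction of four 3-clauses (the four clauses ruling out exactly the four parity violations); an assignment satisfies all four iff the parity holds, and otherwise satisfies exactly three of the four. Thus, if the verifier accepts with probability $p$ over its coins, the resulting 3-CNF has optimum $\tfrac{1}{4}(3 + p) = \tfrac{3}{4}+\tfrac{p}{4}$. Completeness $p=1$ yields a satisfiable formula; soundness $p \le \tfrac{1}{2}+\epsilon'$ yields maximum satisfiable fraction at most $\tfrac{7}{8}+\tfrac{\epsilon'}{4}$. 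Padding with trivially satisfied clauses can absorb the tiny slack to obtain exactly $\tfrac{7}{8}+\epsilon$ for any prescribed $\epsilon>0$.

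The main obstacle, by a wide margin, is the Fourier-analytic soundness analysis of the 3-bit long-code test. One must expand $\Pr[\text{accept}]$ as a sum of products of Fourier coefficients of $f,g,h$, control the contribution of high-weight characters via the noise operator $T_{1-2\epsilon}$ (which damps them by $(1-2\epsilon)^{|S|}$), and then extract from any biased triple of functions a list-decoding into assignments to the outer variables that jointly satisfy many projection constraints. Choosing the noise rate $\epsilon$ so that both the damping and the list size give a contradiction with outer soundness is Håstad's delicate balancing act; every other step in the pipeline is either by now a standard invocation of the PCP theorem or a bookkeeping argument about the 3-clause gadget.
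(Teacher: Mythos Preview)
The paper does not prove this theorem at all: it is quoted verbatim as \citealp[Theorem~6.5]{hastad2001some} and used as a black box in the chain of reductions in \cref{app:inapprox-labelcover}. So there is no ``paper's own proof'' to compare against; for the purposes of this manuscript a citation suffices.

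That said, your sketch of H{\aa}stad's argument has a real gap precisely at the step you flag as routine. The 3-bit long-code test you describe, $f(x)\oplus g(y)\oplus h(x\cdot y\cdot\mu)=b$ with an $\epsilon$-biased noise vector $\mu$, does \emph{not} have perfect completeness: even honest long codes $\chi_a,\chi_{a'},\chi_{a''}$ fail whenever the noise coordinate $\mu_{a''}$ is flipped, which happens with probability $\epsilon$. Hence your 3LIN instance has completeness $1-\epsilon$, and after the four-clause parity gadget the 3-CNF has optimum $\tfrac{3}{4}+\tfrac{1-\epsilon}{4}<1$, i.e.\ $\phi$ is \emph{not} satisfiable in the yes case. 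This is exactly the distinction between H{\aa}stad's Max-3LIN result (completeness $1-\epsilon$, soundness $\tfrac{1}{2}+\epsilon$) and his Max-E3SAT result (completeness $1$, soundness $\tfrac{7}{8}+\epsilon$); the latter is what \cref{thm:max3sat} states, and it requires a different inner verifier whose acceptance predicate is a disjunction of three literals rather than a parity, so that honest proofs pass with probability $1$. The Fourier analysis for that test is similar in spirit but not identical to the 3LIN analysis, and the gadget step disappears since the predicate is already a single 3-clause.
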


We first reduce from \prb{Max-3SAT} to \prb{Max-3SAT(29)},
which is a special case of \prb{Max-3SAT} where
every variable appears in at most $29$ clauses of a 3-CNF Boolean formula.
\begin{theorem}[%
\protect{\citealp[Theorem 7]{trevisan2004inapproximability}};
\protect{\citealp[Theorem 29.11]{vazirani2013approximation}};
\protect{\citealp[Theorem 4.7]{tamaki2015parallel}}%
]
\label{thm:max3sat29}
There is a polynomial-time gap-preserving reduction that
transforms an instance $\phi$ of \prb{Max-3SAT} to
an instance $\psi$ of \prb{Max-3SAT(29)} such that
the following is satisfied:
\begin{itemize}
    \item (Completeness) If $\phi$ is satisfiable, then so is $\psi$.
    \item (Soundness) If no truth assignment satisfies at least $(1-\epsilon)$ faction of the clauses of $\phi$, then no truth assignment satisfies at least $(1-\frac{\epsilon}{43})$-fraction of the clauses of $\psi$.
\end{itemize}
\end{theorem}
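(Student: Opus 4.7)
The plan is to prove Theorem~\ref{thm:max3sat29} by the classical \emph{expander replacement} technique of Papadimitriou--Yannakakis. Given an instance $\phi$ of \prb{Max-3SAT}, for each variable $x$ appearing $k_x$ times in $\phi$ I introduce $k_x$ fresh copies $x^{(1)}, \ldots, x^{(k_x)}$, replacing the $i$-th occurrence of $x$ by $x^{(i)}$. To force the copies to agree, I add equality clauses dictated by a constant-degree expander graph $H_{k_x}$ on vertex set $[k_x]$: for each edge $\{i,j\}$ of $H_{k_x}$ I add the two clauses $(x^{(i)} \vee \overline{x^{(j)}})$ and $(\overline{x^{(i)}} \vee x^{(j)})$, which together assert $x^{(i)} = x^{(j)}$. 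Choosing $H_{k_x}$ to be $14$-regular, each copy variable $x^{(i)}$ occurs in exactly one original clause and in $2 \cdot 14 = 28$ equality clauses, yielding at most $29$ occurrences in $\psi$, as required by \prb{Max-3SAT(29)}. The reduction is clearly polynomial-time.

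For \textbf{completeness}, if $\tau$ satisfies $\phi$, I set $x^{(i)} \triangleq \tau(x)$ for every copy. This simultaneously satisfies each original clause (rewritten in copy variables) and each equality clause, so $\psi$ is satisfiable.

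For \textbf{soundness} I would argue contrapositively. Let $\tau'$ be any assignment to $\psi$, and define $\tau$ for $\phi$ by majority vote on copies: $\tau(x) \triangleq \mathrm{maj}_i \tau'(x^{(i)})$. Call the set of minority copies $M_x \subseteq [k_x]$. The edge expansion of $H_{k_x}$ guarantees $|E(M_x, [k_x]\setminus M_x)| \geq \lambda |M_x|$ for a universal constant $\lambda > 0$ depending on the spectral gap of the expander family, and every such cut edge is an equality clause violated by $\tau'$. Summing over $x$ thus bounds the total number of minority copies by a constant multiple of the number of equality clauses that $\tau'$ violates. Each minority copy, when replaced by $\tau(x)$, can spoil at most one extra original clause, so a careful counting shows that the number of clauses of $\phi$ violated by $\tau$ is at most $43$ times the number of clauses of $\psi$ violated by $\tau'$, which (after normalising by the clause counts of $\phi$ and $\psi$, both of the same order) yields the stated soundness loss factor $43$.

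The \textbf{main obstacle} is the precise quantitative accounting: one must calibrate the degree $d$ (here $14$) and the spectral gap of the expander family so that (i) the per-variable occurrence count in $\psi$ meets the ``$29$'' target, and (ii) the ratio between ``fraction of violated clauses in $\psi$'' and ``fraction of violated clauses in $\phi$'' comes out exactly to $43$. Explicit constant-degree expander families (Margulis--Gabber--Galil, or the Lubotzky--Phillips--Sarnak Ramanujan graphs) supply the required spectral gap, and the delicate step is the edge-counting argument that converts the expansion parameter $\lambda$ and the degree $d$ into the soundness constant $43$; for the small values of $k_x$ where constant-degree expanders are unavailable, one falls back to a complete graph, which affects only a bounded additive amount of clauses and does not change the asymptotic factor.
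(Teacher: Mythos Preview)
The paper does not actually prove \cref{thm:max3sat29}; it is stated with citations to \citet{trevisan2004inapproximability}, \citet{vazirani2013approximation}, and \citet{tamaki2015parallel} and used as a black box in the chain of reductions in \cref{app:inapprox-labelcover}. Your expander-replacement argument is precisely the standard Papadimitriou--Yannakakis construction that those references carry out, so you are aligned with what the paper invokes.

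One quantitative point worth tightening: your final sentence in the soundness paragraph (``the number of clauses of $\phi$ violated by $\tau$ is at most $43$ times the number of clauses of $\psi$ violated by $\tau'$'') is not the inequality that directly yields the factor $43$. The constant $43$ arises because $\psi$ has at most $43m$ clauses when $\phi$ has $m$ clauses (namely $m$ original clauses plus at most $14\cdot 3m = 42m$ equality clauses, since the total number of literal occurrences is at most $3m$ and each copy contributes $2\cdot 14$ equality clauses shared pairwise). The expansion argument is used to show that the number of $\phi$-clauses violated by the majority assignment $\tau$ is at most the number of $\psi$-clauses violated by $\tau'$ (not $43$ times it): each minority copy can cost at most one original clause, and the expander guarantees at least one violated equality clause per minority copy, so violated-original $\leq$ minority-count $\leq$ violated-equality. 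Then an $\epsilon/43$ fraction of $43m$ violated $\psi$-clauses translates to at most $\epsilon m$ violated $\phi$-clauses. Your sketch has all the right ingredients; just be careful to place the $43$ on the clause-count side rather than the violation-multiplier side.
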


We next reduce from \prb{Max-3SAT(29)} to \prb{Max-E3SAT(5)},
which is a special case of \prb{Max-3SAT} where
an input 3-CNF formula contains $n$ variables and $5n/3$ clauses for some positive integer divisible by $3$,
each clause contains exactly $3$ literals, and
each variable appears in exactly $5$ clauses but
never appears twice in the same clause.

\begin{theorem}[\protect{\citealp[Proposition 2.1.2]{feige1998threshold}}; \protect{\citealp[Theorem 4.11]{tamaki2015parallel}}]
\label{thm:maxe3sat5}
There is a polynomial-time gap-preserving reduction that
transforms
an instance $\phi$ of \prb{Max-3SAT(29)} to 
an instance $\psi$ of \prb{Max-E3SAT(5)} such that
the following is satisfied:
\begin{itemize}
    \item (Completeness) If $\phi$ is satisfiable, then so is $\psi$.
    \item (Soundness) If no truth assignment satisfies at least $(1-\epsilon)$-fraction of the clauses of $\phi$, then
    no truth assignment satisfies at least $(1-\frac{\epsilon}{200})$-fraction of the clauses of $\psi$.
\end{itemize}
\end{theorem}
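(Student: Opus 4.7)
The plan is to apply the classical \emph{expander-based replacement} technique that underlies \cite{feige1998threshold}. Given an instance $\phi$ of \prb{Max-3SAT(29)} with $n$ variables and $m$ clauses, I would first replace each variable $x$ that appears $d(x) \leq 29$ times in $\phi$ by $d(x)$ fresh copies $x_1,\ldots,x_{d(x)}$, substituting the $i$-th occurrence of $x$ by $x_i$; this breaks the original clauses into 3-literal clauses over the enlarged variable set without changing satisfiability. To enforce that all copies of $x$ take the same truth value in any ``good'' assignment, I would introduce equality constraints using the edges of a constant-degree expander graph $H_x$ on the vertex set $\{x_1,\ldots,x_{d(x)}\}$: each edge $\{x_i,x_j\}$ contributes the two clauses $(x_i\vee\neg x_j)\wedge(\neg x_i\vee x_j)$, padded to three literals using an auxiliary ``repeated'' literal so that each is a legal \prb{E3SAT} clause. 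The degree of $H_x$, the pad size, and the occurrence count of each copy are tuned so that every variable of the final instance $\psi$ occurs in exactly $5$ clauses and every clause has exactly $3$ distinct literals; the total clause count $\frac{5n'}{3}$ then follows from double counting occurrences.

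Completeness is immediate: a satisfying assignment of $\phi$ extends to a satisfying assignment of $\psi$ by setting each copy $x_i$ to the value of $x$ (which satisfies both the original fragments and all equality clauses) and choosing the padding variables so that every padded clause is satisfied.

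For soundness, suppose an assignment $\tau$ of $\psi$ leaves at most an $\frac{\epsilon}{200}$-fraction of clauses of $\psi$ unsatisfied. I would decode an assignment $\sigma$ of $\phi$ by \emph{majority vote}: set $\sigma(x)$ to the value most frequently taken by $\{\tau(x_i)\}_{i\in[d(x)]}$. The number of ``minority'' copies (i.e., copies $x_i$ with $\tau(x_i)\neq \sigma(x)$) is bounded, via the edge expansion of $H_x$, by $O(\text{number of equality clauses of }\psi\text{ unsatisfied by }\tau)$: each minority vertex loses a constant fraction of its expander edges to the majority side, and these lost edges correspond to violated equality clauses. Summing over all $x$, the total number of minority copies is $O(\mathrm{unsat}(\tau,\psi))$. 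An original clause of $\phi$ becomes unsatisfied under $\sigma$ only if either (a) the corresponding clause of $\psi$ was unsatisfied under $\tau$, or (b) at least one of its literals corresponds to a minority copy, in which case $\sigma$ and $\tau$ disagree on that literal; in either case the defect can be charged to a violated clause of $\psi$, and the final count shows that at most an $\epsilon$-fraction of clauses of $\phi$ are unsatisfied under $\sigma$.

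The main obstacle is the quantitative soundness analysis: the expander $H_x$ must be chosen with a degree and edge expansion that simultaneously (i) are compatible with the exact occurrence count $5$ after the padding, (ii) do not blow up the clause count of $\psi$ by more than a constant factor relative to that of $\phi$, and (iii) yield a clean \emph{linear} loss in $\epsilon$ with the specific constant $\frac{1}{200}$. Tracking the expansion constant of $H_x$ against the minority-set bound, together with the observation that each minority copy $x_i$ can spoil at most $29$ original clauses (since each variable of $\phi$ appears in at most $29$ clauses), is the delicate bookkeeping that determines the final factor $\frac{1}{200}$; any weakening of the expander parameters would worsen this constant but not the qualitative structure of the reduction.
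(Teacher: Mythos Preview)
The paper does not give its own proof of this statement: \cref{thm:maxe3sat5} is quoted verbatim from \citet[Proposition~2.1.2]{feige1998threshold} and \citet[Theorem~4.11]{tamaki2015parallel} and used as a black box in the chain of reductions in \cref{app:inapprox-labelcover}. There is therefore nothing in the paper to compare your argument against.

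That said, your outline is the standard expander-replacement argument underlying the cited references, and the high-level structure (copy variables, glue copies with expander-based equality clauses, decode by majority, charge minority copies to violated equality clauses via edge expansion) is correct. The only place where your sketch is loose is exactly where you flag it: the simultaneous enforcement of ``exactly three distinct literals per clause'' and ``exactly five occurrences per variable'' requires a concrete choice of expander degree and padding gadget, and the constant $\tfrac{1}{200}$ is an artifact of those choices rather than something that falls out automatically. If you want a self-contained derivation of that specific constant, you would need to reproduce the parameter accounting from \citet{tamaki2015parallel}; the paper itself simply imports the number.
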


We further reduce from \prb{Max-E3SAT(5)} to
 \prb{LabelCover} (i.e., a projection game) as follows.

\begin{theorem}[\protect{\citealp[Proof of Theorem 4.2]{tamaki2015parallel}}]
\label{thm:labelcover}
There is a polynomial-time reduction that
transforms an instance $\phi$ of \prb{Max-E3SAT(5)} with $n$ variables and $5n/3$ clauses 
to a projection game $\GG = (X,Y,E,\Sigma,\Pi)$
such that the following is satisfied:
\begin{itemize}
    \item $|X|=n, |Y|=5n/3, |E|=5n, |\Sigma|=7$, and
    each vertex of $X$ and $Y$ has exactly degree $5$ and $3$, respectively.
    \item (Completeness) If $\phi$ is satisfiable, then $\val(\GG)=1$.
    \item (Soundness) If no truth assignment satisfies at least $(1-\epsilon)$-fraction of
    the clauses of $\phi$,
    then $\val(\GG) \leq 1-\frac{\epsilon}{3}$.
\end{itemize}
\end{theorem}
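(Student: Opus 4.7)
The plan is to give an explicit reduction that turns each instance $\phi$ of \prb{Max-E3SAT(5)} into the natural variable--clause incidence bipartite graph, equipped with ``projection constraints'' that check consistency between a labeling of a clause (its locally satisfying assignment) and a labeling of one of its variables (its truth value). Concretely, I would let $X$ be the set of the $n$ variables of $\phi$ and $Y$ the set of its $5n/3$ clauses, with an edge $e=(x,y) \in E$ whenever variable $x$ appears in clause $y$. Because \prb{Max-E3SAT(5)} guarantees that every variable appears in exactly $5$ clauses and every clause has exactly $3$ distinct variables, each $x \in X$ has degree $5$ and each $y \in Y$ has degree $3$, and $|E|=5n$ as required.

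For the alphabet, I would take $\Sigma$ to be a set of size $7$ identified, for each $y \in Y$, with the $2^3-1 = 7$ assignments to the three variables of $y$ that satisfy $y$ (the single falsifying assignment is excluded). The two truth values $\{0,1\}$ for variables are embedded into $\Sigma$ (padding with arbitrary dummy symbols; unused labels are never projected onto and do not help the prover). For each edge $e=(x,y)$, I would define $\pi_e : \Sigma \to \Sigma$ so that on the $7$ ``satisfying-assignment'' labels it outputs the value that assignment gives to $x$, and on dummy labels it outputs an arbitrary fixed value.

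For completeness, given a truth assignment $\tau$ satisfying all clauses of $\phi$, I would take $\sigma(x) \triangleq \tau(x)$ and $\sigma(y) \triangleq \tau\!\restriction_{\text{vars}(y)}$, which is a satisfying assignment of $y$ and hence a valid label. For every edge $e=(x,y)$, $\pi_e(\sigma(y))$ projects $\tau\!\restriction_{\text{vars}(y)}$ onto $x$, which equals $\tau(x) = \sigma(x)$, so every edge is satisfied and $\val(\GG)=1$.

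For soundness I would argue the contrapositive. Given any labeling $\sigma$, define $\tau(x) \triangleq \sigma(x)$ (interpreted as a truth value, choosing arbitrarily if $\sigma(x)$ is a dummy). For a clause $y$, if all three edges incident to $y$ are satisfied by $\sigma$, then $\sigma(y)$ is a locally satisfying assignment that agrees with $\tau$ on $\text{vars}(y)$, so $\tau$ satisfies $y$. Therefore the number of clauses not satisfied by $\tau$ is at most the number of $y$'s incident to some unsatisfied edge, which is at most the total number of unsatisfied edges. If $\val(\GG) > 1-\epsilon/3$, then fewer than $\frac{\epsilon}{3}\cdot 5n$ edges are unsatisfied, so fewer than $\frac{\epsilon}{3}\cdot 5n = \epsilon \cdot \frac{5n}{3}$ clauses fail; i.e., $\tau$ satisfies more than $(1-\epsilon)$-fraction of the clauses. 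The main subtlety—really the only nonroutine point—is handling the two-sided alphabet cleanly: making sure the variable side uses the same $\Sigma$ of size $7$ without inadvertently giving the prover extra power, which the padding-plus-arbitrary-projection choice above resolves.
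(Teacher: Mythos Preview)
The paper does not supply a proof of this theorem; it is quoted verbatim as a known result from \citet{tamaki2015parallel} and is only used as a black box in the proof of \cref{thm:inapprox-labelcover}. So there is no ``paper's own proof'' to compare against, and your write-up is effectively the standard argument that any such citation is pointing to.

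Your reduction is the canonical clause--variable construction and the completeness and soundness arguments are correct in substance. One orientation issue deserves cleaning up, however. In the paper's definition an edge $e=(x,y)$ is satisfied when $\pi_e(\sigma(x))=\sigma(y)$, i.e., the projection is applied to the $X$-side label. You take $X=\text{variables}$ and $Y=\text{clauses}$ (which matches the stated cardinalities and degrees), but then you define $\pi_e$ as the map ``satisfying assignment of $y$ $\mapsto$ value at $x$'' and check $\pi_e(\sigma(y))=\sigma(x)$ in your completeness paragraph. That is the reverse of the paper's convention. With the paper's convention, the natural assignment is $X=\text{clauses}$ (alphabet: the $7$ locally satisfying assignments) and $Y=\text{variables}$ (alphabet: $\{0,1\}$ padded to $7$), so that $\pi_e$ projects a clause's local assignment to the value it gives the incident variable; your completeness and soundness arguments then go through verbatim. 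This swap, of course, would give $|X|=5n/3$, $|Y|=n$, and degrees $3$ and $5$---the mirror of what the theorem states---so the discrepancy is really a labeling convention in the quoted statement rather than a mathematical gap in your argument. You should flag this explicitly (either adopt the paper's $\pi_e(\sigma(x))=\sigma(y)$ convention and set $X=\text{clauses}$, noting the swap, or state up front that you are using the dual orientation), rather than silently applying $\pi_e$ to $\sigma(y)$.
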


We finally reduce from the above-mentioned projection game to a special projection game (see \cref{subsec:inapprox-game}) as follows.
\begin{lemma}
\label{lem:special-game}
Let $\GG' = (X',Y',E',\Sigma',\Pi' = \{\pi'_e\}_{e \in E'})$ be
a projection game
satisfying the conditions in \cref{thm:labelcover},
i.e., $|X'|=n$, $|Y'|=5n/3$, $|E'|=5n$, and $|\Sigma'|=7$ for some positive integer $n$ divisible by $3$.
There is a polynomial-time gap-preserving reduction that transforms
$\GG'$ to a new projection game
$ \GG = (X,Y,E,\Sigma,\Pi = \{\pi_e\}_{e \in E}) $ such that the following is satisfied:
\begin{itemize}
    \item $(X,Y,E)$ is a $15$-regular bipartite graph
    (i.e., each vertex of $X \uplus Y$ is incident to exactly $15$ edges),
    where $|X| = |Y| = 5n$ and $|E|=75n$,
    \item $ |\Sigma| = 7 $, and
    \item $\val(\GG) = \val(\GG')$.
\end{itemize}
\end{lemma}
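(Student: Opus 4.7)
The plan is to build $\GG$ from $\GG'$ by a standard vertex-duplication construction that makes both sides of the bipartite graph $15$-regular while preserving the constraint structure, and then to verify that the value is unchanged by a simple averaging argument in each direction.

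More concretely, first I would specify the construction. Since each $x' \in X'$ has degree $5$ and each $y' \in Y'$ has degree $3$, I will make $5$ copies $x'_1,\ldots,x'_5$ of every $x' \in X'$ and $3$ copies $y'_1,y'_2,y'_3$ of every $y' \in Y'$, giving vertex sets $X$ and $Y$ of sizes $5|X'| = 5n$ and $3|Y'| = 5n$ respectively. For each original edge $e' = (x',y') \in E'$ with constraint $\pi'_{e'}$, I place all $5 \times 3 = 15$ ``copy edges'' $(x'_a, y'_b)$ into $E$ and give each of them the same projection $\pi'_{e'}$. The alphabet is unchanged, $\Sigma = \Sigma'$ (so $|\Sigma|=7$). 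A direct count gives $|E| = 15 |E'| = 75n$, and every copy $x'_a$ is incident to exactly $5 \cdot 3 = 15$ edges (its original $5$ neighbors in $Y'$, times the $3$ copies on the $Y$-side), and symmetrically every copy $y'_b$ is incident to $3 \cdot 5 = 15$ edges; so $(X,Y,E)$ is $15$-regular with the required parameters.

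Next I would prove $\val(\GG)=\val(\GG')$ in two steps. For $\val(\GG)\geq \val(\GG')$, take an optimal labeling $\sigma':(X'\uplus Y')\to\Sigma$ of $\GG'$ and lift it to $\GG$ by defining $\sigma(x'_a)\triangleq\sigma'(x')$ and $\sigma(y'_b)\triangleq\sigma'(y')$. A copy edge $(x'_a,y'_b)$ is satisfied by $\sigma$ iff the original edge $(x',y')$ is satisfied by $\sigma'$, so exactly $15$ copy edges are satisfied per satisfied original edge, and the satisfied fraction is preserved. For $\val(\GG')\geq \val(\GG)$, take an arbitrary labeling $\sigma$ of $\GG$ and randomize: for each $x'\in X'$ pick $a\in [5]$ uniformly and set $\sigma'(x')\triangleq\sigma(x'_a)$, and analogously for each $y'\in Y'$ pick $b\in[3]$ uniformly and set $\sigma'(y')\triangleq\sigma(y'_b)$, all choices independent. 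Let $N(x',y')$ denote the number of satisfied copy edges above $(x',y')$ under $\sigma$; then
\begin{align*}
\mathbb{E}\Biggl[\,\sum_{(x',y')\in E'} \bigl[\!\!\bigl[\pi'_{(x',y')}(\sigma'(x')) = \sigma'(y')\bigr]\!\!\bigr]\Biggr]
 = \sum_{(x',y')\in E'} \frac{N(x',y')}{15} = \frac{1}{15}\sum_{(x',y')\in E'} N(x',y'),
\end{align*}
and dividing by $|E'|=5n$ gives exactly the fraction of edges satisfied by $\sigma$ in $\GG$ (recall $|E|=15|E'|$). Hence some deterministic $\sigma'$ satisfies at least that fraction of edges in $\GG'$, yielding $\val(\GG')\geq\val(\GG)$.

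This is essentially a routine bookkeeping construction with no genuine obstacle; the only thing to be careful about is choosing the copy multiplicities so that the new degrees balance (which is why the multiplicities $5$ and $3$ swap across the two sides and produce common degree $15$), and verifying that the randomized decoding step gives an \emph{exact} equality of fractions rather than merely a bound, so that the value is preserved rather than only approximately preserved. Everything else (polynomial-time implementability, preservation of the projection property, and the sizes $|X|=|Y|=5n$, $|E|=75n$, $|\Sigma|=7$) is immediate from the construction.
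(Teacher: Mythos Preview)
Your construction is exactly the one in the paper: make five copies of each $x'\in X'$ and three copies of each $y'\in Y'$, and for every original edge $(x',y')$ insert all $5\times 3$ copy edges carrying the same constraint $\pi'_{(x',y')}$. The paper simply asserts that $\val(\GG)=\val(\GG')$ is ``easily verified,'' whereas you spell out the two directions (lifting and randomized decoding); your argument is correct and a bit more explicit than the paper's, but the approach is identical.
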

\begin{proof}
We construct $\GG$ as follows.
We first create
$X \triangleq \bigcup_{x \in X'} \{x^{(1)}, x^{(2)}, x^{(3)}, x^{(4)}, x^{(5)} \}$ and 
$Y \triangleq \bigcup_{y \in Y'} \{ y^{(1)}, y^{(2)}, y^{(3)} \}$,
where each $x^{(i)}$ for $i \in [5]$ and 
$y^{(j)}$ for $j \in [3]$ is a copy of $x \in X$ and $y \in Y$,
respectively.
We then build an edge set as
$E \triangleq \bigcup_{i \in [5], j \in [3]} \{ (x^{(i)}, y^{(j)}) \in X \times Y \mid (x, y) \in E' \}$, and
$ \pi_{e} \triangleq \pi'_{(x,y)} $ for each $e = (x^{(i)}, y^{(j)}) \in E$.
Then, $\val(\GG)$ is easily verified to be $\val(\GG')$.
\end{proof}

\begin{proof}[Proof of \cref{thm:inapprox-labelcover}]
By \cref{thm:max3sat,thm:max3sat29,thm:maxe3sat5,thm:labelcover,lem:special-game},
we can reduce a 3-CNF Boolean formula $\phi$ to a projection game $\GG$ such that the following is satisfied:
\begin{itemize}
    \item $\GG$ is special (i.e., $\GG$ satisfies the conditions in \cref{thm:inapprox-labelcover}).
    \item If $\phi$ is satisfiable, then $\val(\GG) = 1$.
    \item If no truth assignment satisfy at most $(\frac{7}{8}+\epsilon)$-fraction of the clauses of $\phi$, then
    $\val(\GG) \leq 1-\frac{1}{206{,}400}+\frac{\epsilon}{25{,}800} $,
    for any $\epsilon > 0$.
\end{itemize}
We can specify $\epsilon > 0$ so that
$1-\frac{1}{206{,}400} + \frac{\epsilon}{25{,}800} < 1-\frac{1}{206{,}401}$,
which completes the proof.
\end{proof}

\section{Missing Proofs in \cref{sec:subdetmax} and \cref{sec:log}}\label{app:proofs}

\begin{proof}[Proof of \cref{lem:civril-vol}]
Fix $\mat{S} \subseteq \mat{V}$.
Let $\mat{Q}$ be a set of $|X(\mat{S})|$ vectors from $\mat{S}_X$
with no repetitions, i.e.,
$\mat{Q} \subseteq \mat{S}_X $ such that
$|\mat{Q}| = |X(\mat{S})|$ and
$\rep(\mat{Q}) = 0$, and let $\mat{P} = \mat{S}_X \setminus \mat{Q}$.
For each vector $\vec{v}_{x,i}$ in $\mat{P}$,
there is exactly one vector $\vec{v}_{x,i'}$ in $\mat{Q}$ with $i \neq i'$; hence, we have
\begin{align*}
\dis(\vec{v}_{x,i}, \mat{Q}) \leq \dis(\vec{v}_{x,i}, \{\vec{v}_{x,i'}\}) = \| \vec{v}_{x,i} - \langle \vec{v}_{x,i}, \vec{v}_{x,i'} \rangle \vec{v}_{x,i'} \| \leq \frac{\sqrt{3}}{2}.
\end{align*}
By \cref{lem:civril-union}, we have
\begin{align*}
    \vol(\mat{S}_X) \leq \vol(\mat{Q}) \cdot \prod_{\vec{v}_{x,i} \in \mat{P}} \dis(\vec{v}_{x,i}, \mat{Q}) \leq \left(\frac{\sqrt{3}}{2}\right)^{|\mat{P}|} = \left(\frac{\sqrt{3}}{2}\right)^{\rep(\mat{S}_X)}.
\end{align*}
The proof for $\vol(\mat{S}_Y)$ is similar.
\end{proof}

\begin{proof}[Proof of \cref{lem:log:civril-vol}]
Fix $\mat{S} \subseteq \mat{V}$.
Let $\mat{Q}$ be a set of $|X(\mat{S})|$ vectors from $\mat{S}_{X}$ with no repetitions, i.e.,
$\mat{Q} \subseteq \mat{S}_X$ such that $|\mat{Q}| = |X(\mat{S})|$ and $\rep(\mat{Q})=0$, and
let $\mat{P} = \mat{S}_X \setminus \mat{Q}$.
For each vector $\vec{v}_{x,i}$ in $\mat{P}$, there is exactly one vector $\vec{v}_{x,i'}$ in $\mat{Q}$ with $i \neq i'$; hence, we have
\begin{align*}
    \dis(\vec{v}_{x,i}, \mat{Q}) & \leq \dis(\vec{v}_{x,i}, \{\vec{v}_{x,i'}\})
    = \| \vec{v}_{x,i} - \langle \vec{v}_{x,i}, \vec{v}_{x,i'} \rangle \vec{v}_{x,i'} \| \\
    & = \sqrt{1-\langle \vec{v}_{x,i}, \vec{v}_{x,i'} \rangle^2}
    \leq \sqrt{1- \left(\frac{15^\ell}{15^\ell+1} \cdot \frac{1}{2}\right)^2 } \\
    & < \frac{\sqrt{3.01}}{2}.
\end{align*}
By \cref{lem:civril-union}, we have
\begin{align*}
    \vol(\mat{S}_X) \leq \vol(\mat{Q}) \cdot \prod_{\vec{v}_{x,i} \in \mat{P}} \dis(\vec{v}_{x,i}, \mat{Q}) \leq \left(\frac{\sqrt{3.01}}{2}\right)^{|\mat{P}|} = \left(\frac{\sqrt{3.01}}{2}\right)^{\rep(\mat{S}_X)}.
\end{align*}
The proof for $\vol(\mat{S}_Y)$ is similar.
\end{proof}

\begin{proof}[Proof of \cref{clm:log:S-bound}]
Observe first that $\vol(\mat{S}_X) \geq \vol(\mat{S}) \geq 2^{-cK}$.
By \cref{lem:log:civril-vol}, we have
$
    (\sqrt{3.01}/2)^{\rep(\mat{S}_X)}
    \geq \vol(\mat{S}_X) \geq 2^{-cK},
$
implying $ \rep(\mat{S}_X) \leq cK / \log_2(2/\sqrt{3.01}) < 5cK $. Similarly,
we have $\rep(\mat{S}_Y) < 5cK$.
Using the fact that
$|\mat{S}_X| = |X(\mat{S})| + \rep(\mat{S}_X)$,
$|\mat{S}_Y| = |Y(\mat{S})| + \rep(\mat{S}_Y)$, and
$K = |\mat{S}_X| + |\mat{S}_Y|$,
we bound $|X(\mat{S})|$ from below as follows:
\begin{align*}
    |X(\mat{S})| & = K - |\mat{S}_Y| - \rep(\mat{S}_X)  > K - |Y| - 10cK \\
    & = \left(1 - \frac{(5n)^\ell}{K} - 10c \right)K = \left(\frac{1}{2} - 10c\right)K.
\end{align*}
Similarly, we have
$|Y(\mat{S})| > (\frac{1}{2} - 10c)K$.
\end{proof}

\begin{proof}[Proof of \cref{lem:log:soundness-2}]
The proof is by contradiction.
Suppose there exists a set $\mat{S}$ of $K$ vectors from $V$ such that $\vol(\mat{S}) \geq 2^{-\beta^{\circ} K} $.

Consider a labeling $\sigma : (X \uplus Y) \to \Sigma$ defined according to \cref{eq:proof-label} in the proof of \cref{lem:soundness-2}.
Define $\mat{P} \triangleq \{ \vec{v}_{x,\sigma(x)} \mid x \in X(\mat{S}) \}$ and 
$\mat{Q} \triangleq \{ \vec{v}_{y,\sigma(y)} \mid y \in Y(\mat{S}) \}$.
Here, to use \cref{lem:civril-union},
we bound the distance of the vectors of $\mat{P}$ to $\mat{Q}$.
Similar to the proof of \cref{lem:soundness-2},
we have for each $\vec{v}_{x,\sigma(x)} \in \mat{P}$,
\begin{align*}
    \dis(\vec{v}_{x,\sigma(x)}, \mat{Q}) = \sqrt{1 - \sum_{\vec{v}_{y,\sigma(y)} \in \mat{Q}} \langle \vec{v}_{x,\sigma(x)}, \vec{v}_{y,\sigma(y)} \rangle^2}
    = \left(1 - \frac{U(x)}{4 \cdot (15^{\ell}+1)^2} \right)^{\frac{1}{2}},
\end{align*}
where $U(x)$ is defined as the number of unsatisfied edges between $x$ and $Y(\mat{S})$.
Using \cref{lem:civril-union} and the fact that 
$\vol(\mat{Q}) \leq 1$, we have
\begin{align}
    \vol(\mat{P} \uplus \mat{Q})
    & \leq \vol(\mat{Q}) \cdot \prod_{x \in X(\mat{S})} \dis(\vec{v}_{x,\sigma(x)}, \mat{Q}) \notag \\
    & \leq \left(\prod_{x \in X(\mat{S})}
    \left(1-\frac{U(x)}{4 \cdot (15^{\ell}+1)^2} \right) \right)^{\frac{1}{2}} \notag \\
    & \leq 
    \left(\frac{1}{|X(\mat{S})|} \sum_{x \in X(\mat{S})} \left(1-\frac{U(x)}{4 \cdot (15^{\ell}+1)^2}\right) \right)^{\frac{|X(\mat{S})|}{2}}, \label{eq:log:vol-unsat}
\end{align}
where the last inequality is by the AM--GM inequality.

Now consider bounding $\sum_{x \in X(\mat{S})} U(x)$ from below, which is equal to
the total number of unsatisfied edges between $X(\mat{S})$ and $Y(\mat{S})$ by $\sigma$.
Substituting $\beta^{\circ}$ for $c$ in \cref{clm:log:S-bound} derives
$|X(\mat{S})| > (\frac{1}{2}-10\beta^{\circ})K$ and
$|Y(\mat{S})| > (\frac{1}{2}-10\beta^{\circ})K$.
Because less than $2^{-\alpha \ell}$-fraction of edges in $E$ can be satisfied by any labeling (including $\sigma$) by assumption, and
more than $(\frac{1}{2}-10\beta^{\circ})K$ edges are incident to $X(\mat{S})$ (resp.~$Y(\mat{S})$),
the number of unsatisfied edges \emph{incident to} $X(\mat{S})$ (resp.~$Y(\mat{S})$) is at least
\begin{align}
\label{eq:log:unsat-incident}
    \left[ \left(\frac{1}{2}-10\beta^{\circ}\right)\frac{K}{(5n)^{\ell}} - 2^{-\alpha \ell} \right] (75n)^{\ell}.
\end{align}
Consequently, the number of unsatisfied edges \emph{between} $X(\mat{S})$ and $Y(\mat{S})$ is at least twice \cref{eq:log:unsat-incident} minus
``the number of unsatisfied edges incident to $X(\mat{S})$ \emph{or} $Y(\mat{S})$'' (which is at most $(75n)^{\ell}$); namely, we have
\begin{align*}
    \sum_{x \in X(\mat{S})} U(x) \geq \left[\left(\frac{1}{4}-10\beta^{\circ}\right)\frac{2K}{(5n)^{\ell}} - 2^{-\alpha \ell+1} \right](75n)^{\ell}.
\end{align*}
We further expand \cref{eq:log:vol-unsat} as follows:
\begin{align*}
    \vol(\mat{P} \uplus \mat{Q}) & \leq \left(1 - \frac{\sum_{x \in X(\mat{S})}U(x)}{|X(\mat{S})|}\frac{1}{4 \cdot (15^\ell + 1)^2} \right)^{\frac{|X(\mat{S})|}{2}} \\
    & \leq \exp\left( -\frac{
    \left[ (\frac{1}{4}-10\beta^{\circ}) \frac{2K}{(5n)^{\ell}} - 2^{-\alpha \ell+1} \right](75n)^{\ell}
    }{|X(\mat{S})|}
    \frac{|X(\mat{S})|}{8 \cdot (15^\ell+1)^2} \right) \\
    & \leq \exp \left( -\left[ \left( \frac{1}{4}-10\beta^{\circ} \right)\frac{15^\ell}{4 \cdot (15^\ell+1)^2} - 2^{-\alpha \ell+1}\frac{15^\ell}{16 \cdot (15^\ell+1)^2} \right] K \right) \\
    & = \exp\left(- \frac{1-40\beta^{\circ} -2^{-\alpha \ell+1}}{16 \cdot (15^\ell+1)^2} 15^\ell \cdot K\right) \\
    & \leq \exp\left(- \frac{1-40\beta^{\circ} - 2^{-\alpha \ell +1}}{32 \cdot 15^\ell} K \right).
\end{align*}
Since $\beta^{\circ} = 10^{-10^{12.4}} < \frac{1-2^{-\alpha \ell+1}}{32 \cdot 15^\ell \cdot \log_\rme(2) + 40} $ and $\beta^{\circ} > 0$,
we finally have $\vol(\mat{S}) \leq \vol(\mat{P} \uplus \mat{Q}) < 2^{-\beta^{\circ} K}$, a contradiction.
\end{proof}

\bibliographystyle{abbrvnat}
\bibliography{ref}

\end{document}